\newtheorem{theorem}{Theorem}[section]
\newtheorem{remark}{Remark}[section]
\newtheorem{lemma}{Lemma}[section]
\newtheorem{assumption}{Assumption}[section]
\numberwithin{theorem}{section}
\numberwithin{equation}{section}
\numberwithin{table}{section}
\numberwithin{figure}{section}
\newcommand\T{\top}
\newcommand\eps{\epsilon}
\newcommand{\cN}{{\mathcal N}}
\newcommand{\cP}{{\mathcal P}}
\newcommand{\cQ}{{\mathcal Q}}
\newcommand{\cM}{{\mathcal M}}
\newcommand\fa{{\frak a}}
\newcommand\fb{{\frak b}}
\newcommand\fc{{\frak c}}
\newcommand\fd{{\frak d}}
\def\lra{\longrightarrow}
\def\beq{\begin{equation}}
\def\eeq{\end{equation}}
\def\bals{\begin{align*}}
\def\eals{\end{align*}}
\def\bal{\begin{align}}
\def\eal{\end{align}}
\newcommand{\brho}{\boldsymbol \rho}
\newcommand{\bet}{\boldsymbol \eta}
\newcommand\bbe{\mbox{\boldmath${ \beta}$}}
\newcommand\bde{\mbox{\boldmath${ \delta}$}}
\newcommand\bA{{\bf A}}
\newcommand\bD{{\bf D}}
\newcommand\bE{{\bf E}}
\newcommand\bN{{\bf N}}
\newcommand\bX{{\bf X}}
\newcommand\bY{{\bf Y}}
\newcommand{\ba}{{\bf a}}
\newcommand{\bc}{{\bf c}}
\newcommand\bg{{\bf g}}
\newcommand{\bw}{{\bf w}}
\newcommand\bx{{\bf x}}
\newcommand{\bz}{{\bf z}}
\begin{document}

\title[Sequential monitoring]{Sequential monitoring of changes in housing prices}

\author {Lajos Horv\'ath}

\address{Lajos Horv\'ath, Department of Mathematics, University of Utah, Salt Lake City, UT 84112--0090 USA}

\author{Zhenya Liu}
\address{Zhenya Liu, China Financial Policy Research Center, School of Finance, Renmin University of China, Beijing, 100872 China and
CERGAM, Aix--Marseille University, 13090 Aix--en--Provence Cedex 02, France }
\author{Shanglin Lu}
\address{Shanglin Lu, School of Finance, Renmin University of China, Beijing, 100872 China }

\keywords{sequential change point detection, weak dependence, linear model, autoregressive model, real estate market\\
\hspace*{.5cm}{\it JEL classification.} C32, C58, R30}

\begin{abstract}
We propose a sequential monitoring scheme to find structural breaks in real estate markets. The changes in the real estate prices are modeled by
a combination of linear and autoregressive terms. The monitoring scheme is based on a detector and a suitably chosen boundary function. If the detector crosses the boundary function, a structural break is detected. We provide the asymptotics for the procedure under the stability null hypothesis and the stopping time under the change point alternative. Monte Carlo simulation is used  to show the size and the power of our method under several conditions.
We study the real estate markets in Boston, Los Angeles and at the national U.S.\ level.  We find structural breaks in the markets, and we segment the data into stationary segments. It is  observed that the autoregressive parameter is increasing but stays below 1.

\end{abstract}

\maketitle

\section{Introduction}\label{intro}
Housing has been the most substantial investment or cost for a large portion of the households so modeling changes in housing prices has received a considerable amount of attention in the literature. Following Case and Shiller (1989, 2003),  Piazzesi and Schneider (2009) and Zheng et al.\ (2016) we write the change in the $\log$ of the housing prices as a linear combination of macroeconomic fundamentals  and we also include a first--order autoregressive term of  the change in the $\log$ housing prices. One of the fundamental  questions is if the model stayed stable during the observation or it is segmented into several periods including stationary and nonstationary epochs. Himmelberg et al.\ (2005), Mayer (2011), Granziera and Kozicki (2015), Burnside et al.\ (2016) and Glaeser and Nathanson (2017) were interested in possible ``bubbles" in housing prices, i.e.\ a short explosive segment in the data.

The historical prices of the U.S.\ real estate markets have gone through several periods of booms, like the California housing boom of the 1880s, the Florida land boom of the 1920s and the peaks in the national real estate market in the 1980s and 2000s. Providing a suitable model for the dynamics of the U.S.\ housing market  has been an important theoretical question in the literature. The national wide booms of the 1980s and the 2000s show common as well as different features. Both booms started on the east coast and spread to the west. According to the S\&P CoreLogic Case--Shiller  Home Price Indices, Boston housing price increases peaked in early 2005, while Los Angeles real estate  price increases reached their maximum in 2006, as the national level price. On the other hand, while the 1980s boom can be explained by a general economic expansion, the source of the housing price increase in the 2000s is different. It has been explained by the ``amplification mechanism" of positive expectation of future housing price appreciation. Home buyers started to see real estate as an investment instrument. We refer to Case and Shiller (2003) and Shiller (2008) for more detailed reviews of the U.S.\ real estate market peaks. Our data example provides a sequential monitoring framework to see how this ``amplification mechanism" evolves in the 2000s.

In this paper we develop and study  a sequential monitoring scheme to detect changes in the parameters of a model which contains linear as well as autoregressive terms. The assumptions on the regressors and the errors are mild, and they are satisfied by nearly all linear as well as nonlinear time series
processes. Roughly speaking, they are well approximated with finitely dependent sequences. Under the null hypothesis the model describing the price changes is stable, i.e. it is a stationary process.
 Following Chu et al.\ (1996), the proposed monitoring is based on a detector and a boundary function. When the detector reaches the boundary function, a change is detected. The detector is based on the sum of residuals, but only the training sample is used to estimate some unknown parameters. The boundary function is chosen such that the probability of a false detection under the stability of the parameters null hypothesis is fixed. We also provide results for the consistency of the monitoring under various types of changes in the original model. In the sequential setup consistency means that we stop in finite time with probability one if a change occurred. We also provide several results on the distribution of the stopping time under the alternative. The limits can be normal or not normal depending on the type of the change and the size of the change. We focus on the autoregressive  parameter and after the change we can have a new stationary regime, random walk or explosive autoregressive process.

 The paper is organized as follows: in Section \ref{math} we formulate our model and the detection scheme. We also detail the conditions which are needed in the paper and obtain the limit distribution of the monitoring under the null hypothesis. Section \ref{altstop} contains the distributions of the stopping time introduced in Section \ref{math} under three types of  alternatives.   Detailed  proofs are given in Appendices  \ref{sec-pr-1} and \ref{sec-pr-2}. We study the empirical size and power of the sequential scheme in Section \ref{sec-mc}. Section \ref{sec-ch} provides in illustration for our method using data on three U.S.\ real estate markets. The conclusion of our research is in Section \ref{concl}.

\section{Mathematical model to sequentially detect  changes in real estate prices}\label{math}
In our model we assume that a training (historical) sample of size $M$ is available
\beq\label{e-1}
y_t=\bx_t^\T\bbe_0+\eps_t, \quad 1\leq t \leq M,
\eeq
where  $\bbe_0\in R^d$ and
\beq\label{xdef}
\bx_t=(x_{t,1}, x_{t,2}, \ldots , x_{t,d-1}, x_{t,d})^\T\in R^d\;\;\mbox{with}\;\;x_{t,1}=1\;\;\mbox{and}\;\;x_{t,d}=y_{t-1}.
\eeq
The model in \eqref{e-1} combines linear and autoregressive models. It is a linear model in the first $d-1$ coordinates and autoregressive in the last coordinate of $\bx_t$. After the training sample further observations are obtained, $y_{M+s}, s=1,2, \ldots$ and under the null hypothesis
\begin{align}\label{null}
H_0:\left\{
\begin{array}{ll}y_{M+s}=\bx_{M+s}^\T\bbe_0+\eps_{M+s}, \quad 1\leq s <\infty
\vspace{.1cm}\\
\mbox{and}
\vspace{.1cm}\\
\{\eps_t, 1\leq t <\infty\}\;\mbox{is a stationary sequence.}
\end{array}
\right.
\end{align}
This means that the structure of the observations $y_t$ is the same during the  training sample and the  observations collected after the training sample
obey the same model. Under the alternative the structure of the observations changes at an unknown time $M+s^*$:
\begin{align}\label{alt}
H_A:\left\{
\begin{array}{ll}y_{M+s}=\bx_{M+s}^\T\bbe_0+\eps_{M+s}, \quad 1\leq s \leq s^*,
\vspace{.1cm}\\
y_{M+s}=\bx_{M+s}^\T\bde_M+\eps_{M+s}, \quad s^*+1\leq s <\infty\quad \mbox{with} \quad
\bbe_0\neq\bde_M.
\end{array}
\right.
\end{align}
 The first  monitoring scheme to find changes in the regression parameter was introduced  by Chu et al.\ (1996) and it has become the starting point of substantial research.  Zeileis et al.\ (2005) and Aue et al.\ (2014) studied monitoring schemes in linear models with dependent errors. Kirch (2007, 2008) and Hu\v{s}kov\'a and Kirch (2012) provided resampling methods to find critical values for sequential monitoring.
Hl\'avka et al.\ (2012) investigated the sequential detection of changes   of the parameter  in autoregressive models, i.e.\ no regression terms are included in their theory. Homm and Breiting (2012) compared several methods to find bubbles in stock markets, detecting a change in an autoregressive process to an explosive one.  Horv\'ath et al.\ (2019+) showed that sequential methods will detect changes when the observations change from stationarity to mild  non--stationarity.  \\

The least square   estimator for $\bbe_0$ is given by
$$
\hat{\bbe}_M=({\bX}_M^\T{\bX}_M)^{-1}{\bX}_M^\T\bY_M,
$$
where $\bY_M=(y_1, y_2, \ldots, y_M)^\T$ and
\begin{displaymath}
{\bX}_{M}=\left(\begin{array}{ll}{\bx}_1^\T\\
{\bx}_2^\T\\
\hspace{.1cm}\vdots\\
{\bx}_M^\T
\end{array}
\right).
\end{displaymath}
Following Chu et al.\ (1996) we choose a detector $\Gamma(M,s)$, a boundary function $g(M,s)$, and define the stopping time
\begin{align*}
\tau_M=\left\{
\begin{array}{ll}
\inf\{s\geq 1: \Gamma(M,s)> g(M,s)\},
\vspace{.2cm}\\
\infty, \;\;\mbox{if}\; \Gamma(M,s)\leq g(M,s)\;\mbox{for all}\;s\geq 1.
\end{array}
\right.
\end{align*}

If $\tau_M<\infty$, we stop at time $\tau_M$ and we say that the null hypothesis is rejected. We choose the detector $\Gamma(M,s)$ and the boundary $g(M,s)$ such that
\beq\label{nual}
\lim_{M\to \infty}\left\{ \tau_M<\infty  \right\}=\alpha\;\;\;\mbox{under}\;\;H_0,
\eeq
where $0<\alpha<1$ is prescribed number and
\beq\label{nonual}
\lim_{M\to \infty}\left\{ \tau_M<\infty  \right\}=1\;\;\;\mbox{under}\;\;H_A.
\eeq
According to \eqref{nual}, the probability of   stopping  the procedure and rejecting $H_0$, when $H_0$, is $\alpha$.  We stop in finite time under the alternative. The definition of the detector follows Chu et al.\ (1996) and Horv\'ath et al. (2004).

The residuals of the model are defined as
\beq\label{resdef}
\hat{\eps}_t=y_t-{\bx}_t^\T\hat{\bbe}_M,\;\;1\leq t <\infty,
\eeq
i.e.\ in the definition of the residuals we also use $\hat{\bbe}_M$ even after the training period. The $\eps_t$'s  are stationary in the training sample under the null as well as under the alternative. Our detector is
$$
\displaystyle \Gamma(M,s)=\frac{1}{\hat{\sigma}_M}\left|\sum_{u=M+1}^{M+s} \hat{\eps}_u    \right|,\;\;1\leq s <\infty,
$$
where
$$
\hat{\sigma}_M^2=\frac{1}{M-d}\sum_{t=1}^M\hat{\eps}_t^2.
$$
We use the boundary function
\beq\label{bou-1}
g(M,s)=cM^{1/2}\left(1+\frac{s}{M}\right)\left( \frac{s}{M+s} \right)^\gamma,
\eeq
where $c=c(\gamma, \alpha)$ is chosen such that \eqref{nual} holds under the null hypothesis and
\beq\label{gadef}
0\leq \gamma<1/2.
\eeq
We discuss the choice of $\gamma$ in Section \ref{sec-mc}.  Following Brown et al.\ (1975), Horv\'ath et al.\ (2004) also used recursive residuals to define the detector in case of linear regression ($\beta_{0,d}=0$ under the null and the alternative).   Homm and Breiting (2012) applied fluctuation detectors when they wanted to test if a random walk changes to an explosive autoregression. They did not allow regression terms.

Next we discuss some conditions which will be needed to find $c=c(\gamma, \alpha)$ for our boundary function such that \eqref{nual} holds. Let
$$
\bz_t=(x_{t,2}, \ldots, x_{t,d-1}, \eps_t)^\T.
$$
The Euclidean norm of vectors and matrices is denoted by $\|\cdot\|$.

\begin{assumption}\label{bern}
$
\bz_t=\bg(\eta_t, \eta_{t-1}, \eta_{t-2}, \dots),
$
where
  $\bg(\cdots)$ is a nonrandom functional defined on ${\mathcal S}^{\infty}$ with values in  $R^{d-1}$ and ${\mathcal S}$ is a
 measurable space. Also, $\eta_t=\eta_t(s, \omega)$ is jointly measurable in $(s,\omega), -\infty<t<\infty$ and $\eta_t,\;-\infty<t<\infty$\; are independent and identically distributed random variables in ${\mathcal S}$.  The sequences  $\bz_t,
 -\infty<t<\infty$ can be approximated with $m$--dependent sequences $\bz_{t,m}$
 in the sense that with some $\kappa_1>4$, $\kappa_2>2$ and $c>0$, $E\|\bz_t\|^{\kappa_1}<\infty$,
 \beq\label{ber-2}
 \left(E\|\bz_t-\bz_{t,m}\|^{\kappa_1}\right)^{1/\kappa_1}\leq c m^{-\kappa_2}
 \eeq
 where
 $\bz_{t,m}=\bg(\eta_t, \eta_{t-1}, \eta_{t-2}, \ldots, \eta_{t-m+1}, \bet^*_{t,m}),$
  $\bet^*_{t,m}=(\eta^*_{t,m,t-m}, \eta^*_{t,m,t-m-1}, \eta^*_{t,m,t-m-2},\ldots)$ and the $\eta^*_{t,m,n}$'s are independent copies of $\eta_0$, independent
   of $\{\eta_t, -\infty<t<\infty\}.$
 \end{assumption}

 Assumption \ref{bern} appeared first in Ibragimov (1959, 1962)  in the proof of the central limit theorem for dependent variables. Billingsley (1968) also utilized $m$--decomposability. Nearly all time series, including linear and several nonlinear processes satisfy Assumption \ref{bern} (cf.\ H\"ormann and Kokoszka, 2010 and Aue et al.,\ 2014).

 \begin{assumption}\label{bernmean} $E\eps_t=0, 0<E\eps_t^2=\sigma^2<\infty$, $E\eps_t\eps_s=0, -\infty<t\neq s<\infty$ and $Ex_{s,\ell}\eps_t=0, \;2\leq \ell\leq d-1$ for all $-\infty<t,s   <\infty.  $
 \end{assumption}
 Assumption \ref{bernmean} means that $\{x_{t, \ell}, -\infty<t<\infty, 2\leq \ell\leq d-1\}$ and $\{\eps_t, -\infty<t<\infty\}$ are uncorrelated sequences.
 Clearly, if the $\eps_t$'s are  independent random variables, Assumption \ref{bernmean} holds but it is also satisfied by ARCH/GARCH type volatility sequences and orthogonal martingales.
We show in Lemma \ref{hatga} that
\beq\label{adef}
\frac{1}{M}{\bX}_M^\T{\bX}_M\stackrel{P}{\to}\bA.
\eeq
\begin{assumption}\label{a-adef} $\bA$ is nonsingular.
\end{assumption}

The asymptotic normality of $\hat{\bbe}_M$ has been established in case of independent and identically distributed $\eps_s$'s (cf.\ Zeckenhauser and Thompson,  1970).   These   results are extended by Wu (2007), Zhu (2013) and Caron (2019) to a large class of estimators for time series errors.

\begin{theorem}\label{th-null} If $H_0$ and Assumptions \ref{bern}--\ref{a-adef} hold, then we have that
\begin{align*}
\lim_{M\to\infty}P\left\{ \frac{\Gamma(M,s)}{g(M,s)}\leq 1\;\;\mbox{for all}\;\;s\geq 1     \right\}=P\left\{\sup_{0 < u \leq 1}\frac{|W(u)|}{u^\gamma}\leq c\right\},
\end{align*}
where $\{W(u), u\geq 0\}$ denotes a Wiener process (standard Brownian motion).
\end{theorem}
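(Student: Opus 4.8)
\emph{Proof outline.} The argument follows the CUSUM--of--residuals route of Chu et al.\ (1996) and Horv\'ath et al.\ (2004); the only genuinely new ingredient is an algebraic identity that absorbs the autoregressive coordinate $x_{t,d}=y_{t-1}$ into the remaining terms. \textbf{Step 1 (linearisation and reduction).} Since $\hat{\eps}_u=\eps_u-\bx_u^\T(\hat{\bbe}_M-\bbe_0)$ and $\hat{\bbe}_M-\bbe_0=(\bX_M^\T\bX_M)^{-1}\bX_M^\T(\eps_1,\dots,\eps_M)^\T$,
$$\sum_{u=M+1}^{M+s}\hat{\eps}_u=\sum_{u=M+1}^{M+s}\eps_u-\Big(\sum_{u=M+1}^{M+s}\bx_u\Big)^{\T}(\hat{\bbe}_M-\bbe_0).$$
From Lemma \ref{hatga} (in particular \eqref{adef}), Assumption \ref{a-adef} and a central limit theorem for the weakly dependent, mean--zero sequence $\bx_i\eps_i$ (supplied by Assumptions \ref{bern}--\ref{bernmean}), one has $M^{1/2}(\hat{\bbe}_M-\bbe_0)=O_P(1)$, and a similar but simpler computation gives $\hat{\sigma}_M^2\stackrel{P}{\to}\sigma^2$. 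The crucial point is that $x_{t,1}\equiv1$ forces $\bA\be_1=E(\bx_0\bx_0^\T)\be_1=E\bx_0=:\boldsymbol\mu$, whence $\boldsymbol\mu^\T\bA^{-1}=\be_1^\T$ and therefore $\boldsymbol\mu^\T(\hat{\bbe}_M-\bbe_0)=M^{-1}\sum_{i=1}^M\eps_i+o_P(M^{-1/2})$. Writing $\sum_{u=M+1}^{M+s}\bx_u=s\boldsymbol\mu+\sum_{u=M+1}^{M+s}(\bx_u-\boldsymbol\mu)$ and controlling the centred partial sums by a H\'ajek--R\'enyi type maximal inequality (weights $s^{1/2+\delta}$, $\delta>0$ small), the shape of $g$ yields, \emph{uniformly in} $s\ge1$,
$$\sum_{u=M+1}^{M+s}\hat{\eps}_u=\sum_{u=M+1}^{M+s}\eps_u-\frac sM\sum_{i=1}^M\eps_i+g(M,s)\,o_P(1),$$
so that $\sup_{s\ge1}\Gamma(M,s)/g(M,s)=\sigma^{-1}\sup_{s\ge1}\big|\sum_{u=M+1}^{M+s}\eps_u-\tfrac sM\sum_{i=1}^M\eps_i\big|/g(M,s)+o_P(1)$.

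\textbf{Step 2 (invariance principle and time change).} Assumptions \ref{bern}--\ref{bernmean} give a functional central limit theorem for the uncorrelated, weakly dependent errors with long--run variance $\sigma^2$, and, passing to the $m$--dependent approximation (the overlap block of length $m$ being negligible after scaling), the training block $\{1,\dots,M\}$ and the monitoring block $\{M+1,M+2,\dots\}$ become asymptotically independent. Hence, jointly, $M^{-1/2}\sum_{i=1}^M\eps_i\Rightarrow\sigma W_1(1)$ and $\{M^{-1/2}\sum_{u=M+1}^{M+\lfloor Mt\rfloor}\eps_u\}_{t\ge0}\Rightarrow\sigma W_2(\cdot)$ with $W_1,W_2$ independent Wiener processes, so in $D[0,K]$ for every $K<\infty$
$$\frac1{M^{1/2}}\Big(\sum_{u=M+1}^{M+\lfloor Mt\rfloor}\eps_u-\frac{\lfloor Mt\rfloor}{M}\sum_{i=1}^M\eps_i\Big)\Rightarrow\sigma\big(W_2(t)-tW_1(1)\big),$$
while $g(M,\lfloor Mt\rfloor)=cM^{1/2}(1+t)^{1-\gamma}t^{\gamma}(1+o(1))$ uniformly on compact $t$--sets. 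Setting $W(r):=W_1(r)$ for $0\le r\le1$ and $W(r):=W_1(1)+W_2(r-1)$ for $r\ge1$ defines a single Wiener process with $W_2(t)-tW_1(1)=W(1+t)-(1+t)W(1)$; the reparametrisation $w=t/(1+t)$ together with the elementary fact that $w\mapsto(1-w)W\big(1/(1-w)\big)-W(1)$ is again a standard Wiener process $W^*$ shows (the factor $1+t$ cancels)
$$\sup_{t>0}\frac{|W_2(t)-tW_1(1)|}{(1+t)^{1-\gamma}t^{\gamma}}\stackrel{d}{=}\sup_{0<w\le1}\frac{|W^*(w)|}{w^{\gamma}}.$$
By the continuous mapping theorem, for $0<\vare<K$, $\sup_{\vare\le t\le K}\Gamma(M,\lfloor Mt\rfloor)/g(M,\lfloor Mt\rfloor)\Rightarrow c^{-1}\sup_{\vare\le t\le K}|W_2(t)-tW_1(1)|/\big((1+t)^{1-\gamma}t^{\gamma}\big)$.

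\textbf{Step 3 (trimming the horizon; conclusion and the main obstacle).} It remains to let $\vare\downarrow0$ and $K\uparrow\infty$, which needs $\sup_{1\le s\le\vare M}\Gamma(M,s)/g(M,s)$ and $\sup_{s\ge KM}\Gamma(M,s)/g(M,s)$ to be uniformly small in probability. This is exactly what the two factors in $g$ do: $(s/(M+s))^{\gamma}\asymp(s/M)^{\gamma}$ tames the fluctuations near $s\asymp1$ (here $\gamma<1/2$ is needed, so that $\sum_s s^{-2\gamma}$ over $s\le\vare M$ is only of order $(\vare M)^{1-2\gamma}$), and the factor $1+s/M$ produces linear growth for $s\gg M$; H\'ajek--R\'enyi type maximal inequalities for the dependent partial sums then bound both tails, paralleling the almost sure facts $W(t)=o(t^{\gamma})$ as $t\downarrow0$ and $W(t)/t\to0$ as $t\to\infty$. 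Consequently $\sup_{s\ge1}\Gamma(M,s)/g(M,s)\Rightarrow c^{-1}\sup_{0<w\le1}|W^*(w)|/w^{\gamma}$; since $\{\Gamma(M,s)\le g(M,s)\text{ for all }s\ge1\}=\{\sup_{s\ge1}\Gamma(M,s)/g(M,s)\le1\}$ and the distribution function of $\sup_{0<w\le1}|W(w)|/w^{\gamma}$ is continuous, the probabilities converge to $P\{\sup_{0<w\le1}|W(w)|/w^{\gamma}\le c\}$, which is the assertion. The delicate part throughout is the uniformity in $s$ --- the linearisation remainder must be $o(g(M,s))$ \emph{simultaneously} for all $s\ge1$, and the horizon must be truncated to a compact $t$--interval --- which forces one to work with maximal inequalities for the weakly dependent partial sums rather than with pointwise limit theorems; once the identity $\bA\be_1=\boldsymbol\mu$ has collapsed the estimation effect to the clean term $(s/M)\sum_{i=1}^M\eps_i$, the autoregressive coordinate plays no further role and the classical time--change computation goes through unchanged.
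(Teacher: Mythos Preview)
Your proposal is correct and matches the paper's proof in all essentials: the linearisation of $\sum\hat\eps_u$, the pivotal algebraic identity $\ba^\T\bA^{-1}=\be_1^\T$ (equivalently $\bA\be_1=\boldsymbol\mu$) that collapses the estimation effect to $(s/M)\sum_{u=1}^M\eps_u$, and the time--change distributional identity that converts $\sup_{t>0}|W_2(t)-tW_1(1)|/\big((1+t)^{1-\gamma}t^\gamma\big)$ into $\sup_{0<w\le1}|W(w)|/w^\gamma$ are exactly what the paper does in Lemma~\ref{empeps} and the displayed computations following it.

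The one technical divergence is how the uniformity over $s\ge1$ is handled. You argue by weak convergence on compact $t$--intervals $[\vare,K]$ plus a separate tightness/trimming step for $s\le\vare M$ and $s\ge KM$ via H\'ajek--R\'enyi inequalities. The paper instead invokes a \emph{strong approximation} (Lemma~\ref{epsapp}, from Theorem~B.1 of Aue et al., 2014): for each $M$ there exist two independent Wiener processes with $|\sum_{u=1}^M\eps_u-\sigma W_{M,1}(M)|=O_P(M^{1/2-\delta})$ and $\sup_{s\ge1}s^{-(1/2-\delta)}|\sum_{u=M+1}^{M+s}\eps_u-\sigma W_{M,2}(s)|=O_P(1)$. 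These rates, divided by $g(M,s)$, are $o_P(1)$ \emph{simultaneously for all} $s\ge1$ because $\sup_{x>0}x^{1/2-\delta}/\big((1+x)(x/(1+x))^\gamma\big)<\infty$ when $0<\delta<1/2-\gamma$; no trimming is needed, and the scale transformation of the Wiener process then reduces the supremum directly to $\sup_{0<u\le1}|W(u)|/u^\gamma$. Your route is equally valid and arguably more elementary (it avoids KMT--type embeddings), but it trades one heavy tool for a two--stage argument; the paper's strong--approximation route is shorter once Lemma~\ref{epsapp} is available and also yields the explicit $O_P\big((s^*/M)^{1/2-\gamma}\big)$ rate reused later under the alternative.
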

\medskip

We note that $\gamma=1/2$ is not allowed in Theorem \ref{th-null}  since in this case the limit distribution would be infinity. Horv\'ath et al.\ (2007) studied the ``square--root--boundary" case, i.e.\ when $\gamma=1/2$,  and they obtained a Darling--Erd\H{o}s type extreme value result for the limit distribution of the stopping time under the no change null hypothesis in linear regression. Chu et al.\ (1996) obtained an upper bound for the probability of false stopping under the null hypothesis (cf.\ Homm and Breitung, 2012). \\

The stopping time $\tau_M$ is an open ended since if there is no change we never stop collecting further  observations. In some applications we might want to stop at time $M+N$, i.e.\ only $N$ observations are collected after the training period.  Let
\begin{displaymath}
\bar{\tau}_M=\left\{
\begin{array}{ll}
\inf \{s: 1\leq s \leq N, \Gamma(M,s)>g(M,s)\},
\vspace{.3cm}\\
N+1,\; \mbox{if}\;\Gamma(M,s)\leq g(M,s)\;\;\mbox{for all}\;\;1\leq s \leq N
\end{array}
\right.
\end{displaymath}
denote the closed end version of $\tau_M$. Let $N=N(M)$ and define
$$
c_*=\lim_{M\to \infty}\frac{N}{M}
$$
\begin{remark}\label{close}{\rm Under the conditions of Theorem \ref{th-null} are satisfied, then we have that
$$
\lim_{M\to\infty}P\left\{ \frac{\Gamma(M,s)}{g(M,s)}\leq 1\;\;\mbox{for all}\;\;1\leq s \leq N     \right\}=P\left\{\sup_{0 < u \leq c_*/(1+c_*)}\frac{|W(u)|}{u^\gamma}\leq c\right\}
$$
for all $0<c_*<\infty$.
}
\end{remark}

Selected critical values for the limit distributions in Theorem \ref{th-null} and Remark \ref{close} can be found, for example, in Horv\'ath et al.\ (2004).
\medskip

\section{Asymptotic distribution of the stopping time under the alternative}\label{altstop}
In this section we investigate the properties of the sequential detection rule when the regression is not stable. Our procedure is tailored for early changes, i.e.\ $s^*$ is small, so we assume in this section that the changes occur early.  We concentrate on the autoregressive parameter $\beta_{0,d}$. We consider the cases (i) the observations stay stationary after the change, (ii) they change to a ``unit root" sequence and (iii) explosive autoregression after the change.  \\

First we assume that the regression parameter at time $M+s^*$ changes from $\bbe_0$ to $\bde=\bde_M=(\delta_{M,1}, \delta_{M,2}, \ldots, \delta_{M,d})^\T$
satisfying
\begin{assumption}\label{destat}
$\lim_{M\to\infty}\delta_{M,i}=\bar{\delta}_{i},\;\;1\leq i\leq d\;\;\mbox{and}\;\;|\bar{\delta}_d|<1.$
\end{assumption}
So for any fixed $M$, the sequence  changes from a stationary segment to an other stationary one. We allow that $\bar{\delta}_i=\beta_{0, i}$, i.e.\ the difference between the regression parameters can be small. We measure the size of change with
$$
\Delta=\Delta_M=\bc_A^\T(\bbe_0-\bde_M),
$$
where $\bc_A=(1, Ex_{0,2},\ldots ,Ex_{0,d-1}, Ey_A)^\T$ with
$$
y_A=\sum_{\ell=0}^\infty \bar{\delta}_{d}^\ell\left(\bw_{-\ell}^\T\bar{\delta}+\eps_{-\ell}\right),
$$
 \beq\label{bwdef}
 \bw_t=(1, x_{t,2}, \ldots, x_{t,d-1})^\T
\eeq
and $\bar{\bde}=(\bar{\delta}_{1}, \bar{\delta}_{2},\ldots , \bar{\delta}_{d-1})^\T$. Under the alternative $y_t$ converges in distribution to $y_A$. The assumption says that the size of the change cannot be too small:
\begin{assumption}\label{ta-2}
$M^{1/2}|\Delta_M|\to \infty.$
\end{assumption}

Analogue of Assumption \ref{ta-2} first appeared in retrospective change point detection in Picard (1985) and D\"umbgen (1991) when the time of change in the mean was estimated.

\begin{theorem}\label{th-alt-0} If Assumptions \ref{bern}--\ref{ta-2} hold, then we have that
$$
\lim_{C\to\infty}\liminf_{M\to\infty}P\left\{  \tau_M\leq C\left(\frac{M^{1/2-\gamma}}{|\Delta_M|}  \right)^{1/(1-\gamma)} \right\}=1.
$$
\end{theorem}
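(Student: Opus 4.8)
The plan is the one customarily used for consistency of open--ended monitoring: rather than analysing $\tau_M$ directly, I exhibit a single deterministic time $s_M$ of the advertised order at which the detector already overshoots the boundary with probability tending to one. Put
$$
s_M:=\Bigl\lfloor C\bigl(M^{1/2-\gamma}/|\Delta_M|\bigr)^{1/(1-\gamma)}\Bigr\rfloor\le C\bigl(M^{1/2-\gamma}/|\Delta_M|\bigr)^{1/(1-\gamma)}.
$$
Because $\{\Gamma(M,s_M)>g(M,s_M)\}\subseteq\{\tau_M\le s_M\}\subseteq\{\tau_M\le C(M^{1/2-\gamma}/|\Delta_M|)^{1/(1-\gamma)}\}$, it suffices to prove that $P\{\Gamma(M,s_M)>g(M,s_M)\}\to1$ once $C$ is chosen large. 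Writing $|\Delta_M|=\rho_M M^{-1/2}$ with $\rho_M\to\infty$ (Assumption \ref{ta-2}) and using $\gamma<1/2$, an elementary computation gives $s_M\to\infty$, $s_M/M\to0$ and $s_M|\Delta_M|\to\infty$; in particular the boundary reduces to $g(M,s_M)=cM^{1/2-\gamma}s_M^{\gamma}(1+o(1))$.

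Next I decompose the numerator of the detector. The training sample and the observations $y_{M+1},\dots,y_{M+s^*}$ obey the pre--change model while $y_{M+s^*+1},\dots$ obey the post--change one, so substituting $\hat{\eps}_u=y_u-\bx_u^\T\hat{\bbe}_M$ gives
$$
\sum_{u=M+1}^{M+s_M}\hat{\eps}_u=\sum_{u=M+1}^{M+s_M}\eps_u+\Bigl(\sum_{u=M+1}^{M+s_M}\bx_u^\T\Bigr)(\bbe_0-\hat{\bbe}_M)+\Bigl(\sum_{u=M+s^*+1}^{M+s_M}\bx_u^\T\Bigr)(\bde_M-\bbe_0).
$$
The first term has variance $s_M\sigma^2$ since $E\eps_t=0$ and $E\eps_t\eps_s=0$ (Assumption \ref{bernmean}), hence it is $O_P(s_M^{1/2})$ by Chebyshev. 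In the second, $\hat{\bbe}_M$ is computed from the \emph{unchanged} training sample, so it is $\sqrt{M}$--consistent (as in the proof of Theorem \ref{th-null}), and $\sum_{u=M+1}^{M+s_M}\bx_u=O_P(s_M)$, whence this term is $O_P(s_M M^{-1/2})$. The third term is the signal: one shows $\frac1{s_M}\sum_{u=M+s^*+1}^{M+s_M}\bx_u\stackrel{P}{\to}\bc_A$, so that, the change being early ($s^*=o(s_M)$), it equals $-(s_M-s^*)\Delta_M(1+o_P(1))=-s_M\Delta_M(1+o_P(1))$.

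Since $s_M|\Delta_M|/s_M^{1/2}=s_M^{1/2}|\Delta_M|\to\infty$ and $s_M|\Delta_M|/(s_M M^{-1/2})=M^{1/2}|\Delta_M|\to\infty$, the signal dominates the other two terms, and together with $\hat{\sigma}_M\stackrel{P}{\to}\sigma$ this yields $\Gamma(M,s_M)=\sigma^{-1}s_M|\Delta_M|(1+o_P(1))$. Dividing by the boundary and using $s_M^{1-\gamma}=C^{1-\gamma}M^{1/2-\gamma}|\Delta_M|^{-1}(1+o(1))$ gives
$$
\frac{\Gamma(M,s_M)}{g(M,s_M)}=\frac{C^{1-\gamma}}{c\sigma}\bigl(1+o_P(1)\bigr).
$$
Fixing $C$ with $C^{1-\gamma}>2c\sigma$ then forces $P\{\Gamma(M,s_M)>g(M,s_M)\}\to1$, hence $\liminf_{M\to\infty}P\{\tau_M\le s_M\}=1$ for that $C$. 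Since the probability in the statement is nondecreasing in $C$ and $\{\tau_M\le s_M\}\subseteq\{\tau_M\le C(M^{1/2-\gamma}/|\Delta_M|)^{1/(1-\gamma)}\}$, the conclusion follows by letting $C\to\infty$.

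The step I expect to be the main obstacle is the convergence $\frac1{s_M}\sum_{u=M+s^*+1}^{M+s_M}\bx_u\stackrel{P}{\to}\bc_A$, and in particular its autoregressive last coordinate $\frac1{s_M}\sum_{u=M+s^*+1}^{M+s_M}y_{u-1}$. For the exogenous coordinates this is just the weak law of large numbers for the stationary, weakly dependent sequences of Assumption \ref{bern}. For the autoregressive coordinate one must iterate the post--change recursion $y_t=\bw_t^\T(\delta_{M,1},\dots,\delta_{M,d-1})^\T+\delta_{M,d}y_{t-1}+\eps_t$ from $t=M+s^*$, show the transient $\delta_{M,d}^{\,k}y_{M+s^*}$ averages out (here $|\delta_{M,d}|\le\rho<1$ for $M$ large by Assumption \ref{destat}, and $y_{M+s^*}=O_P(1)$), and verify that the running average of the resulting convolution sum converges in probability to $Ey_A$, with $y_A$ as in the statement; this uses stationarity, the geometric decay of the autoregressive weights, and the moment bound $E\|\bz_t\|^{\kappa_1}<\infty$ ($\kappa_1>4$) to get the uniform integrability needed for the averages to converge. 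The remaining steps are routine bookkeeping with the rates above.
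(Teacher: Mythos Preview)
Your proposal is correct and follows essentially the same route as the paper. The paper also fixes the single time $s_M=C\bigl(M^{1/2-\gamma}/|\Delta_M|\bigr)^{1/(1-\gamma)}$, uses the decomposition $\sum\hat{\eps}_u=\sum\eps_u-(\sum\bx_u)^\T(\hat{\bbe}_M-\bbe_0)+(\sum_{u>M+s^*}\bx_u)^\T(\bde_M-\bbe_0)$, bounds the first two pieces by $O_P(1)$ after division by $g(M,s_M)$ (via Lemma~\ref{hatga}, Lemma~\ref{xlaw} and the proof of Theorem~\ref{th-null}), and identifies the signal as $s_M|\Delta_M|/(c\sigma M^{1/2}(s_M/M)^\gamma)(1+o_P(1))$, which tends to $\infty$ with $C$. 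The one place where you are more explicit than the paper is exactly the point you flag: the convergence of the post--change autoregressive averages to $Ey_A$; the paper handles this in one line by asserting an analogue of Lemma~\ref{xlaw} for the post--change regressors together with $\sum_{u}(E\bx_u-\bc_A)=O(1)$, whereas your sketch via the recursion, geometric decay of $\delta_{M,d}^{\,k}y_{M+s^*}$, and moment bounds spells out precisely what is behind that assertion.
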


Next we show that the upper bound for $\tau_M$ in Theorem \ref{th-alt-0} is the best possible when we get the asymptotic normality of $\tau_M$.
 Let
$$
a_M=\left( \frac{c\sigma M^{1/2-\gamma}}{|\Delta_M|}  \right)^{1/(1-\gamma)}
$$
and
$$
b_M=\frac{\sigma a_M^{1/2}}{(1-\gamma)|\Delta_M|}.
$$
\begin{theorem}\label{alt-stat} If Assumptions \ref{bern}--\ref{ta-2} hold,
and
\beq\label{ta-3}
s^*=O(M^\theta)\;\;\;\mbox{with some}\;\;\;0\leq \theta<\left(\frac{1-2\gamma}{2(1-\gamma)}\right)^2,
\eeq
then we have that
$$
\frac{\tau_M-a_M}{b_M}\;\stackrel{{\mathcal D}}{\to}\;N,
$$
where $N$ is a standard normal random variable.
\end{theorem}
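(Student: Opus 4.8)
The plan is to show that $a_M$ is the first--order (deterministic) value of $\tau_M$ and that the fluctuation $\tau_M-a_M$ is, to leading order, a fixed linear functional of the accumulated errors $\sum_{u=M+1}^{M+a_M}\eps_u$, which by Assumption~\ref{bernmean} has long--run variance $\sigma^2$ and hence behaves like $\sigma W(a_M)\stackrel{{\mathcal D}}{=}\sigma a_M^{1/2}N$. Throughout I use the two consequences of Assumption~\ref{ta-2}: since $a_M=(c\sigma)^{1/(1-\gamma)}\big(M^{1/2}|\Delta_M|\big)^{-1/(1-\gamma)}M$ we have $a_M=o(M)$, and since $|\Delta_M|a_M^{1/2}=(c\sigma)^{1/(2(1-\gamma))}\big(M^{1/2}|\Delta_M|\big)^{(1-2\gamma)/(2(1-\gamma))}\to\infty$ we also get $|\Delta_M|a_M\to\infty$, $|\Delta_M|^2a_M=o(M)$, and $b_M\asymp a_M^{1/2}/|\Delta_M|=o(a_M)$, $b_M\to\infty$. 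The training indices $1\le t\le M$ lie in the pre--change regime, so \eqref{adef} and the $\sqrt M$--consistency quoted below it give $\hat{\bbe}_M-\bbe_0=O_P(M^{-1/2})$ and $\hat\sigma_M=\sigma+O_P(M^{-1/2})$, exactly as under $H_0$. Splitting \eqref{resdef} according to \eqref{alt}, for $s>s^*$,
$$
\sum_{u=M+1}^{M+s}\hat{\eps}_u=\sum_{u=M+1}^{M+s}\eps_u+\Big(\sum_{u=M+1}^{M+s}\bx_u\Big)^{\!\T}(\bbe_0-\hat{\bbe}_M)+\sum_{u=M+s^*+1}^{M+s}\bx_u^\T(\bde_M-\bbe_0).
$$
By Assumptions~\ref{bern} and~\ref{destat} (here $|\bar\delta_d|<1$ is essential) the post--change sequence $\{\bx_u\}$ is stationary, $m$--approximable with $\kappa_1>4$ finite moments, and $E\bx_u\to\bc_A$ geometrically, so $\|\sum_{u=M+1}^{M+s}\bx_u\|=O_P(s)$ and $\sum_{u=M+s^*+1}^{M+s}\bx_u^\T(\bde_M-\bbe_0)=-(s-s^*)\Delta_M+O_P((s-s^*)^{1/2})+O_P(1)$, uniformly in $s^*<s\le CM$; by the strong approximation available under Assumptions~\ref{bern}, \ref{bernmean} (long--run variance $\sigma^2$) there is a Wiener process $W$ with $\sum_{u=M+1}^{M+s}\eps_u=\sigma W(s)+o_P(s^{1/2})$ uniformly in $1\le s\le CM$. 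Dividing by $\hat\sigma_M$ and using $a_M=o(M)$, $|\Delta_M|a_M^{1/2}\to\infty$ to pull the lower--order terms out of the modulus, one obtains, uniformly for $s^*<s\le Ca_M$,
$$
\Gamma(M,s)=\frac{|\Delta_M|}{\sigma}(s-s^*)-\frac{\Delta_M}{|\Delta_M|}\,W(s)+o_P\big(a_M^{1/2}\big),
$$
together with the crude bound $\Gamma(M,s)\le\sigma^{-1}|\Delta_M|s+o_P(a_M^{1/2})$ valid for all $1\le s\le Ca_M$.

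\textbf{Step 2 (localisation).} Theorem~\ref{th-alt-0} gives, for every $\varepsilon>0$, a constant $C$ with $P(\tau_M\le Ca_M)\to1$. For the matching lower bound note that $a_M$ solves $cM^{1/2-\gamma}s^\gamma=|\Delta_M|s/\sigma$: on $1\le s\le(1-\eta)a_M$ the drift $|\Delta_M|s/\sigma$ lies below $g(M,s)\sim cM^{1/2-\gamma}s^\gamma$ by an amount of exact order $M^{1/2-\gamma}a_M^\gamma$, whereas every remaining term in the crude bound is $O_P(a_M^{1/2})=o_P(M^{1/2-\gamma}a_M^\gamma)$ because $a_M=o(M)$. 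Hence $P((1-\eta)a_M\le\tau_M\le(1+\eta)a_M)\to1$ for every $\eta>0$, and it suffices to analyse the crossing for $s=a_M+r$ with $|r|\le\eta a_M$.

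\textbf{Step 3 (linearisation of the crossing equation).} A Taylor expansion of \eqref{bou-1} at $s=a_M$ gives, for $|r|\le\eta a_M$,
$$
g(M,a_M+r)=\frac{|\Delta_M|}{\sigma}a_M+\gamma\,\frac{|\Delta_M|}{\sigma}\,r+O\!\Big(\frac{|\Delta_M|r^2}{a_M}\Big)+O\!\Big(\frac{|\Delta_M|a_M^2}{M}\Big),
$$
and on this window $W(a_M+r)=W(a_M)+o_P(a_M^{1/2})$ by the modulus of continuity. At $s=\tau_M$ we have $\Gamma(M,\tau_M)=g(M,\tau_M)$ up to a step--to--step jump of order $o_P(b_M)$; equating the expansions of Steps~1 and~3, cancelling the common leading term $|\Delta_M|a_M/\sigma$, and absorbing $O_P(|\Delta_M|r^2/a_M)$, $O_P(|\Delta_M|a_M^2/M)$ and the $(\hat\sigma_M/\sigma-1)|\Delta_M|a_M$ error into $o_P(|\Delta_M|b_M)=o_P(a_M^{1/2})$ (using $b_M=o(a_M)$, $a_M=o(M)$, $|\Delta_M|a_M^{1/2}\to\infty$, $|\Delta_M|^2a_M=o(M)$), we obtain
$$
(1-\gamma)\,\frac{|\Delta_M|}{\sigma}\,(\tau_M-a_M)=\frac{|\Delta_M|}{\sigma}\,s^*+\frac{\Delta_M}{|\Delta_M|}\,W(a_M)+o_P\big(a_M^{1/2}\big).
$$
Condition \eqref{ta-3} forces $s^*=o(b_M)$, so dividing by $(1-\gamma)|\Delta_M|/\sigma$ and then by $b_M=\sigma a_M^{1/2}/((1-\gamma)|\Delta_M|)$ yields
$$
\frac{\tau_M-a_M}{b_M}=\frac{\Delta_M}{|\Delta_M|}\cdot\frac{W(a_M)}{a_M^{1/2}}+o_P(1);
$$
since $W(a_M)/a_M^{1/2}$ is standard normal and symmetric, the right--hand side converges in distribution to $N$, which is the assertion.

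\textbf{Main obstacle.} The delicate part is the uniform error control underlying Steps~1 and~3: one has to verify that on the window $[(1-\eta)a_M,(1+\eta)a_M]$ each remainder --- the estimation error $(\sum\bx_u)^\T(\hat{\bbe}_M-\bbe_0)$, the factor $\hat\sigma_M/\sigma-1$ multiplying $|\Delta_M|a_M$, the geometric transient of the post--change autoregression together with the discrepancy between $\bc_A$ and the finite--$M$ limiting regression mean (which Assumption~\ref{destat} controls), the strong--approximation remainder, the increment $W(a_M+r)-W(a_M)$, the quadratic boundary term, and the shift $s^*/(1-\gamma)$ caused by the change falling at $M+s^*$ rather than at $M+1$ --- is $o_P(|\Delta_M|b_M)$. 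It is precisely this calibration that dictates $M^{1/2}|\Delta_M|\to\infty$ in Assumption~\ref{ta-2} and the range $0\le\theta<\big((1-2\gamma)/(2(1-\gamma))\big)^2$ in \eqref{ta-3}; once these are in force the remaining estimates are routine consequences of the $m$--approximation bound \eqref{ber-2} and maximal inequalities.
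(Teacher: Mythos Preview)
Your approach is correct and leads to the same result, but it is organized differently from the paper's proof. The paper defines an explicit quantile
\[
\cN(M,x)=\Bigl[\tfrac{c\sigma M^{1/2-\gamma}}{\Delta_M}-x\Bigl(\tfrac{c^{1/2-\gamma}M^{(1/2-\gamma)^2}}{(\Delta_M/\sigma)^{3/2-2\gamma}}\Bigr)^{1/(1-\gamma)}\Bigr]^{1/(1-\gamma)},
\]
and computes $P\{\tau_M>\cN(M,x)\}=P\{\max_{1\le s\le\cN}\Gamma(M,s)/g(M,s)\le 1\}\to\Phi(x)$ through a chain of lemmas that successively replace $\sum\hat\eps_u$ by $\sigma W_{M,2}(s)+\Delta_M s$, drop the $(1+s/M)$ factor, show the weighted maximum of $\sigma W(s)+\Delta s$ over $[s^*+1,\cN]$ is attained near $s=\cN$, and finally identify the limit as $\Phi(x)$ because $(\cN/M)^{\gamma-1/2}W(\cN)/(M^{1/2}(\cN/M)^\gamma)$ is standard normal. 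Only at the very end is $\cN(M,x)$ Taylor--expanded to read off $a_M$ and $b_M$. Your route --- localize $\tau_M$ to a neighborhood of $a_M$ via Theorem~\ref{th-alt-0}, then linearize the crossing equation $\Gamma(M,\tau_M)=g(M,\tau_M)$ around $s=a_M$ --- is the ``forward'' version of the same computation and uses the identical decomposition \eqref{hepsum} and the identical Brownian approximation. What the paper's organization buys is that working with the event $\{\max_{s\le\cN}\Gamma/g\le 1\}$ sidesteps the issue of whether the crossing is essentially unique; your linearization implicitly relies on the positive net drift $(1-\gamma)|\Delta_M|/\sigma$ of $\Gamma-g$ to guarantee this.

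One small gap to tighten: in Step~3 you write $W(a_M+r)=W(a_M)+o_P(a_M^{1/2})$ for $|r|\le\eta a_M$, but for fixed $\eta>0$ this increment is only $O_P(\sqrt{\eta}\,a_M^{1/2})$. The paper meets the same issue in its Lemma on the weighted maximum and resolves it by letting the window parameter $\delta\to 0$ after all other limits. You can do the same, or --- more cleanly --- run your linearization once on $|r|\le\eta a_M$ to conclude $\tau_M-a_M=O_P(b_M)$, and then rerun it on the tighter window $|r|\le K b_M$, where $b_M=o(a_M)$ makes the Brownian increment genuinely $o_P(a_M^{1/2})$.
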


Aue and Horv\'ath (2004) proved Theorem \ref{th-alt-0} when the mean can change under the alternative. Their result was extended to linear regression by Horv\'ath et al.\ (2007).\\

Next we consider the case when $y_t$ changes to a random walk at time $M+s^*$:
\begin{assumption}\label{unir} \;$\delta_{M,d}=\bar{\delta}_d=1$
\end{assumption}
and the other parameters in the regression also might change
\begin{assumption}\label{destat-1}
$\lim_{M\to\infty}\delta_{M,i}=\bar{\delta}_{i},\;\;1\leq i\leq d-1.$
\end{assumption}
To describe the size of change we introduce
$$
\fa_1=E\bw_0^\T(\bar{\bde}-\bar{\bbe}_0)\;\;\;\mbox{and}\;\;\;\fb_1^2=\sigma^2+\sum_{s=-\infty}^\infty \mbox{\rm cov}(\bw_0^\T\bar{\bde}, \bw_s^\T\bar{\bde}),
$$
where $\bar{\bbe}_0=(\beta_{0,1}, \beta_{0,2}, \ldots, \beta_{0,d-1})^\T$.
\begin{theorem}\label{th-alt-2} If Assumptions \ref{bern}--\ref{a-adef}, \ref{unir}, \ref{destat-1} hold,
\beq\label{th-alt-20}
{s^*}{\displaystyle M^{-(1-2\gamma)/(3-2\gamma)}}\;\;\to \;\;0,
\eeq
and
\beq\label{th-alt-21}
\fa_1M^{(1-2\gamma)/(6-4\gamma)}\to \bar{\fa}_1,\;\;0\leq \bar{\fa}_1<\infty,
\eeq
then we have
\begin{align*}\label{th-alt-22}
\lim_{M\to \infty}P&\left\{\tau_M\leq x M^{(1-2\gamma)/(3-2\gamma)}   \right\}\\
&=1-P\left\{ \max_{0<s\leq x}\frac{1-\beta_{0,d}}{s^\gamma}\left|\fb_1\int_0^s W(u)du+\bar{\fa}_1s^2/2     \right|\leq c\sigma\right\}
\end{align*}
where $\{W(u), u\geq 0\}$ is a Wiener process.
\end{theorem}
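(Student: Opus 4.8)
The plan is to rescale time by $r_M=M^{(1-2\gamma)/(3-2\gamma)}$ and prove a functional limit theorem for the detector on that scale. Observe first that $\{\tau_M\le xr_M\}^{c}=\{\Gamma(M,s)\le g(M,s)\ \mbox{for all}\ 1\le s\le xr_M\}$, so it is enough to identify the joint weak limit, as functions of $v\in(0,x]$, of $r_M^{-3/2}\Gamma(M,\lfloor vr_M\rfloor)$ and $r_M^{-3/2}g(M,\lfloor vr_M\rfloor)$. Because $s/M\to0$ uniformly on this scale, a direct computation gives $r_M^{-3/2}g(M,\lfloor vr_M\rfloor)\to cv^{\gamma}$ locally uniformly in $v>0$, and $\hat{\sigma}_M\stackrel{P}{\to}\sigma$ by consistency of $\hbbe_M$ (Lemma \ref{hatga}, Assumption \ref{a-adef}) together with the law of large numbers for $\{\eps_t^{2}\}$. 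Thus everything reduces to the limit of $r_M^{-3/2}\sum_{u=M+1}^{M+\lfloor vr_M\rfloor}\hat{\eps}_u$.

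Second, I would split this sum at $M+s^{*}$. On $\{M+1,\dots,M+s^{*}\}$ the residuals equal $\eps_u+\bx_u^{\T}(\bbe_0-\hbbe_M)$; since $s^{*}=o(r_M)$ by \eqref{th-alt-20} and $\|\bbe_0-\hbbe_M\|=O_P(M^{-1/2})$, this block is $o_P(r_M^{3/2})$ uniformly in $s$. For $u>M+s^{*}$ I would insert the post--change recursion $y_u=\bw_u^{\T}\bde^{(w)}+y_{u-1}+\eps_u$ (with $\bde^{(w)}$ the first $d-1$ coordinates of $\bde_M$ and $\delta_{M,d}=1$), which telescopes so that
\[
\sum_{u=M+s^{*}+1}^{M+s}\hat{\eps}_u=(1-\hat{\beta}_{M,d})\sum_{u=M+s^{*}+1}^{M+s-1}y_u+O_P(r_M)\qquad\mbox{uniformly in}\ 1\le s\le xr_M,
\]
where $\hat{\beta}_{M,d}$ is the last coordinate of $\hbbe_M$; here the remainder collects $y_{M+s}$, $\hat{\beta}_{M,d}y_{M+s^{*}}$, the $\bw_u$--sum times $\hbbe_M$, and the estimation error $\|\hbbe_M-\bbe_0\|\,\|\sum_u\bx_u\|=O_P(M^{-1/2}r_M^{3/2})$, all of which are $O_P(r_M)=o_P(r_M^{3/2})$. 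Since also $\hat{\beta}_{M,d}\stackrel{P}{\to}\beta_{0,d}$, one obtains $r_M^{-3/2}\sum_{u=M+1}^{M+\lfloor vr_M\rfloor}\hat{\eps}_u=(1-\beta_{0,d})r_M^{-3/2}\sum_{u=M+s^{*}+1}^{M+\lfloor vr_M\rfloor}y_u+o_P(1)$, uniformly in $v\in[0,x]$.

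Third — the core step — I would establish an invariance principle for the post--change partial sums. The $m$--dependent approximation of Assumption \ref{bern}, together with Assumption \ref{bernmean} (which kills the cross covariances, so that $\fb_1^{2}$ is exactly the long--run variance of $\bw_0^{\T}\bde^{(w)}+\eps_0$) and the fact that the mean of $\bw_0^{\T}\bde^{(w)}+\eps_0$ is $\fa_1(1+o(1))$ with $\fa_1r_M^{1/2}\to\bar{\fa}_1$ by \eqref{th-alt-21}, yield $r_M^{-1/2}(y_{M+\lfloor vr_M\rfloor}-y_{M+s^{*}})\Rightarrow\bar{\fa}_1v+\fb_1W(v)$ in $D[0,x]$; since the integral map is continuous on $D[0,x]$ in the uniform topology, this gives $r_M^{-3/2}\sum_{u=M+s^{*}+1}^{M+\lfloor vr_M\rfloor}y_u\Rightarrow\bar{\fa}_1v^{2}/2+\fb_1\int_0^{v}W(w)\,dw$. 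Combining with step two, $r_M^{-3/2}\Gamma(M,\lfloor vr_M\rfloor)\Rightarrow(1-\beta_{0,d})\sigma^{-1}|\fb_1\int_0^{v}W(w)\,dw+\bar{\fa}_1v^{2}/2|$, jointly with $r_M^{-3/2}g(M,\lfloor vr_M\rfloor)\to cv^{\gamma}$; the continuous mapping theorem applied to $f\mapsto\mathbf{1}\{f(v)\le cv^{\gamma}\ \mbox{for all}\ v\in[\eta,x]\}$, which is a.s.\ continuous at the limiting path, then gives the claimed probability with the supremum restricted to $[\eta,x]$, after which one lets $\eta\downarrow0$.

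The $\eta\downarrow0$ passage is where I expect the main obstacle. Because the boundary carries the integrable singularity $v^{\gamma}$ with $0\le\gamma<1/2$, I would show $\lim_{\eta\downarrow0}\limsup_{M\to\infty}P\{\exists\,1\le s\le\eta r_M:\ \Gamma(M,s)>g(M,s)\}=0$: the noise part of the residual partial sums is controlled by a H\'ajek--R\'enyi/Doob maximal inequality exactly as in the proof of Theorem \ref{th-null} (this is where $\gamma<1/2$ makes the relevant series converge), and the post--change signal on $0<v\le\eta$ is dominated by $(1-\beta_{0,d})v^{-\gamma}|\fb_1\int_0^{v}W+\bar{\fa}_1v^{2}/2|\to0$ as $\eta\downarrow0$. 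Making this split uniform in $s$ — in particular checking that the estimation error $\hbbe_M-\bbe_0$, multiplied by the growing sums $\sum_u\bx_u$, stays below the boundary for all $s\le xr_M$ simultaneously rather than for each fixed $v$ — together with the order bookkeeping of steps two and three, is the delicate part of the argument.
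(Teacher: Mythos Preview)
Your overall route is exactly the paper's: rescale by $\cM=r_M=M^{(1-2\gamma)/(3-2\gamma)}$, use Lemma~\ref{cn-00} to kill $s\le s^*$, reduce the post-change residual sum to $(1-\beta_{0,d})\sum_u y_{u-1}$ plus $o_P(r_M^{3/2})$, and finish with an invariance principle for the cumulative random walk. The only stylistic difference is that the paper invokes the strong approximation of Lemma~\ref{le-alt-9}, equation~\eqref{le-9-4}, in the weighted norm $s^{-\gamma}$ (so small $s$ are controlled automatically and the scale transformation of $W$ delivers the limit directly), whereas you use an FCLT plus a separate $\eta\downarrow0$ passage; both are legitimate ways to close the argument.

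One explicit claim in your step three is not right and deserves a flag. You assert that the mean of $\bw_0^\T\bar{\bde}_M+\eps_0$ is $\fa_1(1+o(1))$; but by definition $\fa_1=E\bw_0^\T(\bar{\bde}-\bar{\bbe}_0)$, so the actual drift of the post-change increments is $E\bw_0^\T\bar{\bde}_M=E\bw_0^\T\bar{\bbe}_0+\fa_1$, with $E\bw_0^\T\bar{\bbe}_0=(1-\beta_{0,d})Ey_0$ a fixed constant. If $Ey_0\neq 0$ the scaled cumulative walk $r_M^{-3/2}\sum_u y_u$ then carries a term of order $r_M^{1/2}\to\infty$, your FCLT in $D[0,x]$ fails, and $\tau_M$ would in fact live on the faster scale of Theorem~\ref{th-alt-3} rather than on $r_M$. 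The paper's Lemma~\ref{le-alt-9}~\eqref{le-9-4} centres $y_{M+s^*+u}$ by $u\fa_1$ in exactly the same way, so you are mirroring the paper's argument here; but you should be aware that, as written, this centering is the correct one only under the implicit side condition $Ey_0=0$.
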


\begin{remark}\label{rem-1} {\rm If $\bar{\bde}=\bar{\bbe}_0$, i.e.\ only the autoregressive  parameter changes, then $\bar{\fa}_1=0$. In this case
$$
\max_{0<s\leq x}\frac{(1-\beta_{0,d})}{s^\gamma}\left|\fb_1\int_0^s W(u)du    \right|\stackrel{{\mathcal D}}{=}
x^{3/2-\gamma}\max_{0<s\leq 1}\frac{(1-\beta_{0,d})|\fb_1|}{s^\gamma}\left|\int_0^s W(u)du    \right|
$$
for all $x>0$.
}
\end{remark}

Let
$$
c_M=\left(\frac{2c\sigma}{1-\beta_{0,d}}\right)^{1/(2-\gamma)}\fa_1^{1/(2-\gamma)}M^{(1-2\gamma)/(4-2\gamma)}
$$
and
$$
d_M=d_0\fa_1^{-(7-4\gamma)/(4-2\gamma)}M^{(1-2\gamma)(3-2\gamma)/(8-4\gamma)}c_M^{-(\gamma-1)}
$$
with
$$
d_0=\frac{1}{2-\gamma}\frac{\fb_1}{\sqrt{3}}\left(\frac{\sigma}{1-\beta_{0,d}}\right)^{(3-2\gamma)/(4-2\gamma)}c^{(3-\gamma)/(4-2\gamma)}2^{(7-4\gamma)/(4-2\gamma)}.
$$

In Theorem \ref{th-alt-2} and Remark \ref{rem-1} the change to a random walk in the autoregressive part dominates the limit distribution. Hence $y_t$ is a partial sum after $M+s^*$ and the limit is determined by the sums of partial sum processes. In the next result the change in the regression parameters are larger than in Theorem \ref{th-alt-2} and while $y_t$ is still a random walk after the change, we have the same limit as in Theorem \ref{alt-stat}.

\begin{theorem}\label{th-alt-3} If Assumptions \ref{bern}--\ref{a-adef}, \ref{unir}, \ref{destat-1} hold,
\beq\label{th-alt-31}
s^*M^{-(3-2\gamma)/(4-2\gamma)}\to 0,
\eeq
\beq\label{th-alt-23}
\limsup_{M\to\infty}|\fa_1|<\infty\;\;\; \quad \mbox{and}\quad |\fa_1|M^{(1-2\gamma)/(6-4\gamma)}\to \infty,
\eeq
then we have for all $x$ that
$$
\frac{\tau_M-c_M}{d_M}\;\;\stackrel{{\mathcal D}}{\to}\;\;N,
$$
where $N$ is a standard normal random variable.
\end{theorem}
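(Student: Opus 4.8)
The plan is to follow the scheme used for Theorem \ref{alt-stat}. Under Assumption \ref{unir} the post-change process is a random walk, and conditions \eqref{th-alt-31}, \eqref{th-alt-23} force the detector to acquire a deterministic quadratic drift that eventually dominates every stochastic term; consequently $\tau_M$ concentrates at the deterministic scale $c_M$ at which this drift reaches the boundary, and the leftover stochastic fluctuation, transported to a window of width $d_M$ around $c_M$, is asymptotically Gaussian.

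First I would expand the detector in the post-change regime. For $u>M+s^{*}$ write $\hat\eps_u=\eps_u+\bx_u^\T(\bde_M-\hbbe_M)$ and split $\bx_u=(\bw_u^\T,y_{u-1})^\T$. Using the $M^{-1/2}$-consistency of $\hbbe_M$ (the asymptotic normality quoted before Theorem \ref{th-null}, together with \eqref{adef} and Lemma \ref{hatga}), the first $d-1$ coordinates of $\bde_M-\hbbe_M$ converge to $\bar{\bde}-\bar{\bbe}_0$ while the last converges to $\bar\delta_d-\beta_{0,d}=1-\beta_{0,d}\ne0$ by Assumption \ref{unir}; moreover, since $\bar\delta_d=1$, after time $M+s^{*}$ the observation $y_u$ is a random walk whose increments $\bw_u^\T\bde_M+\eps_u$ form an $m$-approximable $\mathcal L^{\kappa_1}$ sequence with mean of order $\fa_1$ (cf.\ Assumption \ref{destat-1}) and long-run variance $\fb_1^2$ (the $\eps$'s being uncorrelated with the $x_{\cdot,\ell}$'s by Assumption \ref{bernmean}). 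Summing $y_{u-1}$ against the $O(1)$ coefficient $1-\hat\beta_{M,d}$ and $\bw_u$ against $\bde_M-\hbbe_M$, and rearranging the resulting double sum by Abel summation, gives a decomposition
\[
\sum_{u=M+1}^{M+s}\hat\eps_u=\tfrac12(1-\beta_{0,d})\fa_1\,s^2\,\bigl(1+o_P(1)\bigr)+(1-\beta_{0,d})\,\mathcal I_M(s)+\rho_M(s),
\]
where $\mathcal I_M(s)$ is the integrated partial-sum process of the centred random-walk increments (of size $\fb_1 s^{3/2}$, with $\mathcal I_M(s)/(\fb_1 s^{3/2})$ asymptotically $N(0,1/3)$ by the central limit theorem for $m$-approximable sequences), and $\rho_M(s)$ collects: the block $u\le M+s^{*}$, negligible by \eqref{th-alt-31}; the genuine error sum $\sum\eps_u$ and the estimation-error term $(\sum\bx_u)^\T(\bbe_0-\hbbe_M)$ propagated through the $O_P(M^{-1/2})$ coefficient $\beta_{0,d}-\hat\beta_{M,d}$ against the growing $y_{u-1}$ (the ``$H_0$-type'' part analysed for Theorem \ref{th-null}); and the $m$-approximation errors controlled by \eqref{ber-2}. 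For $s=o(M)$ one has $g(M,s)=cM^{1/2-\gamma}s^\gamma(1+o(1))$, and $\hat\sigma_M\stackrel{P}{\to}\sigma$ because $\hat\sigma_M^2$ uses only the pre-change training sample.

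Second, let $c_M$ be (as defined above) the deterministic scale at which the quadratic drift balances the boundary, i.e.\ the positive solution of $\tfrac12(1-\beta_{0,d})|\fa_1|c_M^{2}/\sigma=cM^{1/2-\gamma}c_M^{\gamma}$, and prove $\tau_M=c_M(1+o_P(1))$. For $s\le(1-\vare)c_M$ the quadratic drift is strictly below the boundary while $\mathcal I_M$ and $\rho_M$ are of smaller order --- the decisive point is that \eqref{th-alt-23} says exactly that $c_M=o\!\bigl(M^{(1-2\gamma)/(3-2\gamma)}\bigr)$, which is the scale at which $\fb_1 s^{3/2}$ alone would reach the boundary, so the drift forces a crossing first; for $s\ge(1+\vare)c_M$ the drift has overshot. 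Then I linearise about $c_M$: with $s=c_M+d_M v$ and using $\tfrac12(1-\beta_{0,d})\fa_1 c_M^2/\sigma=g(M,c_M)$, a Taylor expansion of $\Gamma(M,s)-g(M,s)$ produces a term affine in $v$ with slope of order $\fa_1 c_M d_M/\sigma$, plus the fluctuation $(1-\beta_{0,d})\mathcal I_M(c_M)/\sigma$, plus $o_P$ of the slope; since $d_M/c_M\to0$ (again a consequence of \eqref{th-alt-23}), $\mathcal I_M$ is frozen at its value at $c_M$ over the window. Equating $\Gamma-g$ to zero, dividing by the slope, and invoking the CLT for $\mathcal I_M(c_M)/(\fb_1 c_M^{3/2})$ shows that $(\tau_M-c_M)/d_M$ converges in distribution to a standard normal, with $d_M$ the normalising sequence defined before the theorem.

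The main obstacle I expect is the ``no spurious early crossing'' part of the concentration step: one must show uniformly over $1\le s\le(1+\vare)c_M$ that neither $\mathcal I_M$ (which grows like $s^{3/2}$, faster than the $s^{1/2}$ of ordinary partial sums) nor the rescaled-Brownian ``$H_0$-type'' remainder can overtake the boundary before the drift does. This requires genuine maximal inequalities for the partial-sum and integrated-partial-sum processes rather than fixed-$s$ central limit theorems, and it is exactly here that the moment and approximation strength of Assumption \ref{bern} ($E\|\bz_t\|^{\kappa_1}<\infty$ with $\kappa_1>4$, and $\kappa_2>2$ in \eqref{ber-2}) and the precise exponents in \eqref{th-alt-31}, \eqref{th-alt-23} are spent --- they are tuned so that every term of $\rho_M$ and every oscillation of $\mathcal I_M$ over the $d_M$-window is $o_P$ of the affine drift term $\fa_1 c_M d_M/\sigma$. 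With this bookkeeping in place the conclusion follows as in Theorem \ref{alt-stat}.
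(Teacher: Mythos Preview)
Your proposal is correct and follows essentially the same route as the paper: the same decomposition of $\sum\hat\eps_u$ into a quadratic drift $\tfrac12(1-\beta_{0,d})\fa_1 s^2$, an integrated partial-sum fluctuation approximated by $\fb_1\int_0^s W(u)\,du$, and lower-order remainders (the paper's Lemmas \ref{le-alt-9} and \ref{root-1}--\ref{root-3}); the same identification of $c_M$ as the balance point; and the same appeal to the Gaussianity of $\int_0^{c_M}W(u)\,du$ with variance $c_M^3/3$ (Lemma \ref{w-int}) for the limit. The only cosmetic difference is packaging: the paper parameterises the crossing level as $\cP(M,x)$, proves $P\{\tau_M>\cP(M,x)\}\to\Phi(\cdot)$ in Lemma \ref{root-5}, and then Taylor-expands $\cP(M,x)$ in $x$ to read off $c_M,d_M$; you instead propose to Taylor-expand the detector in $s$ around $c_M$ --- these are equivalent. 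Your ``no spurious early crossing'' worry is handled in the paper exactly as you anticipate, via the endpoint argument in Lemma \ref{root-5} (the drift forces the maximum to sit at $s=\cP$), and indeed your observation that \eqref{th-alt-23} is precisely the condition $c_M=o(M^{(1-2\gamma)/(3-2\gamma)})$ is the crux of why the integrated-Brownian term cannot cross first.
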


Next we consider the case when the sequence $y_t$ turns explosive after the change at time $M+s^*$. Now we replace Assumption \ref{unir} with
\begin{assumption}\label{expo-1}$\;\;\delta_{M,d}=\bar{\delta}_d\;\;\;\mbox{and}\;\; |\bar{\delta}_d|>1.$
\end{assumption}
Let
\beq\label{zdef}
Z_{M+s^*}=y_{M+s^*}+\sum_{z=1}^\infty \bar{\delta}_d^{-z } (\bw_{M+s^*+z}^\T(\bar{\bde}-\bar{\bbe}_0)+\eps_{M+s^*+z})
\eeq
and define
$$
F(x)=P\{ Z_{M+s^*}\leq x\}.
$$
It follows from Assumption \ref{bern} that the infinite series defining $Z_{M+s^*}$ is finite with probability 1.

\begin{theorem}\label{expo} If Assumptions \ref{bern}--\ref{a-adef}, \ref{destat-1}, \ref{expo-1} and
\beq\label{expos}
s^*/\log M\to 0
\eeq
hold, then we have for all $x$ that
\begin{align*}
\lim_{M\to \infty}P&\left\{\tau_M\leq s^*+x+((1/2-\gamma)\log |\bar{\delta}_d|)\log M+(\gamma\log |\bar{\delta}_d|)\log \log M \right\}\\
&=1-F\left(\displaystyle|\bar{\delta}_d|^{\displaystyle -x}  \frac{c\sigma |\bar{\delta}_d-1|}{|\bar{\delta}_d-\beta_{0,d}|} \right).
\end{align*}
\end{theorem}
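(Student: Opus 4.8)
The plan is to exploit that, under Assumption~\ref{expo-1}, the observations $y_t$ grow geometrically at rate $|\bar{\delta}_d|>1$ once $t>M+s^*$: the partial sums of the residuals, hence the detector $\Gamma(M,\cdot)$, then grow like $|\bar{\delta}_d|^{s-s^*}$, whereas the boundary $g(M,s)$ grows only polynomially in $M$. So I expect the first crossing to occur about $(1/2-\gamma)(\log M)/\log|\bar{\delta}_d|$ steps after the break, with the $O(1)$ fluctuation there governed by the random limiting value $Z_{M+s^*}$ of the explosive regime. Throughout I would use $\hat{\bbe}_M\stackrel{P}{\to}\bbe_0$ and $\hat\sigma_M\stackrel{P}{\to}\sigma$ (valid as in Lemma~\ref{hatga}, since the training sample is unaffected by the change) and \eqref{adef}. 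The first step is to rule out stopping on $\{1,\dots,s^*\}$: here $g$ is increasing in $s$ with $g(M,1)\sim cM^{1/2-\gamma}$, and since observations still obey \eqref{e-1} we have $\sum_{u=M+1}^{M+s}\hat{\eps}_u=\sum_{u=M+1}^{M+s}\eps_u+(\sum_{u=M+1}^{M+s}\bx_u)^\T(\bbe_0-\hat{\bbe}_M)$, so a maximal inequality for the weakly dependent sequences of Assumption~\ref{bern}, $\|\bbe_0-\hat{\bbe}_M\|=O_P(M^{-1/2})$ and \eqref{expos} give $\max_{1\le s\le s^*}\Gamma(M,s)=o_P(M^{1/2-\gamma})$; hence $P\{\tau_M\le s^*\}\to0$ and one may restrict to $s=s^*+k$, $k\ge1$.

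Next I would analyse the detector after the break. For $u>s^*$,
\begin{align*}
\hat{\eps}_{M+u}=\bx_{M+u}^\T(\bde_M-\hat{\bbe}_M)+\eps_{M+u}=(\bar{\delta}_d-\hat\beta_{M,d})\,y_{M+u-1}+\rho_{M+u},
\end{align*}
where $\rho_{M+u}$ is a weakly dependent, bounded-in-probability remainder free of $y_{M+u-1}$. Iterating the post-change recursion $y_{M+s^*+i}=\bar{\delta}_d\,y_{M+s^*+i-1}+(\text{stationary})$ gives $\bar{\delta}_d^{-i}y_{M+s^*+i}\stackrel{P}{\to}Z_{M+s^*}$ (the series in \eqref{zdef} is a.s.\ finite by Assumption~\ref{bern}), and a short algebraic identity then yields $\bar{\delta}_d^{-i}\hat{\eps}_{M+s^*+i}\stackrel{P}{\to}(\bar{\delta}_d-\beta_{0,d})Z_{M+s^*}$. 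Summing the geometric series, and using that the $\rho$-contribution is $O_P(k)=o_P(|\bar{\delta}_d|^k)$ and that $\sum_{u=M+1}^{M+s^*}\hat{\eps}_u=o_P(|\bar{\delta}_d|^k)$ by the first step, I would obtain, uniformly over $k$ in the relevant range,
\begin{align*}
\Gamma(M,s^*+k)=\frac{1}{\sigma}\left|\frac{\bar{\delta}_d-\beta_{0,d}}{\bar{\delta}_d-1}\right||Z_{M+s^*}|\,|\bar{\delta}_d|^{k}\,(1+o_P(1)).
\end{align*}

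Then I would treat the boundary and calibrate. Since $s^*+k=o(M)$ in the relevant range, $g(M,s^*+k)=cM^{1/2-\gamma}(s^*+k)^\gamma(1+o(1))$ and $(s^*+k)^\gamma=k^\gamma(1+o(1))$ by \eqref{expos}. On the explosive scale $j\mapsto\Gamma(M,s^*+j)$ is eventually strictly increasing while $g(M,s^*+j)$ is essentially flat, and for the initial $j$ where the geometric approximation is not yet in force $\Gamma(M,s^*+j)=O_P(1)<g(M,s^*+j)$; hence $\{\tau_M\le s^*+k\}=\{\Gamma(M,s^*+k)>g(M,s^*+k)\}$ asymptotically. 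Plugging in $k=k_M(x)$ as prescribed in the statement — the value for which $|\bar{\delta}_d|^{k}$ balances $M^{1/2-\gamma}(\log M)^\gamma|\bar{\delta}_d|^{x}$, so that the powers of $M$ and $\log M$ cancel between $\Gamma$ and $g$ — the crossing condition reduces, up to $o_P(1)$, to $|Z_{M+s^*}|>|\bar{\delta}_d|^{-x}\,c\sigma|\bar{\delta}_d-1|/|\bar{\delta}_d-\beta_{0,d}|$. Taking probabilities and using that $Z_{M+s^*}$ has the ($M$-free) distribution $F$, and that the threshold is a continuity point of $F$, gives the asserted limit $1-F(|\bar{\delta}_d|^{-x}c\sigma|\bar{\delta}_d-1|/|\bar{\delta}_d-\beta_{0,d}|)$.

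The main obstacle is the uniform control in the second step: showing that $\bar{\delta}_d^{-i}y_{M+s^*+i}\approx Z_{M+s^*}$ and that the accumulated lower-order terms are negligible uniformly over the random index $k\asymp\log M$, and — more delicately — that the resulting errors stay strictly below $g\asymp|\bar{\delta}_d|^{k}$. This works only because those errors carry an extra vanishing factor ($M^{-1/2}$ from estimating $\bbe_0$, $|\bar{\bde}_M-\bar{\bde}|$ from the drifting parameter, or $k|\bar{\delta}_d|^{-k}$ from the bounded terms), which must be isolated via the explicit decomposition of $\hat{\eps}_t$ and the geometric tails. A secondary point is that $\tau_M$ is integer valued while $k_M(x)$ need not be, so, as is standard for such extreme-value limits, the statement is to be read modulo this arithmetic effect.
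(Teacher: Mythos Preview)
Your proposal is correct and follows essentially the same route as the paper: rule out stopping before $s^*$ via the $O_P((s^*/M)^{1/2-\gamma})$ bound (the paper packages this as Lemma~\ref{cn-00}), decompose the post-change residuals to isolate the dominant $(\bar{\delta}_d-\beta_{0,d})\sum y_{u-1}$ term, use the explosive recursion to get $\bar{\delta}_d^{-(s-s^*)}\sum_{u=M+s^*+1}^{M+s}y_{u-1}\to Z_{M+s^*}/(\bar{\delta}_d-1)$ (the paper's Lemma~\ref{exle-1}), and then show the maximum is attained at the endpoint $\cQ$. The only cosmetic difference is that you decompose the individual residuals $\hat\eps_{M+u}$ whereas the paper decomposes the cumulative sum via \eqref{hepsum}; your observation about the integer-valuedness of $\tau_M$ versus the real $k_M(x)$ is a point the paper leaves implicit.
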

\medskip

Assumption \ref{expo} is often used  to find ``bubbles" in financial data. Phillips and Yu (2011) and Phillips et al.\ (2014, 2015a,b) estimated the autoregressive parameter in an AR(1) sequence and if the estimate is significantly larger than 1, a ``bubble" is detected. For a survey on ``bubble" detection we refer to Homm and Breiting (2012).

\section{Monte Carlo simulations}\label{sec-mc}
In this section we investigate the performance of our limit theorems in case of a finite training sample of size $M$. Preliminary results showed that the boundary $g(M,s)$ of \eqref{bou-1}  over rejects when $H_0$ holds. The false positive rates were improved when the boundary function
\beq\label{modbou}
\hat{g}(M,s)=c\left(1+\frac{(1+\gamma)\hat{\sigma}_M}{M^{1/2}}\right)M^{1/2}\left(1+\frac{s}{M}\right)\left(\frac{s}{s+M}\right)^\gamma,
\eeq
where $c=c(\gamma, \alpha)$. The values of $c(\gamma, \alpha)$ are defined from the equation
\beq\label{wewi}
P\left\{\sup_{0\leq u \leq 1}\frac{|W(u)|}{u^\gamma}>c(\gamma, \alpha)\right\}=\alpha.
\eeq
Since the correction term
$$
1+\frac{(1+\gamma)\hat{\sigma}_M}{M^{1/2}}\;\stackrel{P}{\lra}\;1
$$
under the conditions of Theorem \ref{th-null}
\beq\label{mod-nu}
\lim_{M\to \infty}P\left\{  \frac{\Gamma(M,s)}{\hat{g}(M,s)} >1\;\;\mbox{for some }\;s\geq 1  \right\}=\alpha
\eeq
and under the alternatives in Theorems \ref{th-alt-0}--\ref{expo}
\beq\label{mod-alt}
\lim_{M\to \infty}P\left\{  \frac{\Gamma(M,s)}{\hat{g}(M,s)} >1\;\;\mbox{for some }\;s\geq   1\right\}=1.
\eeq
\begin{table}[H]
\caption{Selected critical values $c(\gamma, \alpha)$ of \eqref{wewi}.}
		\centering	
		\resizebox{12cm}{!}{	
			\begin{threeparttable}
				\begin{tabular}{lllllllllllllllll}
					\toprule
					$\gamma$ & & & $\alpha$ & & &  & & &   & & &   & & &   \\
					\cmidrule{4-16}   & & & 0.010 & & & 0.025 & & & 0.050 & & & 0.100  & & & 0.250  \\
					\midrule
					0.00   & & & 2.7912 & & & 2.4948 & & & 2.2365 & & & 1.9497 & & & 1.5213 \\
					0.15   & & & 2.8516 & & &  2.5475 & & & 2.2996 & & & 2.0273 & & & 1.6126  \\
					0.25  & & & 2.9445 & & &  2.6396 & & & 2.3860 & & & 2.1060 & & & 1.7039  \\
					0.35  & & & 3.0475 & & &  2.7394 & & & 2.5050 & & & 2.2433 & & & 1.8467  \\
					0.45  & & & 3.3015 & & &  3.0144 & & & 2.7992 & & & 2.5437 & & & 2.1729  \\
					0.49  & & & 3.5705 & & &  3.2944 & & & 3.0722 & & & 2.8259 & & & 2.4487  \\
					\bottomrule
				\end{tabular}
				\label{tab-1}		
		\end{threeparttable}}		
	\end{table}
\begin{figure}[H]
		\caption{The rate of false detections, in percentages,  in case of DGP(i) at times $M, 2M, \ldots ,10M$ after the training sample.}	
	\hspace*{-1.4cm}	\includegraphics[width=7.45in]{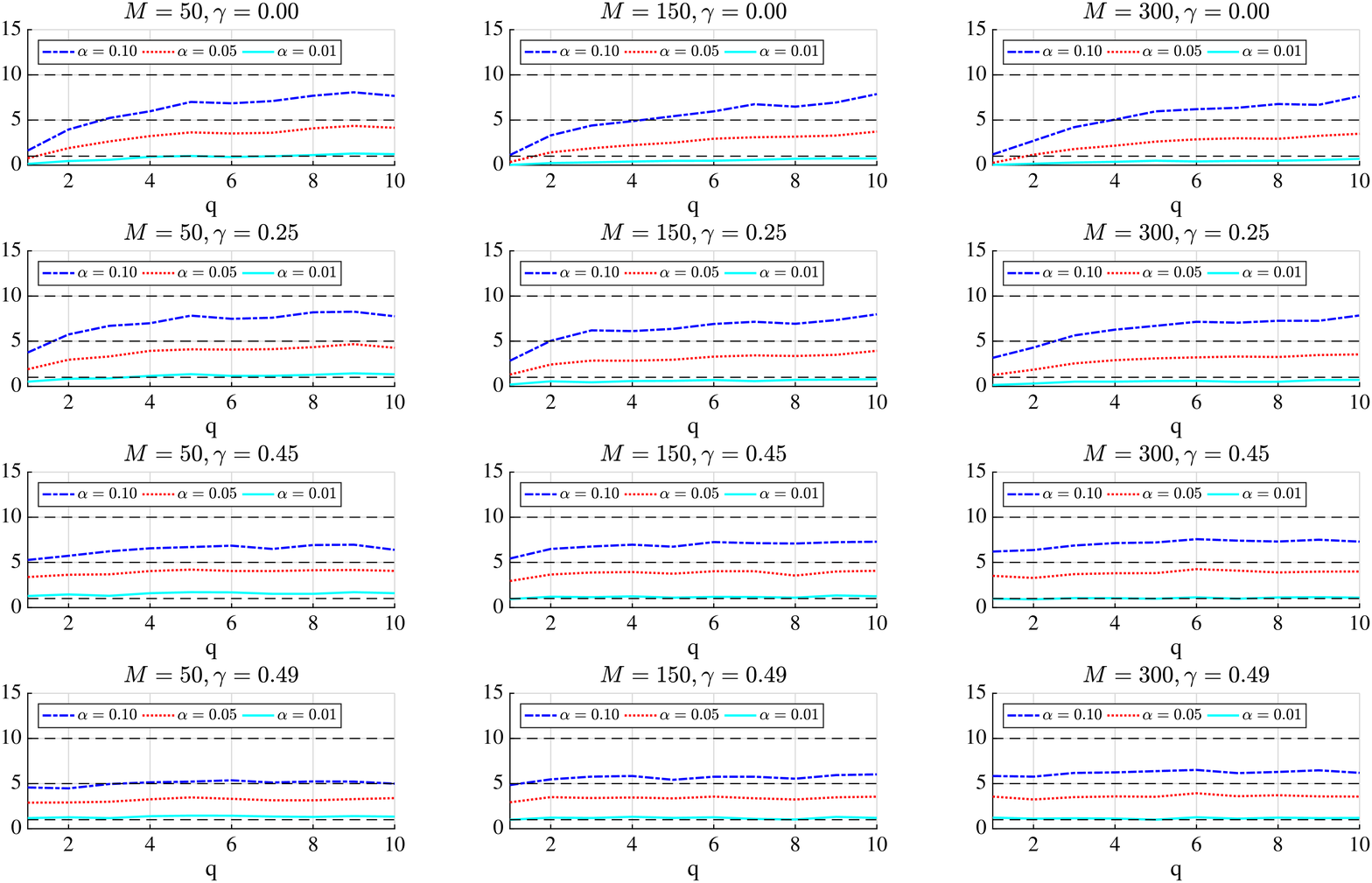}	
		\label{sim-null-f1}
	\end{figure}
The critical values of \eqref{wewi} were reported in Horv\'ath et al.\ (2004) and for convenience we provide them in Table \ref{tab-1}. The results in Table \ref{tab-1} are based on $50,000$  repetitions of   $\sup_{0 \leq u \leq 1} |W(u)|/u^{\gamma}$. The Wiener process was approximated on a grid of 10,000 equi--spaced points in [0,1]. We chose $d=6$ in our simulations and under the null hypothesis $\bar{\bbe}_0=(.02, .20, .25, .15, -.20)^\T $ and the autoregressive parameter was $\beta_{0,6}=.25.$ Our procedure is open ended but, of course, during the simulations we stopped the testing after additional  $M, 2M,\ldots, 10M$ observations were collected in the detection period. In Figures \ref{sim-null-f1}--\ref{sim-null-f4} we exhibit the number of false alarms before time $iM, 1\leq i \leq 10$. We used the boundary function $\hat{g}(M,s)$ of \eqref{modbou} with $\gamma=0, .25, .45, .49$ and the  size of the training sample was $M=50, 150$ and $300$. The results are based on 10,000 repetitions. Under the null hypothesis we considered the following data generating processes:\\
DGP(i)\;
\beq\label{ar-1/1}
x_{t,k}=\rho_k x_{t-1,k}+\eta_{t,k}, \quad \;\;2\leq k\leq 5, -\infty<t<\infty,
\eeq
 where the $\eta_{t,k}$'s are independent, identically distributed standard normal random variables. Also, the $\epsilon_t$ forms a GARCH(1,1) process defined by
 \beq\label{garch-1}
 \epsilon_t=\sigma_{t,\epsilon}h_{t,\epsilon}\quad \sigma^2_{t,\epsilon}=.2+.3\epsilon_{t-1}^2+.3\sigma_{t-1,\epsilon}^2,\;\;-\infty<t<\infty,
 \eeq
where the $h_{t,\epsilon}$'s  are independent, standard normal random variables, independent of $\{\eta_{t,k}, -\infty<t<\infty, 2\leq k \leq 5\}$. We used $\brho=(\rho_2, \rho_3, \rho_4, \rho_{5})^\T=(.15, .20, .10, .30)^\T$ to get the values in Figure \ref{sim-null-f1}.\\
DGP(ii)\; $x_{t,k}, 2\leq k \leq 5, -\infty<t<\infty$ satisfy \eqref{ar-1/1} but now $\eta_{t,2}=\ldots =\eta_{t,5}, -\infty<t<\infty$ are independent and identically distributed standard normal random  variables, independent of $\{\epsilon_t, -\infty<t<\infty\}$. The variables $\{\epsilon_t, -\infty<t<\infty\}$ are independent standard normal random variables.\\
\begin{figure}[H]
		\caption{The rate of false detections, in percentages,  in case of DGP(ii) at times $M, 2M, \ldots ,10M$ after the training sample.}
\hspace*{-1.4cm}		\includegraphics[width=7.45in]{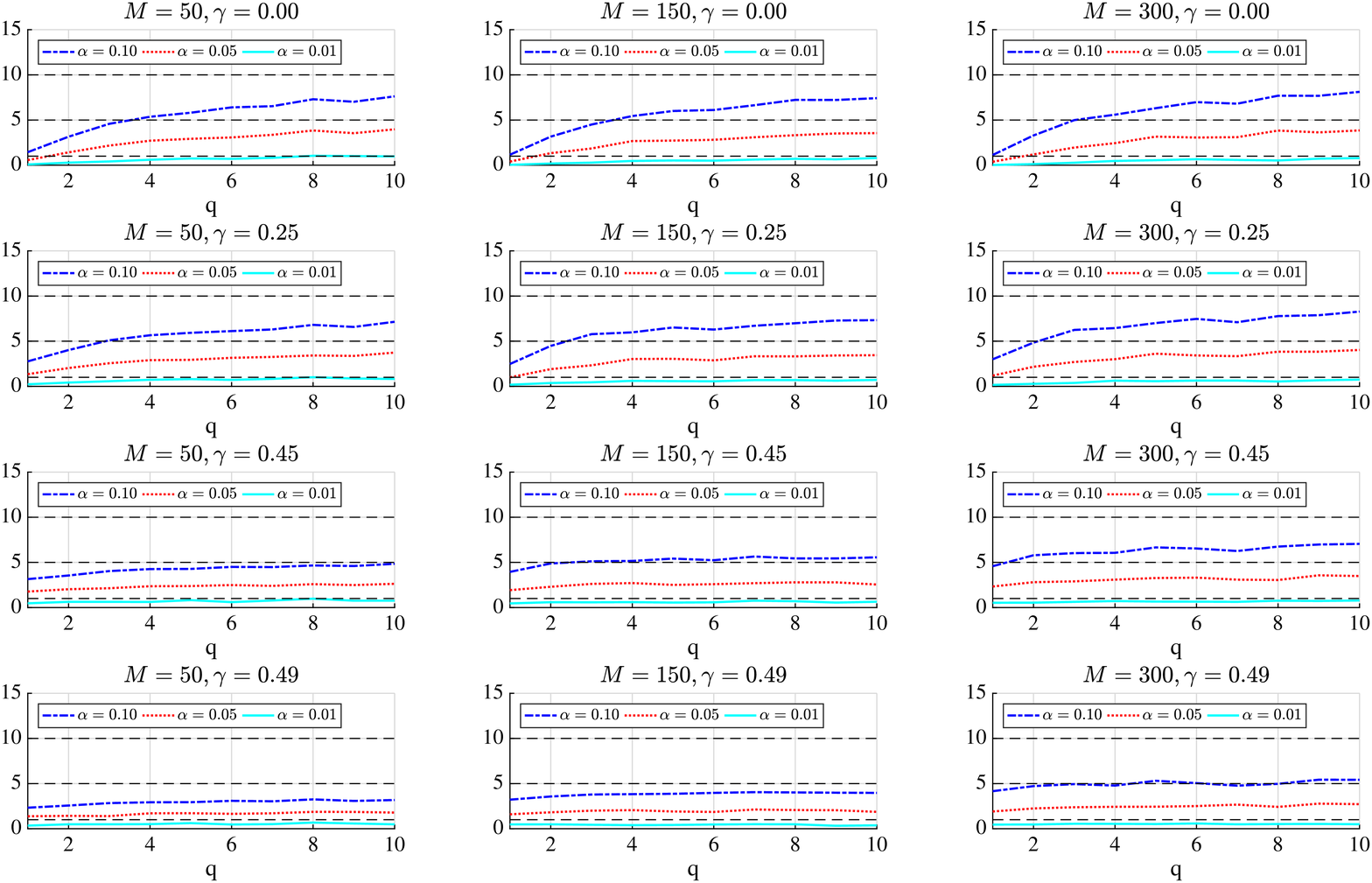}		
		\label{sim-null-f2}
	\end{figure}
DGP(iii) Now in addition to \eqref{garch-1}, the explanatory sequences are also given by GARCH(1,1) processes
\beq\label{garch-2}
 x_{t,k}  = \sigma_{t,k} h_{t,k}\mbox{, }\sigma_{t, k}^2  =  \omega_k +  \phi_k x_{t-1,k}^2 + \psi_k \sigma_{t-1, k}^2, \quad 2 \leq k \leq 5, -\infty<t<\infty,
\eeq
where the innovations $\{h_{t,k}, -\infty<t<\infty, 2\leq k\leq 5\}$ are standard normal random variables, independent of $\{h_{t,\epsilon}, -\infty<t<\infty\}$ of DGP(i). We used $(\omega_2, \ldots ,\omega_5)=(.3, .5, .4, .6)$,  $(\phi_2,\ldots ,\phi_5)=(.5, .3, .2, .6)$ and
$(\psi_2, \ldots ,\psi_5)=(.2, .3, .6, .2)$. \\
DGP(iv) The explanatory variables satisfy \eqref{garch-2} but now $h_{t,2}=h_{t,3}=h_{t,4} =h_{t,5}$ which are independent and identically distrubuted standard normal random variables. The variables $\{\epsilon_t, -\infty<t<\infty\}$  are independent, standard normal random, independent of  $\{h_{t,k}, -\infty<t<\infty, 2\leq k\leq 5\}$.\\

In our  Monte Carlo simulations the variables $\{(x_{t,2}, \ldots, x_{t,5}), -\infty<t<\infty\}$ and $\{\epsilon_t, -\infty<t<\infty\}$ are independent. In case of DGP(i) and (iii), the coordinates of $(x_{t,2}, \ldots, x_{t,5}) $ are independent while strongly dependent under DGP(ii) and (iv). The simulation results in Figures \ref{sim-null-f1}--\ref{sim-null-f4} show good performance, the empirical rate of false detections is at the described level. The structure of the ${\bw}_t$'s has little effect on false detection.\\

\begin{figure}[H]
		\caption{The rate of false detections, in percentages,  in case of DGP(iii) at times $M, 2M, \ldots ,10M$ after the training sample.}		
	\hspace*{-1.4cm}	\includegraphics[width=7.45in]{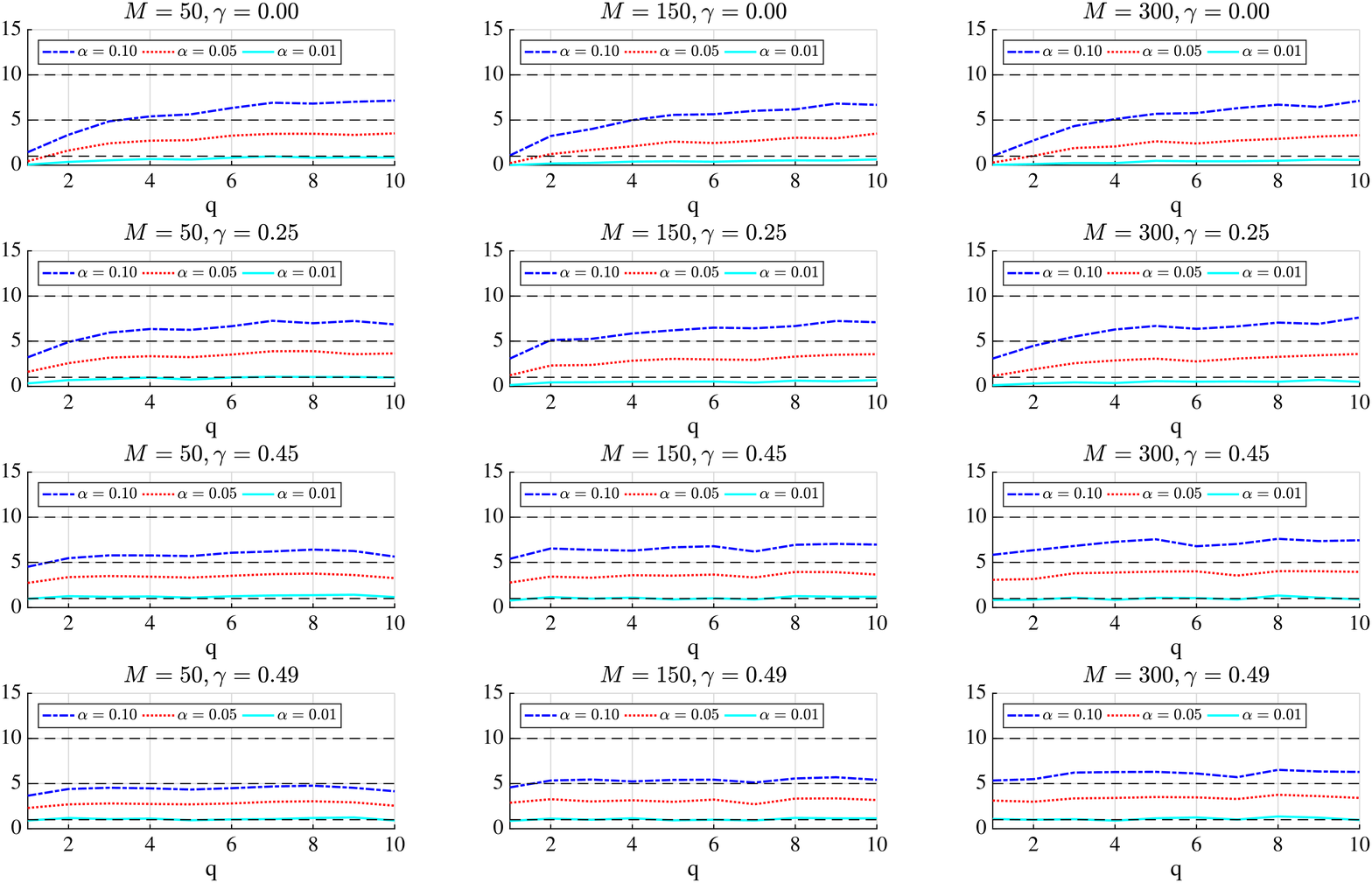}	
		\label{sim-null-f3}	
	\end{figure}

	\begin{figure}[H]
		\caption{The rate of false detections, in percentages,  in case of DGP(iv) at times $M, 2M, \ldots ,10M$ after the training sample.}	
	\hspace*{-1.4cm}	\includegraphics[width=7.45in]{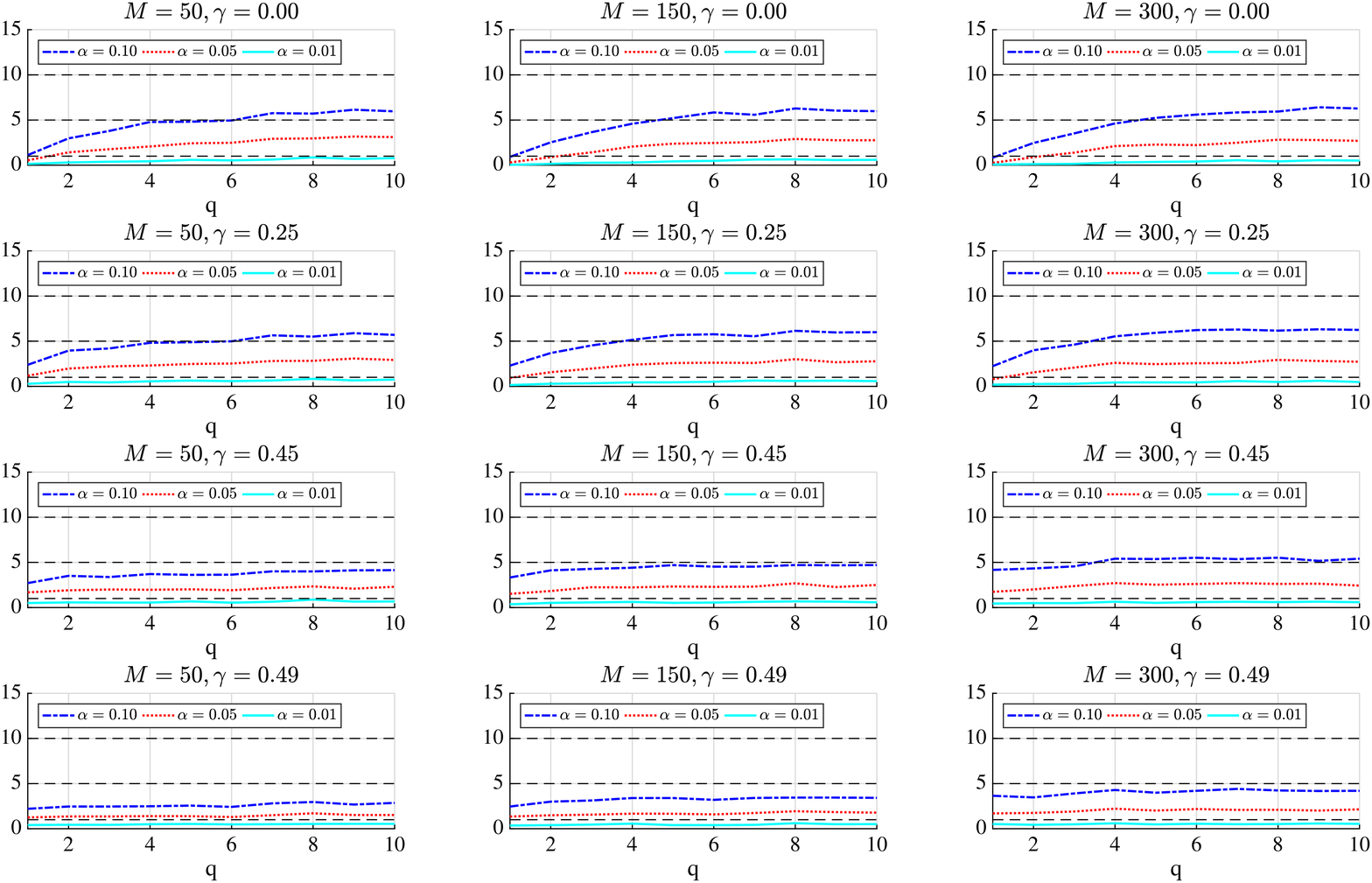}	
		\label{sim-null-f4}	
	\end{figure}

Next we consider the behaviour of the monitoring scheme under the alternatives discussed in Theorems \ref{th-alt-0}--\ref{expo}. We recall that under $H_A$
\begin{equation}\label{sim-alt-eq1}
 y_t = \left \{ \begin{aligned}
	    &{\bw}_t^\top \bar{\bbe}_0 + y_{t-1} \beta_{0,d} + \eps_t, \quad & 1 \leq t \leq M+s^* \\
	    &{\bw}_t^\top \bar{\bde}_M + y_{t-1} \delta_{M,d} + \eps_t, \quad & M+s^*+1 \leq t \leq T.
	\end{aligned} \right.
\end{equation}
The explanatory variables $(x_{t,2}, x_{t,3}, x_{t,4}, x_{t,5})$ are generated as in DGP(ii), i.e.\ dependent AR(1) sequences. The variables $\epsilon_t$ are independent standard  normals or GARCH (1,1) sequences. As before, we used the boundary function $\hat{g}(M,s)$ of \eqref{modbou}. The significance levels were $\alpha=.10, .05, .01$ and $s^*=1, 10$.   We considered the following data generating processes:\\
DGP(v) We used the initial values $\bar{\bbe}_0 = (.02, .20, .25, .15, -.20)^\T, \beta_{0,6} = .25 $ which changes to
$\bar{\bde}_M = (.04, 1.60, .75, .55, 1.20)^\T, \delta_{M,6} = .60$ at time $s^*$ after the training sample. The errors $\epsilon_t$ are independent and identically distributed random variables, independent of $\{(x_{t,2}, x_{t,3}, x_{t,4},$ $x_{t,5}), -\infty<t<\infty\}$.\\
DGP(vi) The data generating process is as in DGP(v) but now $\epsilon_t$ is given by the GARCH (1,1) sequence
\beq\label{garch-4}
\eps_t = \sigma_{t, \eps}h_{t, \eps},\quad \sigma_{t, \eps}^2 = .2+.3\eps_{t-1}^2+.3\sigma_{t-1, \eps}^2,-\infty<t<\infty,
\eeq
where $\{h_{t, \eps}, -\infty<t<\infty\}$ are independent standard normal random variables, independent of $\{(x_{t,2}, \ldots , x_{t,5}), -\infty<t<\infty\}$.\\
DGP(vii). In this case $\bar{\bbe}_0 = (.02, .20, .25, .15, -.20)^\T=\bar{\bde}_M$, but $\beta_{0,6}=.25$  changes to $\delta_{M,6}=.9, .95, .99$ and $1$. As in DGP(v), the $\epsilon_t$'s are independent standard normals, independent of $\{(x_{t,2}, x_{t,3}, x_{t,4}, x_{t,5}), -\infty<t<\infty\}$.\\
DGP(viii) We have the same parameters as in DGP(vii) but now $\epsilon_t$ is a GARCH(1,1) sequence satisfying \eqref{garch-4}. \\
DGP(ix)   The initial values $\bbe_0$ are the same as in DGP(v)--DGP(viii) but now $\bar{\bde}_M = (.04, 1.60, .75, .55, 1.20)^\T$ as in DGP(v) but $\delta_{M,6}=.9, .95, .99$ and $1$. The variables $\epsilon_t$ are independent standard normals.\\
DGP(x) The assumptions are the same as in DGP(ix) but now we use the GRACH(1,1) sequence of \eqref{garch-4} to generate the $\epsilon_t$'s.\\
DGP(xi)  The values of $\bbe_0$ and $\bar{\bde}_M$ are the same as in DGP(ix) and DGP(x), but now  $\delta_{M, 6}$ = 1.01, 1.05, 1.10 and 1.25. The variables $\eps_t$ are independent standard normals. \\
DGP(xii) The assumptions are the same as in DGP(xi) but now we use the GRACH(1,1) sequence of \eqref{garch-4} to generate the $\eps_t$'s. \\

The results of the simulations are given in Tables \ref{sim-alt-t1}--\ref{sim-alt-t7}. The empirical probability of stopping under the alternative is high in all cases we considered. The power increases with $\gamma$ except with slight drop at $\gamma=.49$ which is very close to the boundary case. The rate of convergence to the limit slows with the increase of $\gamma$ which is a possible explanation for the unexpected slight drop in power. Also the results show that our method is tailored to detect early changes, i.e.\ when $s^*$ is small. As expected, the power is increasing in Tables \ref{sim-alt-t2}--\ref{sim-alt-t5} as $\delta_{M,6}$ gets closer to 1. Allowing  $\bar{\bde}_M$ to differ, we increased the power substantially for $\delta_{M,6}=.9 $ and $.95$ but only mildly for  $\delta_{M,6}=.99 $ and 1. In this case the change to partial sum dominates the power. Based on our simulation study, we recommend  $\gamma=.45$ to achieve fast and reliable detection. This recommendation is also confirmed in Figures \ref{sim-alt-f1}--\ref{sim-alt-f4}, where the empirical density  of the stopping time $\tau_M$ is exhibited under different  assumptions. We note that according to Theorems \ref{alt-stat} and \ref{th-alt-3}, the limit distributions on Figure \ref{sim-alt-f1} and \ref{sim-alt-f3} can be approximated with normal densities as $M\to \infty$. The empirical densities have longer right tails than a normal density but they are clearly approaching a normal density. By Theorem \ref{th-alt-2}, the limits of the empirical densities on Figure \ref{sim-alt-f2} are not normal densities (cf.\ Remark \ref{rem-1}). The limit distribution in Theorem \ref{expo} is not necessarily normal. However, if the $\{\bw_t, \eps_t, -\infty<t<\infty\}$ are jointly normal, then the variable $Z_{M+s^*}$ of \eqref{zdef} is normally distributed. In Figure \ref{sim-alt-f4} the exhibited density is not derived from a normal distribution due to \eqref{garch-4}, the errors $\eps_t$ are only conditionally normal. Comparing Figures \ref{sim-alt-f1}--\ref{sim-alt-f4}, one sees that the limit distributions are getting less spread as $\delta_{M,6}$ increases, i.e.\ we need less and less observations to detect the change.
\begin{table}[H]
\caption{Empirical power of the sequential change--point monitoring  scheme under DGP(v) and DGP(vi).}	
		\centering	
		\resizebox{14cm}{!}{	
			\begin{threeparttable}
				\begin{tabular}{lllllllllllll}
					\toprule
					& \multicolumn{3}{c}{$M = 50, s^* = 1$} & \multicolumn{3}{c}{$M = 100, s^* = 1$} & \multicolumn{3}{c}{$M = 50, s^* = 10$} & \multicolumn{3}{c}{$M = 100, s^* = 10$} \\
					\cmidrule{2-13}
					$\gamma/\alpha$ & 0.10  & 0.05  & 0.01  & 0.10  & 0.05  & 0.01  & 0.10  & 0.05  & 0.01  & 0.10  & 0.05  & 0.01  \\
					\midrule
					& \multicolumn{12}{l}{\hspace{6cm} DGP(v)} \\
					\midrule
					0     & 87.64  & 80.75  & 62.63  & 99.64  & 99.07  & 95.98  & 72.68  & 61.53  & 40.26  & 98.78  & 97.53  & 92.44  \\
					0.25  & 91.00  & 86.08  & 72.70  & 99.75  & 99.52  & 97.91  & 77.57  & 68.88  & 50.09  & 99.29  & 98.52  & 95.70  \\
					0.45  & 89.67  & 85.29  & 74.78  & 99.71  & 99.47  & 98.25  & 73.80  & 66.07  & 51.13  & 99.03  & 98.39  & 96.06  \\
					0.49  & 86.31  & 81.29  & 70.45  & 99.54  & 99.22  & 97.71  & 67.56  & 60.53  & 45.82  & 98.53  & 97.66  & 94.61  \\
					\midrule
					& \multicolumn{12}{l}{\hspace{6cm} DGP(vi)} \\
					\midrule
					0     & 98.53  & 97.08  & 91.73  & 99.96  & 99.90  & 99.76  & 94.23  & 90.18  & 78.52  & 99.96  & 99.91  & 99.38  \\
					0.25  & 99.14  & 98.23  & 94.57  & 99.97  & 99.95  & 99.86  & 95.70  & 93.25  & 84.49  & 99.99  & 99.94  & 99.65  \\
					0.45  & 98.80  & 97.88  & 95.00  & 99.98  & 99.95  & 99.86  & 94.75  & 92.19  & 85.10  & 99.95  & 99.94  & 99.69  \\
					0.49  & 98.12  & 97.09  & 93.90  & 99.95  & 99.92  & 99.81  & 92.74  & 89.91  & 82.30  & 99.94  & 99.90  & 99.62  \\
					\bottomrule
				\end{tabular}
				\label{sim-alt-t1}			
		\end{threeparttable}}		
	\end{table}

	\begin{table}[H]
\caption{Empirical power of the sequential change--point monitoring  scheme under DGP(vii) .}	
		\centering	
		\resizebox{14cm}{!}{	
			\begin{threeparttable}
				\begin{tabular}{llllllllllllll}
					\toprule
					& & \multicolumn{3}{c}{$M = 50, s^* = 1$} & \multicolumn{3}{c}{$M = 100, s^* = 1$} & \multicolumn{3}{c}{$M = 50, s^* = 10$} & \multicolumn{3}{c}{$M = 100, s^* = 10$} \\
					\cmidrule{3-14}
					& $\gamma/\alpha$ & 0.10  & 0.05  & 0.01  & 0.10  & 0.05  & 0.01  & 0.10  & 0.05  & 0.01  & 0.10  & 0.05  & 0.01  \\
					\midrule
					\multirow{4}[2]{*}{$\delta_{M,6} = 0.90$} & 0     & 73.61  & 65.89  & 52.28  & 88.13  & 82.27  & 70.43  & 62.16  & 54.38  & 40.15  & 84.16  & 77.53  & 64.44  \\
					& 0.25  & 81.25  & 74.96  & 62.56  & 93.96  & 90.45  & 81.93  & 68.30  & 61.39  & 48.28  & 90.07  & 85.64  & 74.94  \\
					& 0.45  & 82.13  & 77.26  & 67.44  & 95.33  & 93.00  & 87.26  & 66.81  & 61.00  & 50.56  & 90.69  & 87.31  & 79.50  \\
					& 0.49  & 78.92  & 74.49  & 65.13  & 94.13  & 91.80  & 86.25  & 62.63  & 57.62  & 47.11  & 88.50  & 85.22  & 77.39  \\
					\midrule
					\multirow{4}[2]{*}{$\delta_{M,6} = 0.95$} & 0     & 86.12  & 81.37  & 71.56  & 96.32  & 94.45  & 89.09  & 76.66  & 70.76  & 59.95  & 94.63  & 91.81  & 85.13  \\
					& 0.25  & 90.51  & 86.91  & 78.61  & 98.24  & 97.05  & 94.09  & 81.09  & 75.91  & 66.23  & 96.91  & 95.33  & 90.52  \\
					& 0.45  & 90.68  & 88.00  & 81.87  & 98.61  & 97.97  & 95.99  & 79.84  & 75.47  & 67.84  & 97.11  & 95.88  & 92.70  \\
					& 0.49  & 88.95  & 86.32  & 80.19  & 98.27  & 97.56  & 95.60  & 76.43  & 72.82  & 65.31  & 96.32  & 95.05  & 91.66  \\
					\midrule
					\multirow{4}[2]{*}{$\delta_{M,6} = 0.99$} & 0     & 92.84  & 90.41  & 84.54  & 98.99  & 98.36  & 96.96  & 86.69  & 82.86  & 75.06  & 98.74  & 97.80  & 95.51  \\
					& 0.25  & 95.11  & 93.22  & 88.93  & 99.60  & 99.27  & 98.34  & 89.33  & 86.14  & 79.35  & 99.34  & 98.83  & 97.34  \\
					& 0.45  & 95.22  & 93.76  & 90.50  & 99.70  & 99.51  & 98.91  & 88.48  & 85.96  & 80.46  & 99.34  & 98.97  & 97.95  \\
					& 0.49  & 94.23  & 92.87  & 89.39  & 99.60  & 99.39  & 98.78  & 86.66  & 83.87  & 78.58  & 99.11  & 98.71  & 97.68  \\
					\midrule
					\multirow{4}[2]{*}{$\delta_{M,6} = 1.00$} & 0     & 93.67  & 91.50  & 86.86  & 99.36  & 98.88  & 97.78  & 88.50  & 85.55  & 78.53  & 99.15  & 98.49  & 96.91  \\
					& 0.25  & 95.49  & 93.86  & 90.25  & 99.74  & 99.52  & 98.85  & 90.62  & 88.16  & 82.35  & 99.56  & 99.18  & 98.20  \\
					& 0.45  & 95.70  & 94.32  & 91.73  & 99.83  & 99.66  & 99.34  & 89.97  & 87.93  & 83.29  & 99.50  & 99.27  & 98.60  \\
					& 0.49  & 94.72  & 93.58  & 90.83  & 99.72  & 99.61  & 99.16  & 88.54  & 86.44  & 81.68  & 99.35  & 99.16  & 98.42  \\
					\bottomrule
				\end{tabular}
				\label{sim-alt-t2}			
		\end{threeparttable}}		
	\end{table}

\begin{table}[H]	
\caption{Empirical power of the sequential change--point monitoring  scheme under DGP(viii).}
		\centering	
		\resizebox{14cm}{!}{	
			\begin{threeparttable}
				\begin{tabular}{llllllllllllll}
					\toprule
					& & \multicolumn{3}{c}{$M = 50, s^* = 1$} & \multicolumn{3}{c}{$M = 100, s^* = 1$} & \multicolumn{3}{c}{$M = 50, s^* = 10$} & \multicolumn{3}{c}{$M = 100, s^* = 10$} \\
					\cmidrule{3-14}
					& $\gamma/\alpha$ & 0.10  & 0.05  & 0.01  & 0.10  & 0.05  & 0.01  & 0.10  & 0.05  & 0.01  & 0.10  & 0.05  & 0.01  \\
					\midrule
					\multirow{4}[2]{*}{$\delta_{M,6} = 0.90$} & 0     & 75.13  & 68.33  & 55.77  & 88.99  & 84.01  & 74.10  & 64.15  & 57.01  & 44.35  & 85.80  & 80.24  & 68.90  \\
					& 0.25  & 82.01  & 76.59  & 65.66  & 94.00  & 91.02  & 83.63  & 70.26  & 63.71  & 51.88  & 90.93  & 87.07  & 77.96  \\
					& 0.45  & 82.89  & 78.76  & 70.23  & 95.14  & 93.08  & 88.17  & 69.26  & 64.15  & 54.44  & 91.39  & 88.58  & 81.78  \\
					& 0.49  & 80.38  & 76.63  & 68.24  & 94.06  & 91.94  & 87.24  & 65.49  & 60.84  & 51.73  & 89.50  & 86.69  & 80.09  \\
					\midrule
					\multirow{4}[2]{*}{$\delta_{M,6} = 0.95$} & 0     & 86.77  & 82.47  & 73.56  & 96.73  & 94.85  & 90.20  & 77.44  & 72.22  & 61.89  & 95.15  & 92.93  & 87.36  \\
					& 0.25  & 90.70  & 87.57  & 80.45  & 98.38  & 97.38  & 94.43  & 81.78  & 77.03  & 68.42  & 97.07  & 95.65  & 91.87  \\
					& 0.45  & 91.18  & 88.73  & 83.39  & 98.60  & 97.93  & 96.18  & 80.99  & 77.18  & 70.27  & 97.06  & 96.19  & 93.42  \\
					& 0.49  & 89.68  & 87.36  & 81.97  & 98.33  & 97.56  & 95.75  & 78.23  & 74.78  & 67.93  & 96.51  & 95.49  & 92.63  \\
					\midrule
					\multirow{4}[2]{*}{$\delta_{M,6} = 0.99$} & 0     & 93.38  & 90.80  & 86.08  & 99.23  & 98.73  & 97.24  & 86.59  & 82.91  & 76.30  & 98.60  & 97.75  & 95.80  \\
					& 0.25  & 95.32  & 93.72  & 89.67  & 99.55  & 99.34  & 98.58  & 89.39  & 86.40  & 80.28  & 99.06  & 98.70  & 97.48  \\
					& 0.45  & 95.51  & 94.18  & 91.20  & 99.66  & 99.46  & 98.96  & 88.82  & 86.26  & 81.38  & 99.07  & 98.76  & 97.89  \\
					& 0.49  & 94.68  & 93.34  & 90.58  & 99.53  & 99.38  & 98.85  & 86.99  & 84.61  & 80.07  & 98.90  & 98.43  & 97.67  \\
					\midrule
					\multirow{4}[2]{*}{$\delta_{M,6} = 1.00$} & 0     & 94.48  & 92.64  & 88.34  & 99.48  & 99.07  & 98.14  & 88.52  & 85.27  & 79.45  & 99.01  & 98.51  & 97.22  \\
					& 0.25  & 96.28  & 94.69  & 91.41  & 99.71  & 99.55  & 99.05  & 90.78  & 88.20  & 82.92  & 99.34  & 99.02  & 98.22  \\
					& 0.45  & 96.32  & 95.20  & 92.71  & 99.77  & 99.63  & 99.32  & 90.32  & 88.17  & 83.80  & 99.38  & 99.14  & 98.53  \\
					& 0.49  & 95.55  & 94.44  & 91.97  & 99.74  & 99.58  & 99.24  & 88.80  & 86.81  & 82.64  & 99.19  & 98.97  & 98.31  \\
					\bottomrule
				\end{tabular}
				\label{sim-alt-t3}			
		\end{threeparttable}}		
	\end{table}

\begin{table}[H]
\caption{Empirical power of the sequential change--point monitoring  scheme under DGP(ix).}
\centering	
		\resizebox{14cm}{!}{	
			\begin{threeparttable}
				\begin{tabular}{llllllllllllll}
				\toprule	
					& & \multicolumn{3}{c}{$M = 50, s^* = 1$} & \multicolumn{3}{c}{$M = 100, s^* = 1$} & \multicolumn{3}{c}{$M = 50, s^* = 10$} & \multicolumn{3}{c}{$M = 100, s^* = 10$} \\	
					\cmidrule{3-14}
					& $\gamma/\alpha$ & 0.10  & 0.05  & 0.01  & 0.10  & 0.05  & 0.01  & 0.10  & 0.05  & 0.01  & 0.10  & 0.05  & 0.01  \\
					\midrule
					\multirow{4}[2]{*}{$\delta_{M,6} = 0.90$} & 0     & 90.28  & 86.93  & 80.03  & 98.74  & 98.01  & 95.81  & 81.91  & 77.41  & 68.04  & 97.90  & 96.75  & 94.03  \\
					& 0.25  & 93.28  & 90.73  & 84.82  & 99.36  & 99.01  & 97.68  & 85.13  & 81.13  & 73.23  & 98.84  & 98.06  & 96.02  \\
					& 0.45  & 93.53  & 91.35  & 86.92  & 99.51  & 99.23  & 98.46  & 83.84  & 80.78  & 74.35  & 98.84  & 98.30  & 96.64  \\
					& 0.49  & 92.14  & 90.05  & 85.37  & 99.38  & 99.10  & 98.22  & 81.44  & 78.20  & 71.89  & 98.51  & 97.84  & 96.27  \\
					\midrule
					\multirow{4}[2]{*}{$\delta_{M,6} = 0.95$} & 0     & 95.29  & 93.55  & 89.47  & 99.75  & 99.51  & 98.89  & 90.00  & 86.92  & 80.91  & 99.42  & 99.09  & 98.07  \\
					& 0.25  & 96.90  & 95.56  & 92.63  & 99.92  & 99.78  & 99.41  & 91.93  & 89.54  & 84.02  & 99.70  & 99.48  & 98.87  \\
					& 0.45  & 96.94  & 95.91  & 93.84  & 99.93  & 99.86  & 99.61  & 91.25  & 89.16  & 84.76  & 99.63  & 99.48  & 99.12  \\
					& 0.49  & 96.23  & 95.30  & 93.07  & 99.88  & 99.82  & 99.54  & 89.69  & 87.65  & 83.25  & 99.54  & 99.37  & 98.96  \\
					\midrule
					\multirow{4}[2]{*}{$\delta_{M,6} = 0.99$} & 0     & 97.61  & 96.79  & 94.85  & 99.94  & 99.90  & 99.67  & 94.07  & 92.33  & 88.86  & 99.82  & 99.78  & 99.45  \\
					& 0.25  & 98.40  & 97.68  & 96.40  & 99.97  & 99.95  & 99.88  & 95.27  & 93.92  & 90.87  & 99.89  & 99.82  & 99.70  \\
					& 0.45  & 98.47  & 97.96  & 96.87  & 99.97  & 99.95  & 99.92  & 94.90  & 93.68  & 91.31  & 99.89  & 99.85  & 99.76  \\
					& 0.49  & 98.08  & 97.63  & 96.48  & 99.97  & 99.95  & 99.91  & 93.98  & 92.92  & 90.52  & 99.87  & 99.82  & 99.72  \\
					\midrule
					\multirow{4}[2]{*}{$\delta_{M,6} = 1.00$} & 0     & 98.02  & 97.18  & 95.47  & 99.97  & 99.89  & 99.79  & 94.92  & 93.37  & 90.47  & 99.85  & 99.80  & 99.69  \\
					& 0.25  & 98.62  & 97.99  & 96.68  & 99.98  & 99.97  & 99.87  & 96.13  & 94.69  & 92.13  & 99.94  & 99.86  & 99.78  \\
					& 0.45  & 98.66  & 98.05  & 97.18  & 99.98  & 99.98  & 99.92  & 95.73  & 94.45  & 92.50  & 99.95  & 99.89  & 99.80  \\
					& 0.49  & 98.21  & 97.80  & 96.83  & 99.98  & 99.97  & 99.91  & 94.75  & 93.73  & 91.82  & 99.89  & 99.86  & 99.78  \\
					\bottomrule
				\end{tabular}
				\label{sim-alt-t4}			
		\end{threeparttable}}		
	\end{table}

	\begin{table}[H]	
\caption{Empirical power of the sequential change--point monitoring  scheme under DGP(x).}
		\centering	
		\resizebox{14cm}{!}{	
			\begin{threeparttable}
				\begin{tabular}{llllllllllllll}
					\toprule
					& & \multicolumn{3}{c}{$M = 50, s^* = 1$} & \multicolumn{3}{c}{$M = 100, s^* = 1$} & \multicolumn{3}{c}{$M = 100, s^* = 5$} & \multicolumn{3}{c}{$M = 150, s^* = 5$} \\
					\cmidrule{3-14}
					& $\gamma/\alpha$ & 0.10  & 0.05  & 0.01  & 0.10  & 0.05  & 0.01  & 0.10  & 0.05  & 0.01  & 0.10  & 0.05  & 0.01  \\
					\midrule
					\multirow{4}[2]{*}{$\delta_{M,6} = 0.90$} & 0     & 97.23  & 95.82  & 92.95  & 99.89  & 99.80  & 99.59  & 93.21  & 90.62  & 85.31  & 99.80  & 99.63  & 99.23  \\
					& 0.25  & 98.24  & 97.34  & 94.90  & 99.97  & 99.92  & 99.75  & 94.66  & 92.74  & 88.30  & 99.89  & 99.83  & 99.54  \\
					& 0.45  & 98.17  & 97.38  & 95.64  & 99.97  & 99.94  & 99.85  & 94.14  & 92.69  & 88.91  & 99.87  & 99.82  & 99.62  \\
					& 0.49  & 97.62  & 96.93  & 95.13  & 99.96  & 99.92  & 99.77  & 93.01  & 91.32  & 87.77  & 99.86  & 99.73  & 99.54  \\
					\midrule
					\multirow{4}[2]{*}{$\delta_{M,6} = 0.95$} & 0     & 98.88  & 98.36  & 96.95  & 99.98  & 99.97  & 99.92  & 96.67  & 95.42  & 92.57  & 99.93  & 99.88  & 99.81  \\
					& 0.25  & 99.26  & 98.94  & 98.00  & 100.00  & 100.00  & 99.98  & 97.41  & 96.49  & 94.28  & 99.94  & 99.93  & 99.88  \\
					& 0.45  & 99.25  & 98.96  & 98.34  & 100.00  & 100.00  & 99.99  & 97.18  & 96.40  & 94.59  & 99.93  & 99.93  & 99.89  \\
					& 0.49  & 99.09  & 98.80  & 98.12  & 100.00  & 100.00  & 99.98  & 96.57  & 95.76  & 93.94  & 99.93  & 99.92  & 99.87  \\
					\midrule
					\multirow{4}[2]{*}{$\delta_{M,6} = 0.99$} & 0     & 99.47  & 99.28  & 98.69  & 99.99  & 99.98  & 99.97  & 98.09  & 97.56  & 95.93  & 99.98  & 99.96  & 99.94  \\
					& 0.25  & 99.62  & 99.48  & 99.13  & 100.00  & 100.00  & 99.99  & 98.57  & 98.04  & 96.90  & 99.99  & 99.99  & 99.96  \\
					& 0.45  & 99.61  & 99.54  & 99.22  & 100.00  & 100.00  & 100.00  & 98.47  & 98.04  & 97.10  & 99.99  & 99.99  & 99.96  \\
					& 0.49  & 99.55  & 99.44  & 99.15  & 100.00  & 100.00  & 100.00  & 98.14  & 97.76  & 96.67  & 99.99  & 99.96  & 99.96  \\
					\midrule
					\multirow{4}[2]{*}{$\delta_{M,6} = 1.00$} & 0     & 99.58  & 99.36  & 98.96  & 100.00  & 99.99  & 99.98  & 98.42  & 97.84  & 96.58  & 99.99  & 99.97  & 99.94  \\
					& 0.25  & 99.71  & 99.56  & 99.28  & 100.00  & 100.00  & 99.99  & 98.82  & 98.27  & 97.38  & 99.99  & 99.99  & 99.97  \\
					& 0.45  & 99.69  & 99.56  & 99.38  & 100.00  & 100.00  & 100.00  & 98.72  & 98.25  & 97.58  & 99.99  & 99.99  & 99.97  \\
					& 0.49  & 99.59  & 99.51  & 99.29  & 100.00  & 100.00  & 100.00  & 98.33  & 98.00  & 97.16  & 99.99  & 99.98  & 99.96  \\
					\bottomrule
				\end{tabular}
				\label{sim-alt-t5}			
		\end{threeparttable}}		
	\end{table}

\begin{table}[H]
\caption{Empirical power of the sequential change-point monitoring scheme under DGP(xi).}	
		\centering	
		\resizebox{14cm}{!}{	
			\begin{threeparttable}
				\begin{tabular}{llllllllllllll}
					\toprule
					& & \multicolumn{3}{c}{$M = 50, s^* = 1$} & \multicolumn{3}{c}{$M = 100, s^* = 1$} & \multicolumn{3}{c}{$M = 100, s^* = 5$} & \multicolumn{3}{c}{$M = 150, s^* = 5$} \\
					\cmidrule{3-14}
					& $\gamma/\alpha$ & 0.10  & 0.05  & 0.01  & 0.10  & 0.05  & 0.01  & 0.10  & 0.05  & 0.01  & 0.10  & 0.05  & 0.01  \\
					\midrule
					\multirow{4}[2]{*}{$\delta_{M,6} = 1.01$} & 0     & 99.98  & 99.98  & 99.97  & 100.00  & 100.00  & 100.00  & 99.94  & 99.87  & 99.74  & 100.00  & 100.00  & 100.00  \\
					& 0.25  & 99.98  & 99.98  & 99.98  & 100.00  & 100.00  & 100.00  & 99.96  & 99.93  & 99.80  & 100.00  & 100.00  & 100.00  \\
					& 0.45  & 99.99  & 99.99  & 99.98  & 100.00  & 100.00  & 100.00  & 99.95  & 99.93  & 99.84  & 100.00  & 100.00  & 100.00  \\
					& 0.49  & 99.99  & 99.99  & 99.98  & 100.00  & 100.00  & 100.00  & 99.93  & 99.91  & 99.80  & 100.00  & 100.00  & 100.00  \\
					\midrule
					\multirow{4}[2]{*}{$\delta_{M,6} = 1.05$} & 0     & 99.98  & 99.98  & 99.94  & 100.00  & 100.00  & 100.00  & 99.96  & 99.96  & 99.89  & 100.00  & 100.00  & 100.00  \\
					& 0.25  & 99.98  & 99.98  & 99.97  & 100.00  & 100.00  & 100.00  & 99.99  & 99.96  & 99.93  & 100.00  & 100.00  & 100.00  \\
					& 0.45  & 99.98  & 99.98  & 99.97  & 100.00  & 100.00  & 100.00  & 99.97  & 99.97  & 99.94  & 100.00  & 100.00  & 100.00  \\
					& 0.49  & 99.98  & 99.97  & 99.97  & 100.00  & 100.00  & 100.00  & 99.97  & 99.96  & 99.92  & 100.00  & 100.00  & 100.00  \\
					\midrule
					\multirow{4}[2]{*}{$\delta_{M,6} = 1.10$} & 0     & 100.00  & 100.00  & 99.99  & 100.00  & 100.00  & 100.00  & 99.97  & 99.95  & 99.91  & 100.00  & 100.00  & 100.00  \\
					& 0.25  & 100.00  & 100.00  & 100.00  & 100.00  & 100.00  & 100.00  & 99.98  & 99.97  & 99.94  & 100.00  & 100.00  & 100.00  \\
					& 0.45  & 100.00  & 100.00  & 100.00  & 100.00  & 100.00  & 100.00  & 99.98  & 99.97  & 99.95  & 100.00  & 100.00  & 100.00  \\
					& 0.49  & 100.00  & 100.00  & 100.00  & 100.00  & 100.00  & 100.00  & 99.97  & 99.95  & 99.93  & 100.00  & 100.00  & 100.00  \\
					\midrule
					\multirow{4}[2]{*}{$\delta_{M,6} = 1.25$} & 0     & 100.00  & 100.00  & 100.00  & 100.00  & 100.00  & 100.00  & 100.00  & 99.99  & 99.99  & 100.00  & 100.00  & 100.00  \\
					& 0.25  & 100.00  & 100.00  & 100.00  & 100.00  & 100.00  & 100.00  & 100.00  & 100.00  & 99.99  & 100.00  & 100.00  & 100.00  \\
					& 0.45  & 100.00  & 100.00  & 100.00  & 100.00  & 100.00  & 100.00  & 100.00  & 100.00  & 99.99  & 100.00  & 100.00  & 100.00  \\
					& 0.49  & 100.00  & 100.00  & 100.00  & 100.00  & 100.00  & 100.00  & 100.00  & 99.99  & 99.99  & 100.00  & 100.00  & 100.00  \\
					\bottomrule
				\end{tabular}
				\label{sim-alt-t6}			
		\end{threeparttable}}		
	\end{table}

	\begin{table}[H]	
\caption{Empirical power of the sequential change-point monitoring scheme under DGP(xii).}
		\centering	
		\resizebox{14cm}{!}{	
			\begin{threeparttable}
				\begin{tabular}{llllllllllllll}
					\toprule
					& & \multicolumn{3}{c}{$M = 50, s^* = 1$} & \multicolumn{3}{c}{$M = 100, s^* = 1$} & \multicolumn{3}{c}{$M = 100, s^* = 5$} & \multicolumn{3}{c}{$M = 150, s^* = 5$} \\
					\cmidrule{3-14}
					& $\gamma/\alpha$ & 0.10  & 0.05  & 0.01  & 0.10  & 0.05  & 0.01  & 0.10  & 0.05  & 0.01  & 0.10  & 0.05  & 0.01  \\
					\midrule
					\multirow{4}[2]{*}{$\delta_{M,6} = 1.01$} & 0     & 100.00  & 100.00  & 100.00  & 100.00  & 100.00  & 100.00  & 99.99  & 99.98  & 99.98  & 100.00  & 100.00  & 100.00  \\
					& 0.25  & 100.00  & 100.00  & 100.00  & 100.00  & 100.00  & 100.00  & 99.99  & 99.99  & 99.98  & 100.00  & 100.00  & 100.00  \\
					& 0.45  & 100.00  & 100.00  & 100.00  & 100.00  & 100.00  & 100.00  & 99.99  & 99.99  & 99.98  & 100.00  & 100.00  & 100.00  \\
					& 0.49  & 100.00  & 100.00  & 100.00  & 100.00  & 100.00  & 100.00  & 99.99  & 99.98  & 99.98  & 100.00  & 100.00  & 100.00  \\
					\midrule
					\multirow{4}[2]{*}{$\delta_{M,6} = 1.05$} & 0     & 100.00  & 100.00  & 100.00  & 100.00  & 100.00  & 100.00  & 99.99  & 99.99  & 99.98  & 100.00  & 100.00  & 100.00  \\
					& 0.25  & 100.00  & 100.00  & 100.00  & 100.00  & 100.00  & 100.00  & 100.00  & 99.99  & 99.99  & 100.00  & 100.00  & 100.00  \\
					& 0.45  & 100.00  & 100.00  & 100.00  & 100.00  & 100.00  & 100.00  & 100.00  & 99.99  & 99.99  & 100.00  & 100.00  & 100.00  \\
					& 0.49  & 100.00  & 100.00  & 100.00  & 100.00  & 100.00  & 100.00  & 99.99  & 99.99  & 99.99  & 100.00  & 100.00  & 100.00  \\
					\midrule
					\multirow{4}[2]{*}{$\delta_{M,6} = 1.10$} & 0     & 100.00  & 100.00  & 100.00  & 100.00  & 100.00  & 100.00  & 100.00  & 100.00  & 100.00  & 100.00  & 100.00  & 100.00  \\
					& 0.25  & 100.00  & 100.00  & 100.00  & 100.00  & 100.00  & 100.00  & 100.00  & 100.00  & 100.00  & 100.00  & 100.00  & 100.00  \\
					& 0.45  & 100.00  & 100.00  & 100.00  & 100.00  & 100.00  & 100.00  & 100.00  & 100.00  & 100.00  & 100.00  & 100.00  & 100.00  \\
					& 0.49  & 100.00  & 100.00  & 100.00  & 100.00  & 100.00  & 100.00  & 100.00  & 100.00  & 100.00  & 100.00  & 100.00  & 100.00  \\
					\midrule
					\multirow{4}[2]{*}{$\delta_{M,6} = 1.25$} & 0     & 100.00  & 100.00  & 100.00  & 100.00  & 100.00  & 100.00  & 100.00  & 100.00  & 100.00  & 100.00  & 100.00  & 100.00  \\
					& 0.25  & 100.00  & 100.00  & 100.00  & 100.00  & 100.00  & 100.00  & 100.00  & 100.00  & 100.00  & 100.00  & 100.00  & 100.00  \\
					& 0.45  & 100.00  & 100.00  & 100.00  & 100.00  & 100.00  & 100.00  & 100.00  & 100.00  & 100.00  & 100.00  & 100.00  & 100.00  \\
					& 0.49  & 100.00  & 100.00  & 100.00  & 100.00  & 100.00  & 100.00  & 100.00  & 100.00  & 100.00  & 100.00  & 100.00  & 100.00  \\
					\bottomrule
				\end{tabular}
				\label{sim-alt-t7}			
		\end{threeparttable}}		
	\end{table}

\begin{figure}[H]
		\caption{The empirical density functions of $\tau_M$ under DGP(vi)  with significance level $\alpha=.05$.}
		\centering		
		\includegraphics[width=6.50in]{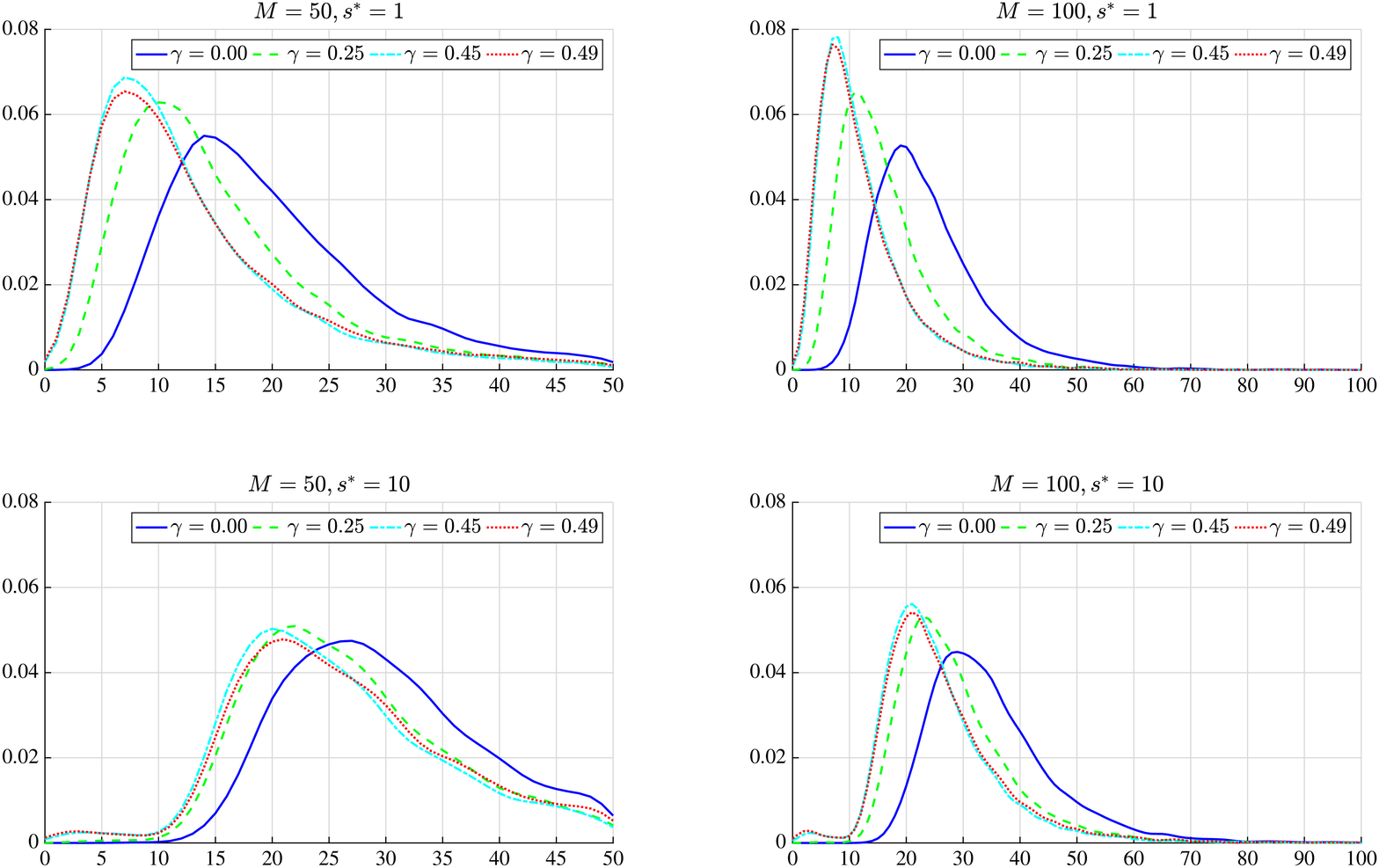}	
		\label{sim-alt-f1}	
	\end{figure}

\begin{figure}[H]
		\caption{The empirical density functions of $\tau_M$ under DGP(viii)  with $\delta_{M,6}=1$ and significance level $\alpha=.05$.}
		\centering		
		\includegraphics[width=6.0in]{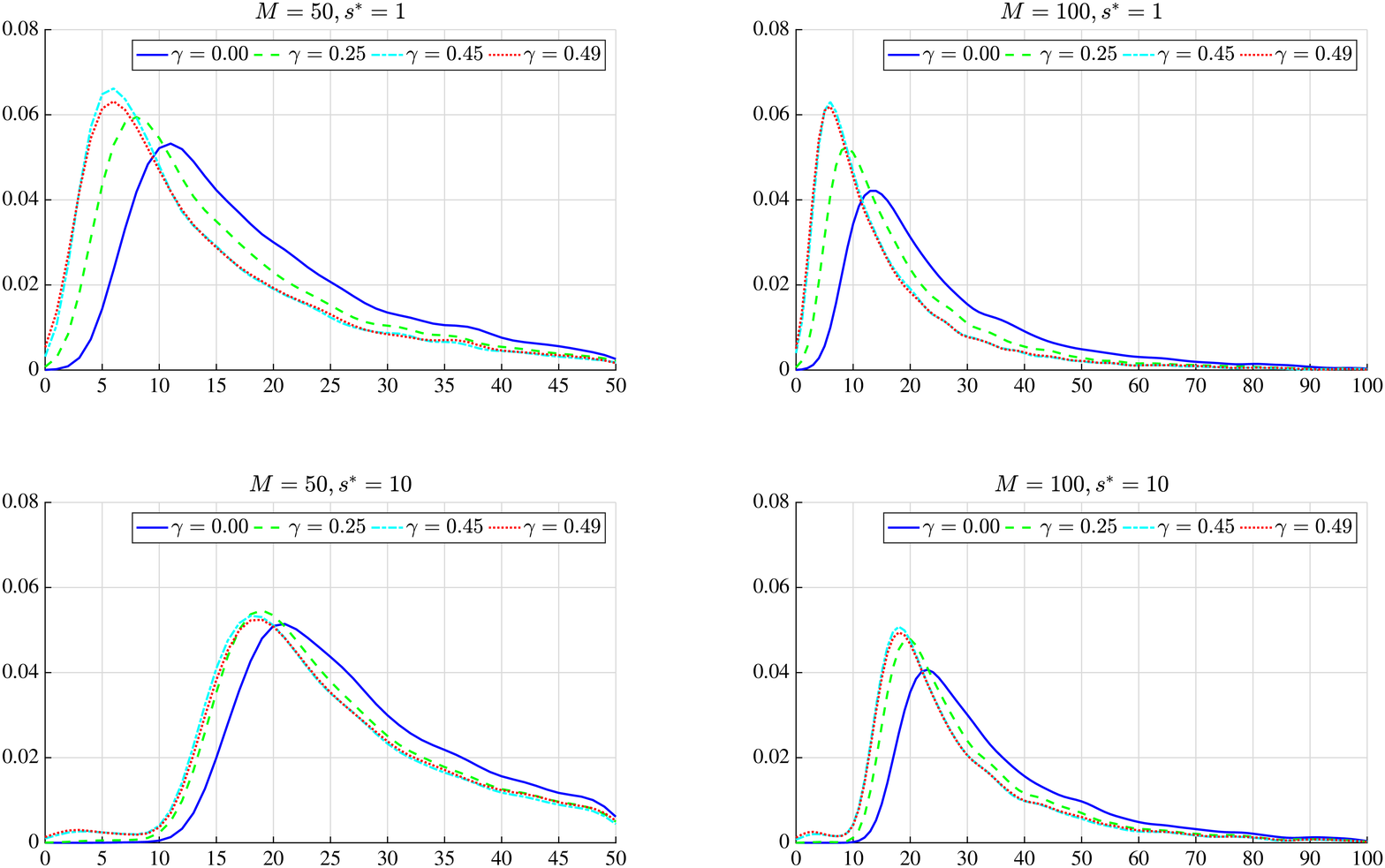}	
		\label{sim-alt-f2}	
	\end{figure}

	\begin{figure}[H]
		\caption{The empirical density functions of $\tau_M$ under DGP(x) with with $\delta_{M,6}=1$ and significance level $\alpha=.05$.}
		\centering		
		\includegraphics[width=6.0in]{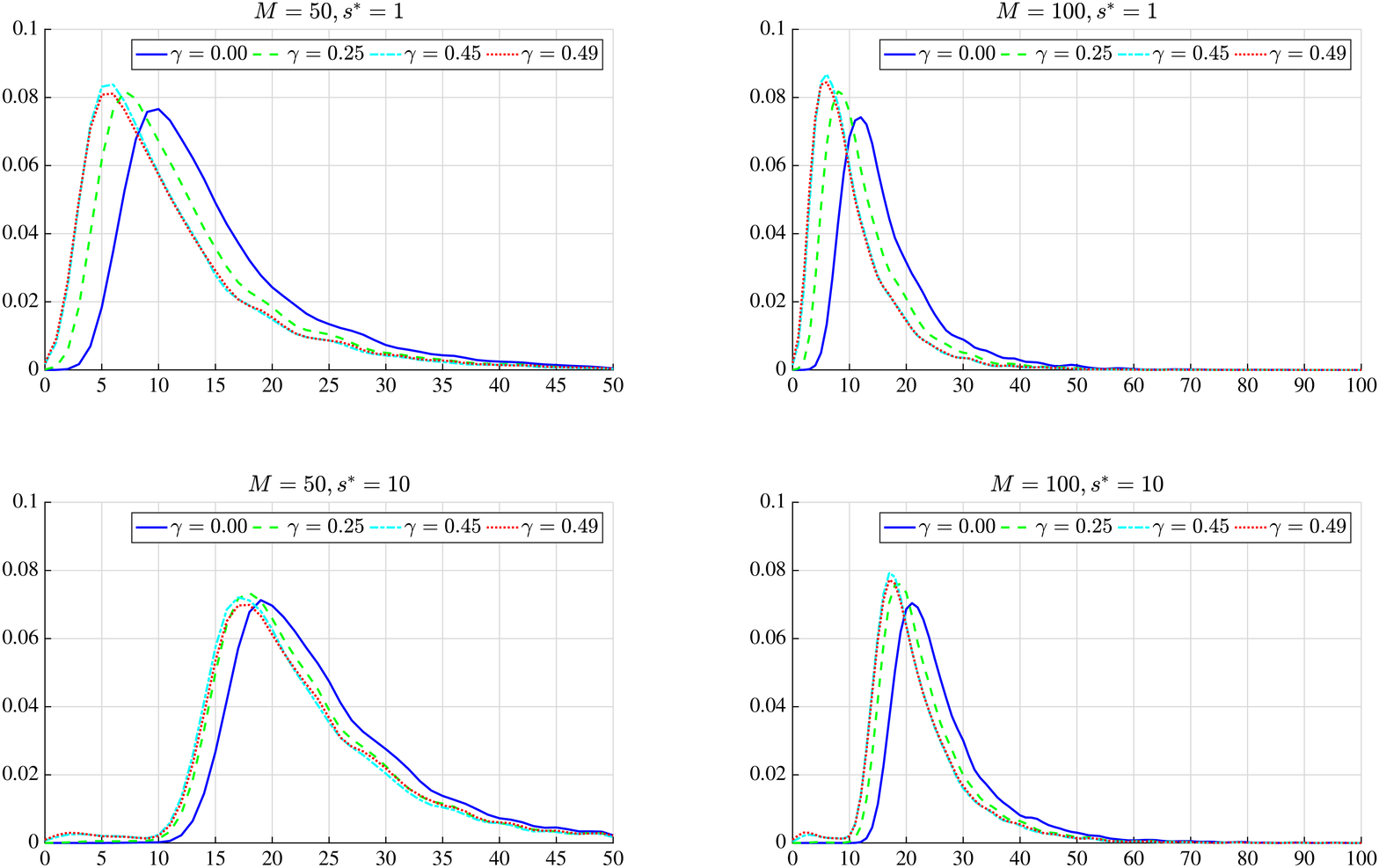}	
		\label{sim-alt-f3}	
	\end{figure}

\begin{figure}[H]
		\caption{The empirical density functions of $\tau_M$ under DGP(xii) with  $\delta_{M, 6} = 1.25$ and significance level $\alpha = .05$.}
		\centering		
		\includegraphics[width=6.0in]{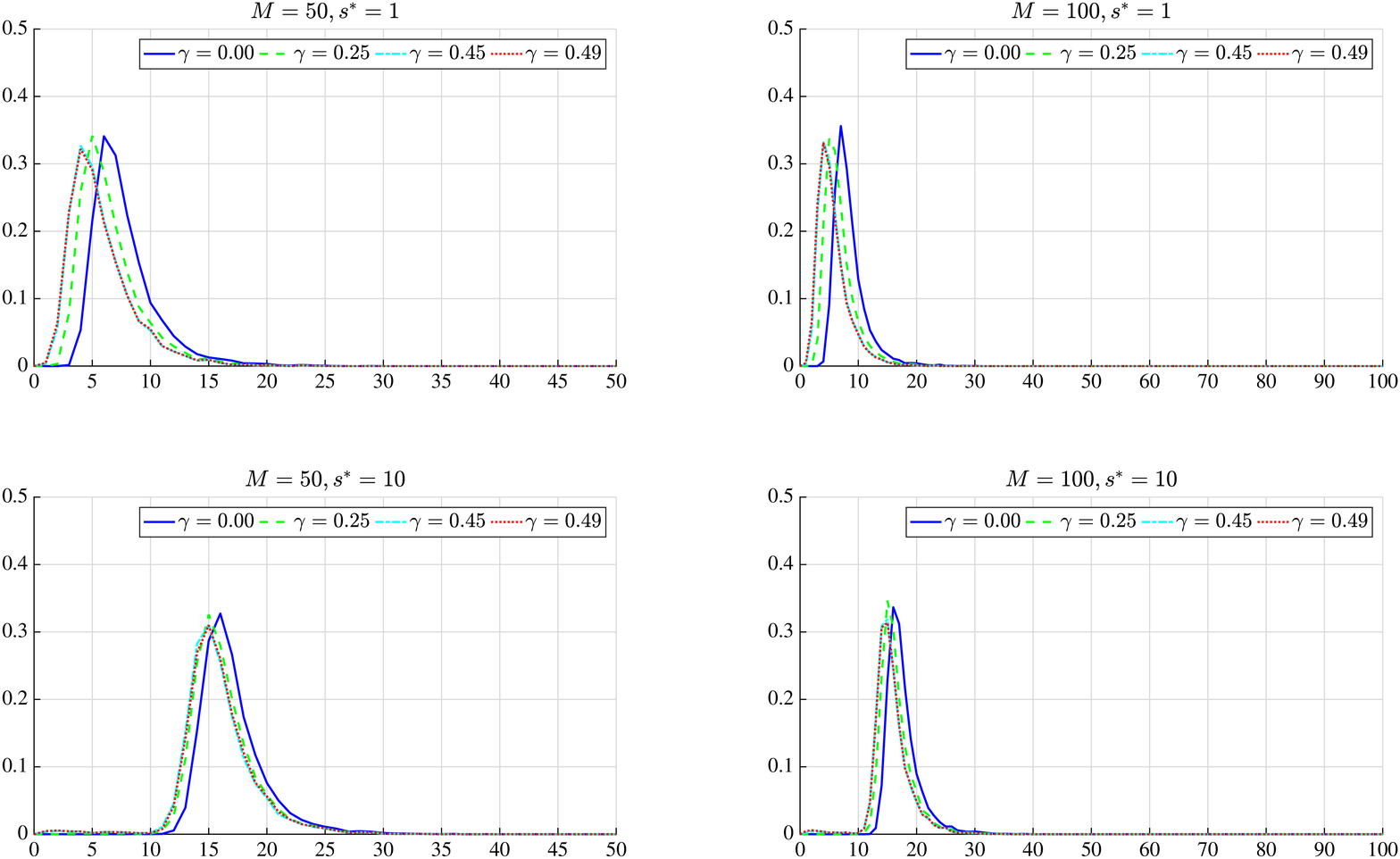}	
		\label{sim-alt-f4}	
	\end{figure}

\medskip

\section{An application to housing prices in the U.S.A. at the national level, in  the Los Angeles  and Boston markets }\label{sec-ch}
In this section, as an example for our theory, we  focus on the U.S.\ housing prices  to  illustrate our online monitoring procedure. The  literature  has discussed the link of housing prices to macroeconomic fundamental variables   using linear regression model.
 The fundamental variables frequently applied  in the literature  include personal income per capita, mortgage interest rate, employment on the demand side and housing starts on the supply side. These variables are used   to explain the dynamics of U.S.\ real estate prices in the long run horizon  (Case and Shiller, 2003; Gallin, 2006; Shiller, 2015). Beside these macroeconomic fundamental variables,  first--order autoregressive term of the change in the log housing prices was included in the regression model to account for the momentum effect  because  real estate acts as an investing instrument. For further information we refer to  Case and Shiller (1989, 2003),  Piazzesi and  Schneider (2009) and  Zheng et al.\ (2016).  In addition,  Himmelberg et al.\ (2005),  Davis and
Heathcote  (2007),  Saiz (2010) and Gyourko et al.\ (2013) suggested land supply elasticity, cost of ownership, demographic and geographic statistics  to explain the difference of housing prices across cities. \\
 We used  the S\&P CoreLogic Case--Shiller  Home Price Index series, which is the leading measure of U.S. residential real estate prices and  tracks changes in the value of residential real estate, as the proxy of housing prices.   We studied the housing prices in  U.S.\  at the national level and at two metropolitan areas: Los Angeles  and Boston. The S\&P CoreLogic Case--Shiller  Home Price Index series for these three markets are exhibited
   in Figure \ref{App-US-f1}. Figure \ref{App-US-f1} depicts an upward housing price trend in  the U.S.\ at the national level, as well as in Los Angeles and Boston  between January 1994 to December 2000.
   The set of macroeconomic fundamental variables included in our model:\\
$x_{t,2}$: the lagged \textit{disposable personal income per capita} change, we used the national level data as a proxy for Los Angeles and Boston since only yearly data of personal income per capita for states and Metropolitan Statistical Areas are available by the U.S.\ Bureau of Economic Analysis. \\
$x_{t,3}$: the lagged change of \textit{30-year fixed rate of mortgage average in U.S.}, transformed from weekly frequency to monthly.\\
$x_{t,4}$:  the lagged \textit{all non--farm employment} change in terms of the national level and corresponding Metropolitan Statistical Areas level originally released by the U.S.\ Bureau of Labor Statistics.\\
$x_{t,5}$:  the lagged change of \textit{housing starts} at the national level and the U.S.\ Census Bureau Regions (West Region series was used for Los Angeles and Northeast Region series was used for Boston). We use the lagged term of these variables here to mitigate the endogenous problem because of the interactive effect among the housing prices and these macroeconomic fundamentals (Case and  Shiller, 2003).\\
All data that we used are seasonally adjusted monthly data from the economic database of the Federal Reserve Bank of St.\ Louis\footnote{{\texttt https://fred.stlouisfed.org}}.

According to Shiller (2008), the beginning  of 1991 was the turning point in the 1980s boom. The housing prices started to drop and later they flattened out. Thus
we used January 1994--December 1996 as the training (historical) sample , so $M=36$ in our calculations. The upper part of Table \ref{app-des} reports the  summary statistics of all variables of the training sample we used in the regression. Statistics, including the number of observations, mean, standard deviation, minimum, maximum and the testing results of  the KPSS  test (Kwiatkowski et al.\ 1992) to check stationarity without linear term are tabulated. According to our results, stationarity cannot be rejected for the training sample. The detector $\hat{g}(36, s)$ is defined by \eqref{modbou} with $\gamma=.45$ and $\alpha=.01$.   The boundary function as well as the detectors are given in Figure  \ref{App-US-f2}. According to our calculations, $\tau_{36}^{(1)}=10$ (October 1997) for the national level, $\tau_{36}^{(2)}=8$ (August 1997) for Los Angeles and $\tau_{36}^{(3)}=19$ (July 1998) for Boston. The detections of changes in the parameters of the model in \eqref{e-1} are denoted by vertical lines in Figure \ref{App-US-f1}. Table \ref{app-reg} shows the estimated values of the parameters for the periods $[1, 36]$ (training sample), $[1, 36+ \tau_{36}^{(i)}]$ (before detection) and $[37, 36+\tau_{36}^{(i)}+11]$
(after detection).

We checked for more possible changes in the data in each market after the detection of changes at $\tau_{36}^{(i)}, i=1,2,3$. We used the training periods $[\tau_{36}^{(i)}, \tau_{36}^{(i)}+35], i=1,2,3$, i.e.\ October 1997--September 2000  for the national market, August 1997--July 2000 for Los Angeles and July 1998-June 2001 for Boston. The lower part of Table \ref{app-des} shows the summary statistics and the values of the stationarity test for these training samples. We started a new monitoring procedure for all three markets, the starting dates were October 2000  at the national level, August 2000  for Los Angeles and July 2001 for Boston. Our procedure detected changes at the national level and March 2004 was the estimated time of change. A change was also found for the Los Angeles market dated December 2003. No further changes were found on the Boston market. Figure \ref{App-II-f1} exhibits the housing price indices  and the time of the changes are indicated by vertical lines. Figure \ref{App-II-f2} shows the boundary function and the detectors. We note that on Figure \ref{App-II-f2} the monitoring starts at the same point but it is a different physical time for the three markets. It is clear from Table \ref{app-reg} that the autoregressive parameter changes if there is a change and it is increasing with time.
 However, with the exception of the national market, the autoregressive parameter stays far away from 1. The estimates are .93 and .84 for the national market and for Los Angeles, respectively. During the second monitoring phase, structural breaks were detected almost two years before the prices peaked in 2006 during the 2000s real estate boom.

 Our monitoring process finds increasing autoregressive parameters in the three markets and hence it  confirms the ``amplification mechanism" advocated by Case and Shiller (2003). The ``amplification mechanism" is the strongest in Los Angeles, which was undergoing faster price changes than Boston. Since the autoregressive parameters are below 1 in the first and also in the second phase of our monitoring, it is unlikely that ``bubbles" formed in the sense of Linton (2019). It is also useful to note that the estimated R--square is increasing with the autoregressive parameter, so the autoregressive part explains more and more of the changes in the housing prices. The momentum effect, caused by the herding behavior of transactions, tends to disengage  the $log$
of housing price index changes from the macro fundamentals.

\begin{figure}[H]
		\caption{Housing Price Index at  U.S.\ national level and two metropolitan areas  between January 1994  and December 2000.}
		\centering		
		\includegraphics[width=6.0in]{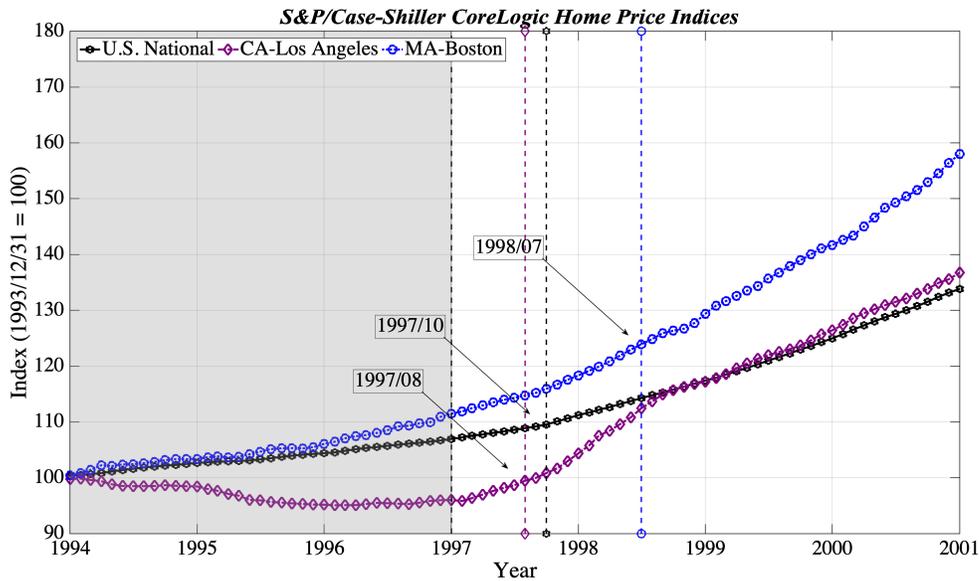}	
		\label{App-US-f1}	
	\end{figure}

\begin{table}[H]	
	\centering	
\caption{Variable definitions and summary statistics.}
	\resizebox{16cm}{!}{	
		\begin{threeparttable}
			\begin{tabular}{llllllll}
				\toprule
				\multicolumn{2}{l}{Variables} & Sample Size & Mean  & SD    & Min   & Max   & KPSS-test \\
				\midrule
				\multicolumn{8}{c}{Training Sample for the first monitoring: January 1994 - December 1996} \\
				\midrule
				Housing Market Indicators: & \multicolumn{7}{l}{S\&P Case--Shiller Home Price Indices} \\
				$y_t:\Delta \mathrm{log(CSHPI)}$ & U.S. National Level & 36    & 0.0019 & 0.0006 & 0.0005 & 0.0034 & 0.1300 \\
				& Los Angeles & 36    & -0.0011 & 0.0023 & -0.0075 & 0.0032 & 0.3070 \\
				& Boston & 36    & 0.0030 & 0.0026 & -0.0021 & 0.0098 & 0.2647 \\
				\multicolumn{8}{l}{Fundamentals:} \\
				$x_{t,2}:\Delta \mathrm{log(DPIPC)\_lag1}$ & Disposable Personal Income per Capita & 36    & 0.0011 & 0.0054 & -0.0198 & 0.0116 & 0.2328 \\
				$x_{t,3}:\Delta \mathrm{log(Mortgage Rate)\_lag1}$ & 30-Year Fixed Mortagage Rate & 36    & 0.0016 & 0.0342 & -0.0501 & 0.0802 & 0.1795 \\
				$x_{t,4}:\Delta \mathrm{log(Employment)\_lag1}$ & \multicolumn{7}{l}{All Employees, Total Nonfarm} \\
				& U.S. National Level & 36    & 0.0021 & 0.0009 & -0.0002 & 0.0041 & 0.3036 \\
				& Los Angeles  & 36    & 0.0012 & 0.0015 & -0.0034 & 0.0040 & 0.2872 \\
				& Boston & 36    & 0.0017 & 0.0013 & -0.0027 & 0.0038 & 0.1188 \\
				$x_{t,5}:\Delta \mathrm{log(HStarts)\_lag1}$ & \multicolumn{7}{l}{New Privately Owned Housing Units Started} \\
				& U.S. National Level & 36    & -0.0031 & 0.0633 & -0.1866 & 0.1568 & 0.1173 \\
				& Los Angeles & 36    & -0.0082 & 0.1275 & -0.3027 & 0.2446 & 0.1848 \\
				& Boston & 36    & 0.0028 & 0.1439 & -0.2776 & 0.3502 & 0.1517 \\
				\midrule
				\multicolumn{8}{c}{Training Sample for the second monitoring : October 1997--September 2000 (national level),  August 1997--July 2000  (Los Angeles) and July 1998--June 2001 (Boston)} \\
				\midrule
				Housing Market Indicators: & \multicolumn{7}{l}{S\&P Case--Shiller Home Price Indices}  \\
				$y_t:\Delta \mathrm{log(CSHPI)}$ & U.S. National Level & 36    & 0.0060 & 0.0010 & 0.0035 & 0.0078 & 0.3505 \\
				& Los Angeles & 36    & 0.0091 & 0.0033 & 0.0039 & 0.0176 & 0.2429 \\
				& Boston & 36    & 0.0110 & 0.0038 & 0.0036 & 0.0188 & 0.2626 \\
				\multicolumn{8}{l}{Fundamentals:} \\
				$x_{t,2}:\Delta \mathrm{log(DPIPC)\_lag1}$ & Disposable Personal Income per Capita & 36    & 0.0030 & 0.0025 & -0.0029 & 0.0079 & 0.1355 \\
				$x_{t,3}:\Delta \mathrm{log(Mortgage Rate)\_lag1}$ & 30-Year Fixed Mortagage Rate & 36    & 0.0017 & 0.0218 & -0.0293 & 0.0551 & 0.1987 \\
				$x_{t,4}:\Delta \mathrm{log(Employment)\_lag1}$ & \multicolumn{7}{l}{All Employees, Total Nonfarm} \\
				& U.S. National Level & 36    & 0.0019 & 0.0009 & -0.0003 & 0.0036 & 0.3527 \\
				& Los Angeles  & 36    & 0.0019 & 0.0018 & -0.0018 & 0.0062 & 0.1853 \\
				& Boston & 36    & 0.0011 & 0.0024 & -0.0043 & 0.0062 & 0.2207 \\
				$x_{t,5}:\Delta \mathrm{log(HStarts)\_lag1}$ & \multicolumn{7}{l}{New Privately Owned Housing Units Started} \\
				& U.S. National Level & 36    & -0.0007 & 0.0442 & -0.0963 & 0.0807 & 0.2275 \\
				& Los Angeles & 36    & 0.0023 & 0.1094 & -0.2840 & 0.1909 & 0.2834 \\
				& Boston & 36    & -0.0010 & 0.1341 & -0.3432 & 0.1569 & 0.1961 \\
				\bottomrule
			\end{tabular}
Note: The  critical values for the KPSS test are 0.347 (10\% level), 0.463 (5\% level), 0.739 (1\% level).
			\label{app-des}		
		\end{threeparttable}}		
		\end{table}

	\begin{figure}[H]
		\caption{The first sequential monitoring for structural breaks  in the $\log$ of housing prices.}
		\centering		
		\includegraphics[width=6.0in]{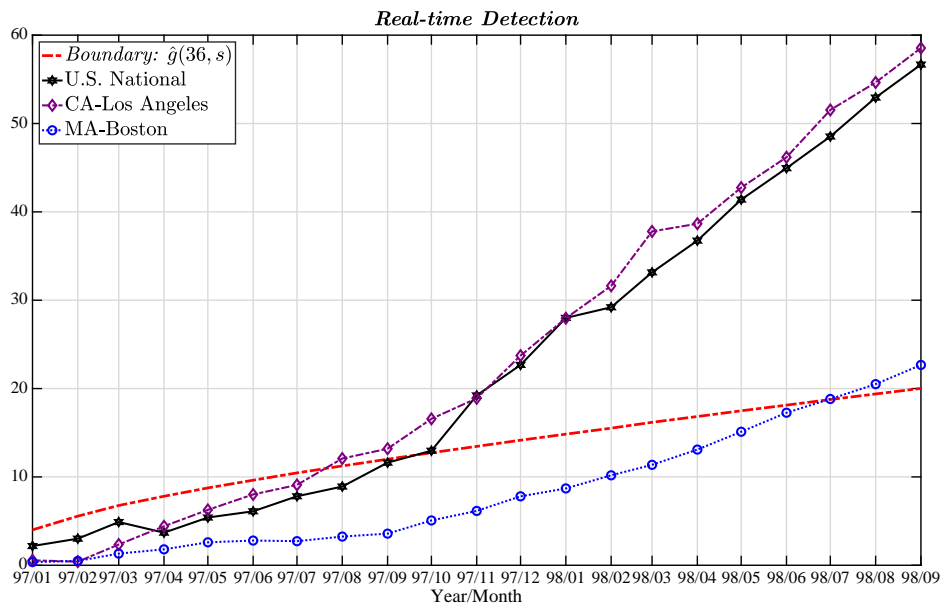}	
		\label{App-US-f2}	
	\end{figure}

	\begin{table}[H]	
		\centering	
\caption{Least square estimates for the model coefficients during the first and second monitoring. 
}
		\resizebox{16cm}{!}{	
			\begin{threeparttable}
				
				\begin{tabular}{lllllllllllll}
					\toprule
					&       & \multicolumn{3}{c}{U.S. National} &       & \multicolumn{3}{c}{Los Angeles} &       & \multicolumn{3}{c}{Boston} \\
					    Variables &       & Training  & Before & After &       & Training  & Before & After &       & Training  & Before & After \\
					&       & Sample & Detection & Detection &       & Sample & Detection & Detection &       & Sample & Detection & Detection \\
					\midrule
					\multicolumn{13}{c}{Estimated Coefficients During the first Monitoring  (the size of the training sample is $ 36$)} \\
					\midrule
					$\hat{\beta}_1$ &       & 0.0006 & 0.0006 & 0.0011 &       & -0.0009 & -0.0006 & 0.0010 &       & 0.0020 & 0.0012 & 0.0028 \\
					$\hat{\beta}_2$ &       & 0.0246 & 0.0271 & -0.1202 &       & -0.0452 & -0.0355 & 0.2139 &       & 0.0570 & 0.0766 & 0.0186 \\
					$\hat{\beta}_3$ &       & 0.1511 & 0.0625 & -0.0582 &       & 0.3545 & 0.5183 & 0.5264 &       & 0.0842 & 0.1432 & 0.1166 \\
					$\hat{\beta}_4$ &       & 0.0026 & 0.0014 & -0.0084 &       & 0.0002 & -0.0052 & -0.0337 &       & -0.0035 & -0.0114 & 0.0159 \\
					$\hat{\beta}_5$ &       & 0.0008 & 0.0010 & 0.0011 &       & 0.0017 & 0.0007 & -0.0107 &       & -0.0017 & -0.0018 & -0.0027 \\
					$\hat{\beta}_6$ &       & \textbf{0.4612} & \textbf{0.6295} & \textbf{0.8820} &       & \textbf{0.4060} & \textbf{0.6007} & \textbf{0.7135} &       & \textbf{0.2639} & \textbf{0.6316} & \textbf{0.6152} \\
					\midrule
					$R^2$    &       & 0.5172 & 0.4872 & 0.7347 &       & 0.3019 & 0.5153 & 0.7035 &       & 0.0673 & 0.3819 & 0.4155 \\
					\midrule
					\multicolumn{13}{c}{Estimated Coefficients During the Second Monitoring  (the size of the training sample is $ 36$)} \\
					\midrule
					$\hat{\beta}_1$ &       & 0.0015 & 0.0006 & 0.0008 &       & 0.0031 & 0.0020 & 0.0019 &       & 0.0054 &       &  \\
					$\hat{\beta}_2$ &       & 0.0008 & 0.0051 & 0.0019 &       & 0.3234 & 0.0434 & 0.0179 &       & -0.0689 &       &  \\
					$\hat{\beta}_3$ &       & 0.0363 & 0.0133 & 0.1101 &       & 0.1053 & -0.1486 & -0.0556 &       & 0.3305 &       &  \\
					$\hat{\beta}_4$ &       & 0.0059 & 0.0054 & 0.0072 &       & -0.0192 & -0.0031 & 0.0105 &       & 0.0189 &       &  \\
					$\hat{\beta}_5$ &       & -0.0005 & 0.0009 & 0.0008 &       & -0.0039 & -0.0020 & 0.0014 &       & -0.0067 &       &  \\
					$\hat{\beta}_6$ &       & \textbf{0.7474} & \textbf{0.9294} & \textbf{0.9100} &       & \textbf{0.5363} & \textbf{0.8353} & \textbf{0.8686} &       & \textbf{0.4854} &       &  \\
					\midrule
					$R^2$     &       & 0.6998 & 0.8234 & 0.8623 &       & 0.4766 & 0.6450 & 0.7850 &       & 0.2932 &       &  \\
					\bottomrule
				\end{tabular}
				\label{app-reg}		
		\end{threeparttable}}		
	\end{table}
\begin{figure}[H]
		\centering		
		\caption{Housing Price Index at  U.S.\ national level and two metropolitan areas  between August 1997  and June 2005.}
		\resizebox{16cm}{!}{	
			\includegraphics{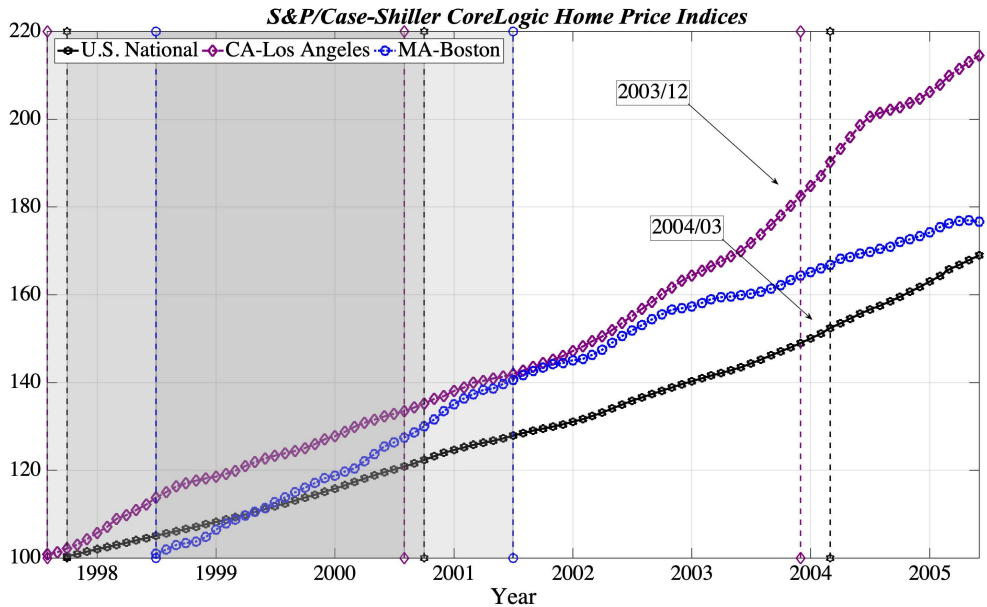}}	
		\label{App-II-f1}
	\end{figure}
	\begin{figure}[H]
		\centering		
		\caption{The second sequential monitoring for structural breaks  in the $\log$ of housing prices..}
		\resizebox{16cm}{!}{	
			\includegraphics{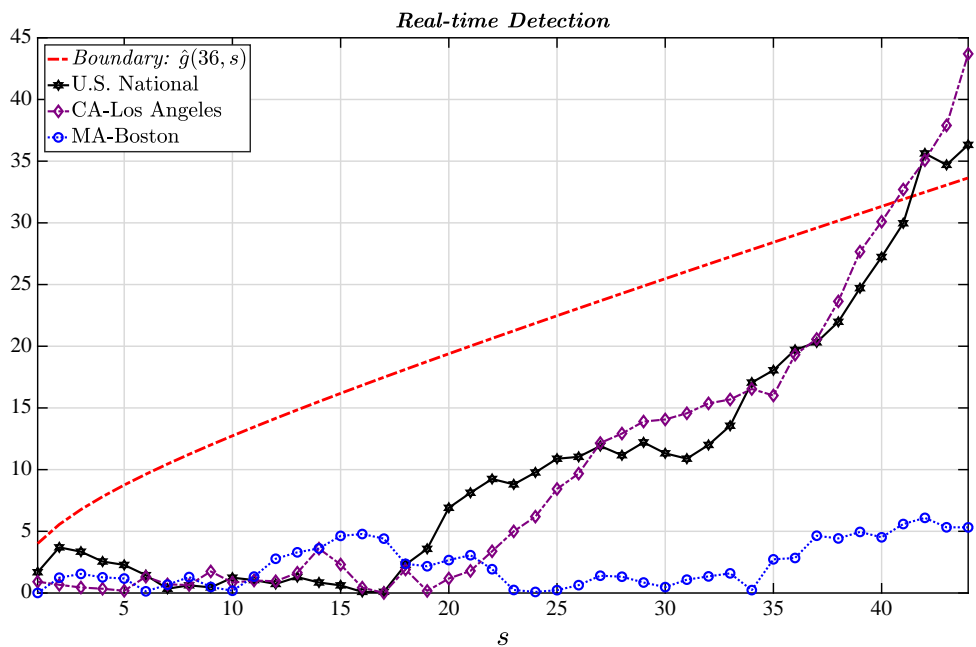}}	
		\label{App-II-f2}
	\end{figure}

\section{Conclusion}\label{concl}
In this paper we consider a model which includes linear and autoregressive terms to model changes in real estate prices. The observations and errors are weakly dependent, including the most often used linear and nonlinear time series sequences. We propose a sequential method to detect possible changes in the parameters of the model. The monitoring scheme is based on a detector and a suitably chosen boundary function. The limit distribution of the sequential monitoring scheme is established under the null hypothesis of stability of the model. We determine the asymptotic distribution of the stopping time when structural break is present. We focus on the possible changes in the autoregressive parameter. Using Monte Carlo simulations we illustrate that our results
can be applied in case of finite sample sizes. We suggest a boundary function which provides the right size of the monitoring even in case of small and moderate historical (training) samples. We also study the power of the procedure and the time to detect the structural break. A data example is also given.
We sequentially looking for possible structural breaks in the real estate markets of Boston, Los Angeles and   at the U.S.\ national level. We find structural breaks in the data, and find stationary segments. The autoregressive parameter of the segments is increasing but it stays below 1. Hence
the ``amplification mechanism" of Case and Shiller (2003) is confirmed by the data analysis but no bubbles in the sense of Linton (2019) were found.

\medskip
\noindent
{\bf Acknowledgements} Part of the research was done while Shanglin Lu was visiting the University of Utah. We  appreciate  the support of the Department of Mathematics.

\medskip
\appendix
\medskip
\section{Proof of Theorem \ref{th-null}}\label{sec-pr-1}
We assume in this section that $H_0$ holds. According to Assumption \ref{bern}, $x_{t,2}, x_{t,3}, \ldots ,x_{t,d-1}$ is an $m$--decomposable Bernoulli shift. Next we show that the autoregressive term $y_{t-1}$, the last coordinate of $\bx_t$, is also  is an $m$--decomposable Bernoulli  shift. We also show that some functions of $\bx_t$ and $\eps_t$ are also decomposable and we obtain the corresponding rate. Since $y_t$ is stationary under $H_0$, we obtain immediately that
\beq\label{ar-1}
|\beta_{0,d}|<1.
\eeq
\begin{lemma}\label{ybern} We assume that $H_0$ and Assumption \ref{bern} hold.\\
(i) There is  functional $a$ defined on ${\mathcal{ S}}$ with values in $R$  such that $y_t=a(\eta_t, \eta_{t-1}, \eta_{t-2},\ldots)$, $E|y_t|^{\kappa_1}<\infty$ and with some constant $C$
\beq\label{ydeco}
\left(E\left|y_t-y_{t,m} \right|^{\kappa_1}\right)^{1/\kappa_1}\leq C m^{-\kappa_2}
\eeq
where $y_{t,m}=a(\eta_t, \eta_{t-1}, \eta_{t-2}, \ldots, \eta_{t-m+1}, \bet^*_{t,m})$, $\{\eta_t, \bet^*_{t,m}, -\infty<t,m<\infty\}$ are defined in Assumption \ref{bern}.\\
(ii) Also,
\beq\label{dobx}
\left(E\|\bx_t\bx_t^\T-\bx_{t,m}\bx_{t,m}^\T\|^{\kappa_1/2}\right)^{2/\kappa_1}\leq C m^{-\kappa_2}
\eeq
and
\beq\label{xeps}
\left(E\|\bx_t\eps_t-\bx_{t,m}\eps_{t,m}\|^{\kappa_1/2}\right)^{2/\kappa_1}\leq C m^{-\kappa_2}
\eeq
\end{lemma}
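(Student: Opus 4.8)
Write $\bc=(\beta_{0,2},\ldots,\beta_{0,d-1},1)^\T\in R^{d-1}$, so that $\bc^\T\bz_t=\sum_{\ell=2}^{d-1}\beta_{0,\ell}x_{t,\ell}+\eps_t$ and the model equation \eqref{e-1} reads $y_t=\beta_{0,1}+\bc^\T\bz_t+\beta_{0,d}y_{t-1}$. The plan is to use \eqref{ar-1}, $|\beta_{0,d}|<1$, to unfold this recursion. Iterating it $n$ times gives $y_t=\beta_{0,1}\sum_{j=0}^{n-1}\beta_{0,d}^j+\sum_{j=0}^{n-1}\beta_{0,d}^j\bc^\T\bz_{t-j}+\beta_{0,d}^n y_{t-n}$; since $\{y_t\}$ is stationary under $H_0$ the remainder $\beta_{0,d}^n y_{t-n}$ tends to $0$ in probability, while the finite sum converges in $L^{\kappa_1}$ because $\sum_j|\beta_{0,d}|^j\left(E\|\bz_0\|^{\kappa_1}\right)^{1/\kappa_1}<\infty$ by stationarity and $E\|\bz_0\|^{\kappa_1}<\infty$. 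Hence
\[
y_t=\frac{\beta_{0,1}}{1-\beta_{0,d}}+\sum_{j=0}^\infty\beta_{0,d}^j\,\bc^\T\bz_{t-j},
\]
and since $\bz_{t-j}=\bg(\eta_{t-j},\eta_{t-j-1},\ldots)$ this exhibits $y_t$ as a nonrandom functional $a$ of $(\eta_t,\eta_{t-1},\ldots)$; the same Minkowski estimate on the series gives $E|y_t|^{\kappa_1}<\infty$, which settles the representation and the moment claim in (i).

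For the rate \eqref{ydeco} I would take $y_{t,m}=a(\eta_t,\ldots,\eta_{t-m+1},\bet^*_{t,m})$ and expand $a$ on this truncated input: its $j$--th summand is $\beta_{0,d}^j\bc^\T\bg$ evaluated on the input shifted by $j$ coordinates. For $0\le j\le m-1$ that shifted sequence carries the $m-j$ genuine innovations $\eta_{t-j},\ldots,\eta_{t-m+1}$ at the front followed by independent copies of $\eta_0$, so jointly with $\bz_{t-j}$ it is a copy of the pair $(\bz_{t-j},\bz_{t-j,m-j})$ and \eqref{ber-2} applies with parameter $m-j$; for $j\ge m$ it is $\bg$ of independent copies only, hence has the law of $\bz_0$. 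Splitting the series at $j=m/2$ and applying Minkowski, the summands with $j<m/2$ contribute $\|\bc\|\sum_{j<m/2}|\beta_{0,d}|^j c(m-j)^{-\kappa_2}\le\|\bc\|c(m/2)^{-\kappa_2}/(1-|\beta_{0,d}|)=O(m^{-\kappa_2})$, and the summands with $j\ge m/2$ contribute at most $2\|\bc\|\left(E\|\bz_0\|^{\kappa_1}\right)^{1/\kappa_1}\sum_{j\ge m/2}|\beta_{0,d}|^j=O\big(|\beta_{0,d}|^{m/2}\big)=o(m^{-\kappa_2})$. That $\{y_{t,m}\}$ is $m$--dependent is immediate: for $|t-s|\ge m$ the blocks of genuine innovations entering $y_{t,m}$ and $y_{s,m}$ are disjoint and the $\bet^*$--blocks are independent. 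This proves (i).

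For (ii) and (iii) I would write $\bx_t$ with coordinates $1,(\bz_t)_1,\ldots,(\bz_t)_{d-2},y_{t-1}$ and set $\bx_{t,m}$ to be the vector obtained by replacing $(\bz_t)_\ell$ by $(\bz_{t,m})_\ell$ and $y_{t-1}$ by $y_{t-1,m-1}$ (the truncation taken relative to the $\eta_t$--indexing, so $y_{t-1}$ keeps $m-1$ genuine innovations). Then $E\|\bx_t\|^{\kappa_1}<\infty$ by Assumption \ref{bern} and part (i), and $\left(E\|\bx_{t,m}\|^{\kappa_1}\right)^{1/\kappa_1}$ is bounded uniformly in $m$ since it is at most $\left(E\|\bx_t\|^{\kappa_1}\right)^{1/\kappa_1}+\left(E\|\bx_t-\bx_{t,m}\|^{\kappa_1}\right)^{1/\kappa_1}$ and the last term is $O(m^{-\kappa_2})$ by \eqref{ber-2} and (i). Using the identity $\bx_t\bx_t^\T-\bx_{t,m}\bx_{t,m}^\T=(\bx_t-\bx_{t,m})\bx_t^\T+\bx_{t,m}(\bx_t-\bx_{t,m})^\T$, submultiplicativity of $\|\cdot\|$, and H\"older's inequality (since $\kappa_1>4$, the $L^{\kappa_1/2}$--norm of a product of two factors is bounded by the product of their $L^{\kappa_1}$--norms) then yields \eqref{dobx}; \eqref{xeps} is obtained the same way with the second factor replaced by $\eps_t$ and $\eps_t-\eps_{t,m}$ --- the last coordinate of $\bz_t$ and of $\bz_t-\bz_{t,m}$ --- whose $L^{\kappa_1}$--difference is $O(m^{-\kappa_2})$ by \eqref{ber-2}.

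The main obstacle is the bookkeeping in the second paragraph: after expanding $a$ on a truncated innovation sequence one must identify, for each $j$, exactly which truncated version of $\bz_{t-j}$ occurs in the $j$--th summand and check that its joint distribution with $\bz_{t-j}$ is the one controlled by \eqref{ber-2}, so that the rate there transfers to each summand; granting that, the rest reduces to Minkowski, H\"older, and the geometric decay supplied by $|\beta_{0,d}|<1$.
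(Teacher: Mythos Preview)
Your proof is correct and follows essentially the same route as the paper: unfold $y_t$ into its MA($\infty$) representation via $|\beta_{0,d}|<1$, bound the $L^{\kappa_1}$ difference term by term through \eqref{ber-2}, and handle the products in (ii) by the add--and--subtract identity together with H\"older/Cauchy--Schwarz. Your split at $j=m/2$ is in fact tidier than the paper's display, which drops the geometric factor $|\beta_{0,d}|^\ell$ and the tail $\ell\ge m$ rather hastily, but the intended argument and the resulting rate are the same.
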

\begin{proof}
Elementary arguments give that
\beq\label{yrep}
y_t=\sum_{\ell=0}^\infty \beta_{0,d}^\ell\left(\bw_{t-\ell}^\T\bar{\bbe}_0+\eps_{t-\ell}\right),
\eeq
where $\bw_t$ is defined in \eqref{bwdef} and $\bar{\bbe}_0=(\beta_{0,1},\beta_{0,2},\ldots ,\beta_{0,d-1})^\T$. By the stationarity of $\bw_t$ and $\eps_t$ we have
\begin{align*}
(E|y_t|^{\kappa_1})^{1/\kappa_1}\leq \sum_{\ell=0}^\infty |\beta_{0,d}|^\ell\left\{E\left(\|\bw_{t-\ell}\|\|\bar{\bbe}_0\|+\|\eps_{t-\ell}\|\right)^{\kappa_1}\right\}^{1/\kappa_1}<\infty.
\end{align*}
Using now Assumption \ref{bern}, the Bernoulli representation for $y_t$ is established. According to the definition of $y_{t,m}$ we have that
$$
y_{t,m}=\sum_{\ell=0}^{m-1} \beta_{0,d}^\ell\left(\bw_{t-\ell, m-\ell}^\T\bar{\bbe}_0+\eps_{t-\ell,m-\ell}\right)+\sum_{\ell=m}^\infty \beta_{0,d}^\ell\left(\bw_{t-\ell}^\T\bar{\bbe}_0+\eps_{t-\ell}\right)
$$
and therefore
$$
y_t-y_{t,m}=\sum_{\ell=0}^{m-1} \beta_{0,d}^\ell\left((\bw_{t-\ell}-\bw_{t-\ell, m-\ell})^\T\bar{\bbe}_0+(\eps_{t-\ell}-\eps_{t-\ell,m-\ell})\right).
$$
Using Assumption \ref{bern}
\begin{align*}
\left(E|y_t-y_{t,m}|^{\kappa_1}\right)^{1/\kappa_1}&\leq \sum_{\ell=0}^{m-1} |\beta_{0,d}|^\ell\left\{E\left(\|\bw_{t-\ell}-\bw_{t-\ell,m-\ell}\|\|\bar{\bbe}_0\|+|\eps_{t-\ell}
-\eps_{t-\ell,m-\ell}|\right)^{\kappa_1}\right\}^{1/\kappa_1}\\
&\leq C_1 \sum_{\ell=0}^{m-1}(m-\ell)^{-\kappa_2}\\
&\leq C_2 m^{-\kappa_2}.
\end{align*}
Assumption \ref{bern} and \eqref{ydeco} imply that
\begin{align*}
E|x_{t,\ell}x_{t,k}-x_{t,\ell,m}x_{t,k,m}|^{\kappa_1/2}&\leq 2^{\kappa_1/2}(E[|x_{t,\ell}||x_{t,k}-x_{t,k,m}|]^{\kappa_1/2}+
E[|x_{t,k,m}||x_{t,\ell}-x_{t,\ell,m}|]^{\kappa_1/2})\\
&\leq 2^{\kappa_1/2}(E|x_{t,\ell}|^{\kappa_1}E|x_{t,k}-x_{t,k,m}|^{\kappa_1})^{1/2}\\
&\hspace{3cm}+(E|x_{t,k,m}|^{\kappa_1}E|x_{t,\ell}-x_{t,\ell,m}|^{\kappa_1})^{1/2}\\
&\leq cm^{-\kappa_1\kappa_2/2},
\end{align*}
where $x_{t,j,m}$ is the $j^{\mbox{th}}$ coordinate of $\bx_{t,m}$ and $c$ is a constant. Hence \eqref{dobx} is proven.
Similar argument gives \eqref{xeps}.
\end{proof}
\begin{lemma}\label{hatga}We assume that $H_0$ and Assumptions \ref{bern}--\ref{a-adef} hold.
(i) As $M\to\infty$ we have that
$$
M^{1/2}(\hat{\bbe}_M-\bbe_0)\;\stackrel{{\mathcal D}}{\to}\;\bN,
$$
where $\bN$ is a $d$--dimensional normal random vector with $E\bN={\bf 0}$ and $E\bN\bN^\T=\bA^{-1}\bD\bA^{-1}$,
\end{lemma}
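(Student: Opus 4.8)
The plan is to write $\hat{\bbe}_M-\bbe_0$ as a normalized sum of the weakly dependent vectors $\bx_t\eps_t$, prove a central limit theorem for that sum via the $m$--dependence approximation of Assumption \ref{bern}, and then combine with the convergence $\tfrac1M\bX_M^\T\bX_M\to\bA$ through Slutsky's theorem. Under $H_0$ we have $\bY_M=\bX_M\bbe_0+\beps_M$ with $\beps_M=(\eps_1,\dots,\eps_M)^\T$, so
\[
M^{1/2}(\hat{\bbe}_M-\bbe_0)=\Bigl(\tfrac1M\bX_M^\T\bX_M\Bigr)^{-1}\bS_M,\qquad \bS_M:=\frac1{M^{1/2}}\sum_{t=1}^M\bx_t\eps_t .
\]
By \eqref{adef} and Assumption \ref{a-adef}, together with the continuous mapping theorem, $(\tfrac1M\bX_M^\T\bX_M)^{-1}\stackrel{P}{\to}\bA^{-1}$. (The convergence \eqref{adef} is obtained entrywise: by Lemma \ref{ybern}(ii) each $x_{t,\ell}x_{t,k}$ is a stationary $m$--approximable sequence with more than two moments, so a weak law of large numbers gives $\tfrac1M\sum_{t=1}^Mx_{t,\ell}x_{t,k}\stackrel{P}{\to}Ex_{0,\ell}x_{0,k}$, and $\bA$ is this matrix of second moments.) Hence, by Slutsky's theorem, it suffices to prove $\bS_M\stackrel{\mathcal D}{\to}\cN(\bzero,\bD)$, where $\bD=\sum_{j=-\infty}^{\infty}E[\bx_0\bx_j^\T\eps_0\eps_j]$ is the long--run covariance matrix of $\{\bx_t\eps_t\}$.

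\emph{Properties of $\bx_t\eps_t$.} First, $E\bx_t\eps_t=\bzero$: the first coordinate is $E\eps_t=0$; coordinates $2,\dots,d-1$ vanish by Assumption \ref{bernmean}; and for the last coordinate the representation \eqref{yrep} for $y_{t-1}$ together with Assumption \ref{bernmean} ($Ex_{s,\ell}\eps_t=0$ and $E\eps_s\eps_t=0$ for $s\neq t$) gives $Ey_{t-1}\eps_t=0$. Second, $\bv_t:=\bx_t\eps_t$ is a mean--zero stationary Bernoulli shift: by Lemma \ref{ybern}(i) the last coordinate $y_{t-1}$ of $\bx_t$ is a Bernoulli shift in $\eta_{t-1},\eta_{t-2},\dots$, hence so is $\bx_t$, and multiplying by $\eps_t$ (the last coordinate of $\bz_t$) preserves this. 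With the $m$--dependent approximants $\bv_{t,m}:=\bx_{t,m}\eps_{t,m}$ furnished by Assumption \ref{bern} and Lemma \ref{ybern}(i), the Cauchy--Schwarz estimate already used for \eqref{dobx}--\eqref{xeps} yields, with $\nu:=\kappa_1/2>2$ and $\kappa_2>2$,
\[
\bigl(E\|\bv_t-\bv_{t,m}\|^{\nu}\bigr)^{1/\nu}\le C m^{-\kappa_2},\qquad E\|\bv_t\|^{\nu}<\infty .
\]

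\emph{CLT and conclusion.} These two displays say that $\{\bx_t\eps_t\}$ is a mean--zero stationary $L^{\nu}$--$m$--approximable sequence with $\nu>2$ and summable approximation weights; the summability also makes $\bD$ finite, since approximating $\bv_{t+k}$ by a block independent of $\sigma(\eta_u:u\le t)$ shows the covariances of $\bv_t$ at lag $k$ are $O(k^{-\kappa_2})$. The central limit theorem for such sequences --- proved by applying the classical $m$--dependent CLT (Billingsley, 1968), whose Lyapunov condition holds because $\nu>2$, to each $\{\bv_{t,m}\}_t$ and then letting $m\to\infty$ while controlling the $L^2$--norm of $M^{-1/2}\sum_{t=1}^M(\bv_t-\bv_{t,m})$ uniformly in $M$ (cf.\ H\"ormann and Kokoszka, 2010; Aue et al., 2014) --- gives $\bS_M\stackrel{\mathcal D}{\to}\cN(\bzero,\bD)$. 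Combining with the reduction above and Slutsky's theorem,
\[
M^{1/2}(\hat{\bbe}_M-\bbe_0)\stackrel{\mathcal D}{\to}\bA^{-1}\cN(\bzero,\bD)=\cN\bigl(\bzero,\bA^{-1}\bD\bA^{-1}\bigr)=\bN .
\]

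\emph{Main obstacle.} The algebraic reduction and the use of \eqref{adef} are routine; the substantive content is the CLT for the dependent array $\{\bx_t\eps_t\}$. Two points need care: (a) transferring the $m$--approximability from $\bz_t$ to the products $\bx_t\eps_t$ with a still--summable rate --- this is where Lemma \ref{ybern} and the Cauchy--Schwarz bound are used, and why one assumes $\kappa_1>4$ (so $\bx_t\eps_t$ has more than two moments) and $\kappa_2>2$ (summable covariances); and (b) passing from the $m$--dependent CLT to the limit $m\to\infty$, i.e.\ bounding the truncation error uniformly in $M$ so that the limiting covariance is identified as $\bD$. Everything else reduces to Slutsky's theorem and the continuous mapping theorem.
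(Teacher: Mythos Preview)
Your proposal is correct and follows essentially the same route as the paper: both write $M^{1/2}(\hat{\bbe}_M-\bbe_0)=(\tfrac1M\bX_M^\T\bX_M)^{-1}M^{-1/2}\sum_{t=1}^M\bx_t\eps_t$, establish the convergence of the Gram matrix to $\bA$ via the $m$--approximability of $x_{t,k}x_{t,\ell}$, and obtain the CLT for $M^{-1/2}\sum\bx_t\eps_t$ from the Bernoulli--shift/$m$--approximation machinery of Aue et al.\ (2014), combining the pieces with Slutsky. The only cosmetic difference is that the paper records the sharper rate $\tfrac1M\bX_M^\T\bX_M=\bA+O_P(M^{-1/2})$ (needed later in the paper) and computes the limiting covariance via \eqref{longvar}, whereas you appeal directly to the long--run covariance of the $m$--approximable sequence; both are equivalent here.
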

where $\bD=\{d_{k,\ell}, 1\leq k,\ell \leq d\}$ with
$$
d_{k,\ell}=\sum_{t=-\infty}^\infty Ex_{0,k}x_{t,\ell}\eps_0\eps_t.
$$
(ii) If $\{ x_{t,\ell}, -\infty< t<\infty, 2\leq \ell\leq d-1\}$ and $\{\eps_t, -\infty<t<\infty\}$ are independent,  then $\bD=\sigma^2\bA$.
\begin{proof}
Using \eqref{e-1} we get that
$$
\bY_M={\bX}_M\bbe_0+\bE_M
$$
with $\bE_M=(\eps_1, \eps_2, \ldots ,\eps_M)^\T$. It is well known that
\begin{align}\label{be-1}
\hat{\bbe}_M
=\bbe_0
+({\bX}_M^\T{\bX}_M)^{-1}{\bX}_M^\T\bE_M.
\end{align}

Using Lemma \ref{ybern} and Lemma B.1  in Aue et al.\ (2014)
we obtain that
$$
E\left(\sum_{t=1}^M(x_{t,k}-Ex_{t,k})\right)^2=O(M),\;\;2\leq k\leq d
$$
and
$$
E\left(\sum_{t=1}^M(x_{t,k}-Ex_{0,k})(x_{t,\ell}-Ex_{0,\ell})-M\mbox{cov}(x_{0,k},x_{0,\ell})\right)^2=O(M), \;\;2\leq k,\ell\leq d
$$
so by Markov's inequality we have
\beq\label{mark}
\frac{1}{M}\sum_{t=1}^Mx_{t,k}=Ex_{0,k}+O_P(M^{-1/2}),\;\;2\leq k \leq d
\eeq
and
$$
\frac{1}{M}\sum_{t=1}^Mx_{t,k}x_{t,\ell}=Ex_{0,k}x_{0,\ell}+O_P(M^{-1/2}), \;\;2\leq k,\ell\leq  d.
$$
Hence
\beq\label{be-2}
\frac{1}{M}{\bX}_M^\T{\bX}_M=\bA+O_P(M^{-1/2}),
\eeq
where $\bA=\{ a_{k,\ell},1\leq k,\ell \leq d\}$,
\beq\label{aexp}
a_{1,1}=1,\; a_{1,k}=a_{k,1}=Ex_{0,k},\; a_{k,\ell}=Ex_{0,k}x_{0,\ell}\;\;2\leq k, \ell\leq d.
\eeq
and therefore \eqref{adef} is proven. Using Assumption \ref{a-adef} we get that
\beq\label{aconv}
({\bX}_M^\T{\bX}_M)^{-1}=\frac{1}{M}\bA^{-1}+O_P(M^{-3/2}).
\eeq
It follows from Assumption \ref{bernmean} that
\beq\label{longvar}
E\left(M^{-1/2}\sum_{t=1}^M{x}_{t,\ell}\eps_t\right)\left( M^{-1/2}\sum_{t=1}^M{x}_{t,\ell}\eps_t  \right)=
\frac{1}{M}\sum_{t,s=1}^Mx_{t,k}\eps_tx_{s,\ell}\eps_s\to d_{k,\ell}.
\eeq
Applying now Theorem B.1  in Aue et al.\ (2014) we conclude that
\beq\label{xenor}
M^{-1/2}\sum_{t=1}^M\bx_{t}\eps_t\stackrel{{\mathcal D}}{\to}\bN_1,
\eeq
where $\bN_1$ is a $d$--dimensional normal random vector with $E\bN_1={\bf 0}$ and $\bN_1\bN_1^\T=\bD$.
Hence  the proof the first part of Lemma \ref{hatga} is now complete. \\
It follows immediately from the independence of $\{ x_{t,\ell}, -\infty< t<\infty, 2\leq \ell\leq  d-1\}$ and $\{\eps_t, -\infty<t<\infty\}$ and Assumption \ref{bernmean} that
$$
Ex_{0,1}x_{t,1}\eps_0\eps_t=E\eps_0\eps_t=\sigma^2,\;\;\mbox{if}\;\;t=0\;\;\mbox{and}\;\; 0\;\;\mbox{if}\;\;t\neq 0,
$$
for $2\leq k\leq d-1$
$$
Ex_{0,1}x_{t,k}\eps_0\eps_t=Ex_{t,k}E\eps_0\eps_t=\sigma^2Ex_{0,k},\;\;\mbox{if}\;\;t=0\;\;\mbox{and}\;\; 0\;\;\mbox{if}\;\;t\neq 0.
$$
Similarly,

$$
Ex_{0,k}x_{t,\ell}\eps_0\eps_t=Ex_{t,k}x_{t,\ell}E\eps_0\eps_t=\sigma^2Ex_{0,k}x_{0,\ell},\;\;\mbox{if}\;\;t=0\;\;\mbox{and}\;\; 0\;\;\mbox{if}\;\;t\neq 0.
$$
By the definition, $x_{t,d}=y_{t-1}$. Using the representation in \eqref{yrep} we get that
$$
Ex_{0,1}x_{t,d}\eps_0\eps_t=Ey_{t-1}\eps_0\eps_t=\sigma^2Ey_{0},\;\;\mbox{if}\;\;t=0\;\;\mbox{and}\;\; 0\;\;\mbox{if}\;\;t\neq 0,
$$
$$
Ex_{s,d}x_{t,d}\eps_t\eps_s=Ey_{t-1}y_{s-1}\eps_t\eps_s=Ey_{t-1}y_{s-1}\eps_tE\eps_s=0,\;\;\mbox{if}\;\;s>t
$$
and
$$
Ex_{t,d}^2\eps_t^2=Ey_{t-1}^2\eps_t^2=\sigma^2Ey_0^2.
$$
Hence the proof of Lemma \ref{hatga} is complete.
\end{proof}
\begin{lemma}\label{xlaw} If $H_0$ and  Assumptions \ref{bern}--\ref{bernmean} hold, then we have that
\beq\label{xbound}
\sup_{1\leq t <\infty}\frac{1}{t^\zeta}\left\|\sum_{s=M+1}^{t+M}{\bx}_s-\ba t\right\|=O_P(1)
\eeq
for any $\zeta>1/2,$ where $\ba=(1,Ex_{0,2}, \ldots, Ex_{0, d-1}, Ey_0 )^\T$.
\end{lemma}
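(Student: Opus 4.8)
The plan is to prove \eqref{xbound} coordinate by coordinate. The first coordinate of $\bx_s$ and the first coordinate of $\ba$ both equal $1$, so the first coordinate of $\sum_{s=M+1}^{t+M}\bx_s-\ba t$ is identically $0$ and may be ignored; for $2\le k\le d-1$ I would set $\xi_s=x_{s,k}-Ex_{0,k}$, and for the last coordinate $\xi_s=y_{s-1}-Ey_0$. By Assumption \ref{bern} together with Lemma \ref{ybern}, in either case $\{\xi_s,-\infty<s<\infty\}$ is a centered, strictly stationary, $m$--decomposable Bernoulli shift with $E|\xi_0|^{\kappa_1}<\infty$ ($\kappa_1>4$) and $m$--approximation error of order $m^{-\kappa_2}$ ($\kappa_2>2$). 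Since $\{\xi_s\}$ is stationary, the law of $\sup_{t\ge1}t^{-\zeta}\big|\sum_{s=M+1}^{t+M}\xi_s\big|$ does not depend on $M$, so I would take $M=0$ and write $S_t=\sum_{s=1}^t\xi_s$; the task then becomes the H\'ajek--R\'enyi--type bound $\sup_{t\ge1}t^{-\zeta}|S_t|=O_P(1)$ for every $\zeta>1/2$.

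The key analytic ingredient is a moment bound for the partial sums. Fixing $\nu\in(2,\kappa_1]$, I would use the estimates for $m$--decomposable Bernoulli shifts already employed in the proof of Lemma \ref{hatga} (Lemma B.1 and Theorem B.1 in Aue et al.\ (2014))---that is, approximate $\xi_s$ by $m$--dependent sequences, apply a Rosenthal inequality for $m$--dependent variables, and optimize over $m=m(n)$---to obtain a constant $C_1$ with $E|S_n|^\nu\le C_1 n^{\nu/2}$ for all $n\ge1$. Stationarity gives $E|S_b-S_a|^\nu=E|S_{b-a}|^\nu\le C_1(b-a)^{\nu/2}$, and a M\'oricz--type maximal inequality then upgrades this to
\beq\label{xlaw-max}
E\Big(\max_{1\le j\le n}|S_j|\Big)^\nu\le C_2\,n^{\nu/2},\qquad n\ge1 .
\eeq

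With \eqref{xlaw-max} in hand the rest is routine dyadic blocking. Write $\{t\ge1\}=\bigcup_{j\ge0}B_j$ with $B_j=\{t:2^j\le t<2^{j+1}\}$; on $B_j$ one has $t^{-\zeta}\le 2^{-j\zeta}$, so by \eqref{xlaw-max} and Markov's inequality
\begin{align*}
P\Big(\sup_{t\in B_j}t^{-\zeta}|S_t|>\lambda\Big)
&\le \lambda^{-\nu}\,2^{-j\zeta\nu}\,E\Big(\max_{1\le t<2^{j+1}}|S_t|\Big)^\nu\\
&\le C_2\,2^{\nu/2}\,\lambda^{-\nu}\,2^{j\nu(1/2-\zeta)}.
\end{align*}
Since $\zeta>1/2$, the exponent $\nu(1/2-\zeta)$ is negative, so summing the geometric series over $j\ge0$ yields $P\big(\sup_{t\ge1}t^{-\zeta}|S_t|>\lambda\big)\le C_3\lambda^{-\nu}\to0$ as $\lambda\to\infty$. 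This gives $\sup_{t\ge1}t^{-\zeta}|S_t|=O_P(1)$ for each coordinate sequence, and adding the finitely many bounds (the first coordinate contributing $0$) proves \eqref{xbound}.

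The only genuinely substantial step is the partial--sum moment bound $E|S_n|^\nu\le C_1 n^{\nu/2}$; this is precisely where the strength of Assumption \ref{bern} ($\kappa_1>4$, $\kappa_2>2$) is used, and where one must balance the truncation error of the $m$--dependent approximation against the moment order $\nu$. Everything downstream---the maximal inequality and the dyadic--blocking argument---is standard, and the moment bound is already part of the Aue et al.\ (2014) machinery the paper relies on, so in practice this step should be bookkeeping rather than new work.
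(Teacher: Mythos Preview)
Your proposal is correct and follows essentially the same route as the paper: coordinate--by--coordinate reduction, a moment/maximal inequality for partial sums of the Bernoulli shift (the paper uses the fourth moment via Lemma~B.1 of Aue et al.\ (2014) and Billingsley's maximal inequality, you use a general $\nu$th moment and M\'oricz), then geometric blocking (base $e$ in the paper, dyadic in yours) combined with Markov's inequality and summation of the resulting geometric series. The only cosmetic difference is that you invoke stationarity to set $M=0$, whereas the paper keeps $M$ in the notation throughout; the arguments are otherwise interchangeable.
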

\begin{proof}
By the stationarity of $\bx_t$, Assumption \ref{bern} and Lemma \ref{ybern} yield for all $ 2\leq \ell\leq d$
(cf.\  Lemma B.1 of Aue et al.\ 2014
and  the maximal  inequality in  Billingsley  1968, p.\ 94) that
$$
E\max_{1\leq s\leq t}\left(\sum_{u=M+1}^{M+s}(x_{u,\ell}-Ex_{0,\ell})\right)^4\leq C_1t^2,
$$
where $C_1$ is a constant. We write that
\begin{align*}
\sup_{1\leq t<\infty}\frac{1}{t^\zeta}\left|\sum_{u=M+1}^{M+t}(x_{u,\ell}-Ex_{0,\ell})\right|
&\leq \sup_{1\leq i <\infty}\max_{e^{i-1}<t\leq e^i}\frac{1}{t^\zeta}\left|\sum_{u=M+1}^{M+t}(x_{u,\ell}-Ex_{0,\ell})\right|\\
&\leq \sup_{1\leq i <\infty}e^{-(i-1)\zeta}\max_{e^{i-1}<t\leq e^i}\left|\sum_{u=M+1}^{M+t}(x_{u,\ell}-Ex_{0,\ell})\right|\\
&\leq \sup_{1\leq i <\infty}e^{-(i-1)\zeta}\max_{1\leq t\leq e^i}\left|\sum_{u=M+1}^{M+t}(x_{u,\ell}-Ex_{0,\ell})\right|.
\end{align*}
Hence for any $v>0$ we have that
\begin{align*}
&P\left\{ \sup_{1\leq t <\infty}\frac{1}{t^\zeta}\left\|\sum_{s=M+1}^{t+M}{\bx}_s-\ba t\right\|>v \right\}\\
&\leq P\left\{ \sup_{1\leq i <\infty}e^{-(i-1)\zeta}\max_{e^{i-1}<t\leq e^i}\left|\sum_{u=M+1}^{M+t}(x_{u,\ell}-Ex_{0,\ell})\right|     >v \right\}\\
&\leq \sum_{i=1}^\infty P\left\{ \max_{1\leq t\leq e^i}\left|\sum_{u=M+1}^{M+t}(x_{u,\ell}-Ex_{0,\ell})\right|     >ve^{(i-1)\zeta} \right\}\\
&\leq \sum_{i=1}^\infty\frac{1}{v^4}e^{-4(i-1)\zeta}E\max_{1\leq t\leq e^i}\left|\sum_{u=M+1}^{M+t}(x_{u,\ell}-Ex_{0,\ell})\right|^4\\
&\leq C_2\sum_{i=1}^\infty\frac{1}{v^4}e^{-4(i-1)\zeta}e^{2i}\\
&\leq \frac{C_3}{v^4}
\end{align*}
with some constant $C_2$ and $C_3$ on account of $\zeta>1/2.$ Thus we conclude that for all $M$
$$
\lim_{v\to\infty}P\left\{\sup_{1\leq t <\infty}\frac{1}{t^\zeta}\left\|\sum_{s=M+1}^{t+M}{\bx}_s-\ba t\right\|>v
\right\}=0,
$$
completing the proof of Lemma \ref{xlaw}.
\end{proof}
\begin{lemma}\label{empeps} If $H_0$ and Assumptions \ref{bern}--\ref{a-adef} hold, then we have that
\begin{align*}
\sup_{1\leq s<\infty}\left|\sum_{u=M+1}^{M+s}\hat{\eps}_u-\left(\sum_{u=M+1}^{M+s}{\eps}_u  -\frac{s}{M}\sum_{u=1}^{M}{\eps}_u \right)   \right|\biggl/g(M,s)=o_P(1).
\end{align*}
\end{lemma}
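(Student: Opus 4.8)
The plan is to work from the exact identity for the partial sums of residuals and reduce everything to estimates already established. Under $H_0$ we have $y_u=\bx_u^\T\bbe_0+\eps_u$ for every $u$, so $\hat\eps_u=\eps_u-\bx_u^\T(\hat\bbe_M-\bbe_0)$ and hence
$$\sum_{u=M+1}^{M+s}\hat\eps_u=\sum_{u=M+1}^{M+s}\eps_u-\Big(\sum_{u=M+1}^{M+s}\bx_u\Big)^\T(\hat\bbe_M-\bbe_0).$$
First I would expand $\hat\bbe_M-\bbe_0=(\bX_M^\T\bX_M)^{-1}\bX_M^\T\bE_M$ using \eqref{aconv} and \eqref{xenor}: writing $\bS_M=\bX_M^\T\bE_M=\sum_{u=1}^M\bx_u\eps_u$, relation \eqref{xenor} gives $\|\bS_M\|=O_P(M^{1/2})$, and \eqref{aconv} then yields $\hat\bbe_M-\bbe_0=\frac1M\bA^{-1}\bS_M+\bU_M$ with $\|\bU_M\|=O_P(M^{-1})$, the remainder $\bU_M$ being the $O_P(M^{-3/2})$ term of \eqref{aconv} multiplied by $\bS_M$. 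Next, Lemma \ref{xlaw} applied with a fixed $\zeta\in(1/2,1)$ lets us write $\sum_{u=M+1}^{M+s}\bx_u=\ba s+\bR_s$ where $\sup_{s\ge1}\|\bR_s\|s^{-\zeta}=O_P(1)$ and $\ba=(1,Ex_{0,2},\ldots,Ex_{0,d-1},Ey_0)^\T$.

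The crucial observation is the algebraic identity $\ba^\T\bA^{-1}=\be_1^\T$, where $\be_1=(1,0,\ldots,0)^\T$. Indeed, $\bA$ is symmetric by \eqref{aexp}, and, using stationarity ($Ex_{0,d}=Ey_{-1}=Ey_0$), the vector $\ba$ above is exactly the first column of $\bA$; hence $\bA\be_1=\ba$, equivalently $\ba^\T\bA^{-1}=\be_1^\T$. Since $x_{u,1}\equiv1$, this gives $\ba^\T\bA^{-1}\bS_M=\be_1^\T\bS_M=\sum_{u=1}^M\eps_u$. Substituting the expansions of $\hat\bbe_M-\bbe_0$ and of $\sum_{u=M+1}^{M+s}\bx_u$,
$$\Big(\sum_{u=M+1}^{M+s}\bx_u\Big)^\T(\hat\bbe_M-\bbe_0)=\frac{s}{M}\sum_{u=1}^M\eps_u+s\ba^\T\bU_M+\bR_s^\T\tfrac1M\bA^{-1}\bS_M+\bR_s^\T\bU_M,$$
where, uniformly in $s$, the three error terms are $O_P(sM^{-1})$, $O_P(s^\zeta M^{-1/2})$ and $O_P(s^\zeta M^{-1})$.

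It then remains to divide by $g(M,s)$ and show each error is $o_P(1)$ uniformly in $s\ge1$. I would use the elementary bound $g(M,s)\ge c'M^{1/2-\gamma}s^\gamma$ for all $s\ge1$, together with $g(M,s)\asymp sM^{-1/2}$ for $s\ge M$, and split the supremum at $s=M$. For $s\le M$ the first error divided by $g(M,s)$ is $O_P(s^{1-\gamma}M^{\gamma-3/2})=O_P(M^{-1/2})$ and the second is $O_P(s^{\zeta-\gamma}M^{\gamma-1})=O_P(M^{\zeta-1})$; for $s\ge M$ they are $O_P(M^{-1/2})$ and $O_P(s^{\zeta-1})=O_P(M^{\zeta-1})$ respectively; the third error is dominated by the second throughout. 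Since $\gamma<1/2$ and $\zeta<1$, all bounds tend to $0$, which establishes the lemma.

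The step I expect to be the main obstacle is this final bookkeeping: one must keep the remainder in \eqref{aconv} at the sharp rate $O_P(M^{-3/2})$, so that against the linearly growing $\sum_{u=M+1}^{M+s}\bx_u\approx\ba s$ it produces only an $O_P(s/M)$ term, and then reconcile this term and the $O_P(s^\zeta)$ fluctuation of $\sum\bx_u$ supplied by Lemma \ref{xlaw} with the two growth regimes of the boundary function $g(M,s)$ — which is exactly where $\gamma<1/2$ and the freedom to choose $\zeta<1$ in Lemma \ref{xlaw} are used.
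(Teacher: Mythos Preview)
Your proof is correct and follows essentially the same route as the paper's: the same residual identity, the same split $\sum_{u=M+1}^{M+s}\bx_u=s\ba+\bR_s$ via Lemma \ref{xlaw}, the same expansion of $\hat\bbe_M-\bbe_0$ through \eqref{aconv}, and the same key algebraic fact $\ba^\T\bA^{-1}=\be_1^\T$ to extract $\frac{s}{M}\sum_{u=1}^M\eps_u$. The only cosmetic difference is that the paper bounds the fluctuation term using $\|\hat\bbe_M-\bbe_0\|=O_P(M^{-1/2})$ directly and the compact observation that $u^\zeta/[(1+u)(u/(1+u))^\gamma]$ is bounded on $(0,\infty)$ after the substitution $u=s/M$, whereas you keep the three cross terms separate and split the supremum at $s=M$; both arrive at the same $O_P(M^{\zeta-1})+O_P(M^{-1/2})$ bound.
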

\begin{proof}It follows from the definition of $\hat{\eps}_u$ that
\begin{align*}
\sum_{u=M+1}^{M+s}\hat{\eps}_u=
\sum_{u=M+1}^{M+s}{\eps}_u-\sum_{u=M+1}^{M+s}{\bx}_u^\T(\hat{\bbe}_M-\bbe_0).
\end{align*}
Using Lemmas \ref{hatga} and  \ref{xlaw} we get
\begin{align*}
\sup_{1\leq s<\infty}\left\|\sum_{u=M+1}^{M+s}\bx_u-s\ba\right\|\|\hat{\bbe}_M-\bbe_0\|\biggl/g(M,s)&=O_P(M^{-1/2})\sup_{1\leq s <\infty}\frac{s^\zeta}{g(M,s)}\\
&=O_P\left( M^{\zeta-1} \right)\sup_{0<u<\infty}\frac{u^\zeta}{(1+u)(u/(1+u))^\gamma}\\
&=o_P(1),
\end{align*}
since $u^\zeta/[(1+u)(u/(1+u))^\gamma]$ is bounded on $(0, \infty)$. Using now \eqref{be-1} and \eqref{be-2}
we conclude
$$
\ba^\T(\hat{\bbe}_M-\bbe_0)=\ba^\T\frac{1}{M}(\bA^{-1}+O_P(M^{-1/2}))\bX^\T_M\bE_M=\frac{1}{M}\ba^\T\bA^{-1}\bX^\T_M\bE_M+O_P(1/M),
$$
since according to \eqref{xenor}
$$
\left\|\bX^\T_M\bE_M\right\|=O_P(M^{1/2}).
$$
Observing that
$$
\frac{1}{M}\frac{s}{g(M,s)}=o(1)
$$
we get that
$$
\left|\frac{s}{M}\ba^\T\left[\left(\frac{1}{M}\bX^\T_M\bX_M\right)^{-1}-\bA^{-1}\right]\bX^\T_M\bE_M  \right|\biggl/g(M,s)=o_P(1).
$$
We note that $\ba$ is the first row (column) of $\bA$ so $\ba^\T\bA^{-1}=(1,0,\ldots, 0)$ and $x_{t,1}=1$ for all $t$ by definition. Hence
$$
\ba^\T\bA^{-1}\bX^\T_M\bE_M=\sum_{u=1}^M\eps_u,
$$
completing the proof of Lemma \ref{empeps}.
\end{proof}
\begin{lemma}\label{epsapp} If Assumption \ref{bern} holds, then for every $M$ we can define two independent Wiener processes $\{W_{M,1}(s), 0\leq s \leq M\}$ and $\{W_{M,2}(s), 0\leq s <\infty\}$ such that
\beq\label{oneapp}
\left|\sum_{u=1}^M\eps_u-\sigma W_{M,1}(M)\right\|=O_P(M^{1/2-\delta})
\eeq
and
\beq\label{twoapp}
\sup_{1\leq s <\infty}\frac{1}{s^{1/2-\delta}}\left|\sum_{u=M+1}^{M+s}\eps_u-\sigma W_{M,2}(s)\right|=O_P(1)
\eeq
with some $\delta>0$.
\end{lemma}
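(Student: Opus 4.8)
The plan is to derive both displays from a single strong invariance principle (a Koml\'os--Major--Tusn\'ady type approximation) for scalar $m$--decomposable Bernoulli shifts, applied once to the training block $\eps_1,\ldots,\eps_M$ and once to the post--training block, after decoupling the two so that the resulting Wiener processes are independent.

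First I would note that $\eps_t$, being the last coordinate of $\bz_t=\bg(\eta_t,\eta_{t-1},\ldots)$, is itself a scalar Bernoulli shift $\eps_t=h(\eta_t,\eta_{t-1},\ldots)$, which by \eqref{ber-2} is $m$--decomposable with $E|\eps_t|^{\kappa_1}<\infty$, $\kappa_1>4$, $(E|\eps_t-\eps_{t,m}|^{\kappa_1})^{1/\kappa_1}\le cm^{-\kappa_2}$, $\kappa_2>2$; and by Assumption \ref{bernmean}, $E\eps_t=0$, $E\eps_t^2=\sigma^2$, $E\eps_t\eps_s=0$ for $s\ne t$, so the long--run variance of $\{\eps_t\}$ equals exactly $\sigma^2$. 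Under these assumptions there is, on a suitably enlarged probability space, a Wiener process $W$ with $\sum_{u=1}^n\eps_u=\sigma W(n)+O_P(n^{1/2-\delta})$, indeed with an almost sure rate $o(n^{1/2-\delta})$, for some $\delta=\delta(\kappa_1,\kappa_2)>0$; this is the strong approximation for causal weakly dependent sequences (cf.\ Wu (2007), Horv\'ath et al.\ (2019+) and the literature on strong invariance principles under physical/decomposability dependence). Applying this to $\eps_1,\ldots,\eps_M$ and letting $W_{M,1}$ be the resulting Wiener process gives \eqref{oneapp} by reading off the bound at $n=M$.

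For \eqref{twoapp} the delicate point is that $\sum_{u=M+1}^{M+s}\eps_u$ still depends on the innovations $\eta_u$ with $u\le M$, so a naive second Wiener process would not be independent of $W_{M,1}$. I would decouple at time $M$: let $\{\eta_j':j\le M\}$ be fresh i.i.d.\ copies of $\eta_0$, independent of $\{\eta_j\}$, and set $\tilde\eps_u=h(\eta_u,\ldots,\eta_{M+1},\eta_M',\eta_{M-1}',\ldots)$ for $u>M$. Then $\{\tilde\eps_{M+k}:k\ge1\}$ has the same law as $\{\eps_{M+k}:k\ge1\}$ (both are $h$ applied to shifts of an i.i.d.\ array), so it is again an $m$--decomposable, mean--zero, variance--$\sigma^2$, uncorrelated Bernoulli shift with the same constants, and it is independent of $\sigma(\eta_j:j\le M)$, hence of $\{\eps_u:u\le M\}$. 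Since $\tilde\eps_u$ is one realization of the $(u-M)$--dependent approximant of $\eps_u$, \eqref{ber-2} gives $(E|\eps_u-\tilde\eps_u|^{\kappa_1})^{1/\kappa_1}\le c(u-M)^{-\kappa_2}$, and as $\kappa_2>1$ Minkowski's inequality yields $\big\|\sum_{u=M+1}^{\infty}|\eps_u-\tilde\eps_u|\big\|_{\kappa_1}<\infty$, so $\sup_{1\le s<\infty}\big|\sum_{u=M+1}^{M+s}(\eps_u-\tilde\eps_u)\big|=O_P(1)$, which is absorbed into \eqref{twoapp} because $s^{1/2-\delta}\ge1$ for $s\ge1$. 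Applying the strong approximation to $\{\tilde\eps_{M+k}:k\ge1\}$ produces $W_{M,2}$, independent of $W_{M,1}$ by construction (disjoint underlying innovations, independent auxiliary randomness), with $\sum_{u=M+1}^{M+s}\tilde\eps_u=\sigma W_{M,2}(s)+o(s^{1/2-\delta})$ almost surely; a quantity of that controlled growth has almost surely finite weighted supremum, so \eqref{twoapp} follows. If only the in--probability form of the strong approximation is used, the same conclusion follows from the geometric--blocking argument already employed in the proof of Lemma \ref{xlaw}, splitting $[1,\infty)$ into $(e^{i-1},e^i]$ and summing a geometric series in $i$ (convergent since $\delta$ is strictly less than the rate exponent).

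I expect the main obstacle to be the analytic input: a strong invariance principle with an explicit polynomial gain in the error rate under only $\kappa_1>4$ moments and a polynomial decomposability rate $m^{-\kappa_2}$, which must be imported from the strong--approximation literature. The remaining steps — checking that the decoupled sequence $\{\tilde\eps_{M+k}\}$ retains all of the structure (in particular long--run variance $\sigma^2$), that the replacement error is $O_P(1)$ uniformly in $s$, and that the two constructed Wiener processes are genuinely independent — are routine once that input is in hand.
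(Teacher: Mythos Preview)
Your proposal is correct and follows essentially the same strategy as the paper, which simply observes that $\eps_t$ is an $m$--decomposable Bernoulli shift and invokes Theorem~B.1 of Aue et al.\ (2014) to obtain both approximations. Your explicit decoupling at time $M$ to secure the independence of $W_{M,1}$ and $W_{M,2}$, together with your observation that Assumption~\ref{bernmean} is what pins the long--run variance to $\sigma^2$, supply details that the paper's two--line citation leaves implicit.
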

\begin{proof} According to Assumption \ref{bern}, $\eps_t$ is an $m$ decomposable Bernoulli shift. Hence Theorem B.1  of Aue et al.\ (2014) implies both \eqref{oneapp} and \eqref{twoapp}.
\end{proof}

\noindent
{\it Proof of Theorem \ref{th-null}.} First we note that by the proof of Lemma \ref{empeps} we have
\beq\label{sicos}
\hat{\sigma}_M^2\stackrel{P}{\to}\sigma^2.
\eeq
Hence according to Lemma \ref{empeps} we need to show only
\begin{align}\label{f-0}
\sup_{1\leq s<\infty} \frac{\displaystyle\left| \sum_{u=M+1}^{M+s}\eps_u-\frac{s}{M}\sum_{u=1}^M\eps_u\right|}{\displaystyle \sigma M^{1/2}\left(1+\frac{s}{M}\right)\left(\frac{s}{s+M}\right)^\gamma}\;\;\stackrel{{\mathcal D}}{\to}\;\;\sup_{0<t\leq 1}\frac{|W(t)|}{t^\gamma},
\end{align}
where $W$ stands for a Wiener process. Using \eqref{oneapp} we get
\begin{align}\label{f-1}
\sup_{1\leq s<\infty}\left|\frac{s}{M}\left(\sum_{u=1}^M\eps_u- \sigma W_{M,1}(M)\right)  \right|\biggl/g(M,s)
&=O_P(M^{1/2-\delta})\sup_{1\leq s <\infty}\frac{s/M}{g(M,s)}\\
&=O_P(M^{-\delta})\sup_{0<x<\infty}\frac{x}{(1+x)(x/(1+x))^\gamma}\notag\\
&=o_P(1).\notag
\end{align}
Similarly, \eqref{twoapp} implies
\begin{align}\label{f-2}
\sup_{1\leq s <\infty}&\left|\sum_{u=M+1}^{M+s}\eps_u-\sigma W_{M,2}(s)  \right|\biggl/g(M,s)\\
&=O_P(1)\sup_{1\leq s <\infty}\frac{s^{1/2-\delta}}{g(M,s)}\notag\\
&=O_P(M^{-\delta})\sup_{0<x<\infty}\frac{x^{1/2-\delta}}{(1+x)(x/(1+x))^\gamma}\notag\\
&=o_P(1),\notag
\end{align}
since we can assume without loss of generality that $0<\delta<1/2-\gamma$. By the scale transformation of the Wiener process we have that
\begin{align}\label{f-3}
\sup_{1\leq s <\infty}\frac{\displaystyle \left|W_{M,2}(s)-\frac{s}{M}W_{M,1}(M)\right|}{M^{1/2}(1+s/M)(s/(M+s))^\gamma}
&\stackrel{{\mathcal D}}{=}\sup_{1\leq s <\infty}\frac{\displaystyle \left|W_{2}(s/M)-\frac{s}{M}W_{1}(1)\right|}{(1+s/M)(s/(M+s))^\gamma}\\
&=\sup_{1/M\leq  x <\infty}\frac{\displaystyle \left|W_{2}(x)-xW_{1}(1)\right|}{(1+x)(x/(x+1))^\gamma},\notag\\
&\stackrel{{\mbox{a.s.}}}{\to}
\sup_{0< x <\infty}\frac{\displaystyle \left|W_{2}(x)-xW_{1}(1)\right|}{(1+x)(x/(x+1))^\gamma},\notag
\end{align}
where $W_1$ and $W_2$ are independent Wiener processes. It is shown in Chu et al.\ (1996) (cf.\ also Horv\'ath et al., 2004) that
\begin{align*}
\left\{W_1(x)-xW_2(1), 0\leq x<\infty \right\}\stackrel{{\mathcal D}}{=}\left\{(1+x)W(x/(1+x)), 0\leq x<\infty \right\},
\end{align*}
where $W$ stands for a Wiener process. Hence
$$
\sup_{0< x <\infty}\frac{\displaystyle \left|W_{1}(x)-xW_{2}(1)\right|}{(1+x)(x/(x+1))^\gamma}\stackrel{{\mathcal D}}{=}\sup_{0<x\leq 1}\frac{|W(x)|}{x^\gamma},
$$
and therefore \eqref{f-0} follows from \eqref{f-1}--\eqref{f-3}.
\qed

\medskip

\section{Proof of Theorems \ref{th-alt-0}--\ref{expo}}\label{sec-pr-2}

\noindent
{\it Proof of Theorem \ref{th-alt-0}.} It follows from the proof of Theorem \ref{th-null}
$$
\max_{1\leq s \leq s^*}\frac{\Gamma(M,s)}{g(M,s)}=O_P(1).
$$
Using Lemma \ref{xlaw} we get
\beq\label{xnull}
\max_{1\leq s \leq s^*}\frac{1}{s}\left\|\sum_{u=M+1}^{M+s}\bx_u\right\|=O_P(1).
\eeq
Following the proof of Lemma \ref{xnull}, Assumption \ref{destat} implies that for any $\zeta>1/2$
$$
\max_{s^*<s<\infty}\frac{1}{(s-s^*)^\zeta}\left\|\sum_{u=M+s^*}^{M+s}(\bx_u-E\bx_u)\right\|=O_P(1)
$$
and
$$
\lim_{s-s^*\to \infty}\left\|\sum_{u=M+s^*}^{M+s}(E\bx_u-(1, Ex_{0,2},\ldots ,Ex_{0,d-1}, Ey_A)^\T)\right\|=0.
$$
Thus we conclude
\beq\label{xalta}
\max_{s^*<s<\infty}\frac{1}{s-s^*}\left\|\sum_{u=M+s^*}^{M+s}\bx_u\right\|=O_P(1).
\eeq
Hence Lemma \ref{hatga} yields
$$
\max_{ s^*<s<\infty}\frac{\displaystyle \sum_{u=M+1}^{M+s}\left|\bx_u^\T(\bbe_0-\hat{\bbe}_M)\right|}{g(M,s)}=O_P(1).
$$
Let
$$
s_M=C\left(\frac{M^{1/2-\gamma}}{|\Delta_M|}  \right)^{1/(1-\gamma)}.
$$
We showed that
\begin{align*}
\frac{|\Gamma(M,s_M)|}{g(M, s_M)}&=O_P(1)+ \frac{\displaystyle\left|\sum_{u=M+1}^{M+s_M}\bx^\T_u(\bbe_0-\bde_M)\right|}{\hat{\sigma}_M g(M,s_M)}\\
&=O_P(1)+ \frac{s_M|\Delta|}{c\sigma M^{1/2}(s_M/M)^\gamma}(1+o_P(1)).
\end{align*}
Assumptions of Theorem \ref{th-alt-0} yield
$$
\lim_{C\to \infty}\liminf_{M\to\infty}\frac{s_M|\Delta_M|}{c\sigma M^{1/2}(s_M/M)^\gamma}=\infty,
$$
completing the proof.
\qed\\

The proof of Theorem \ref{alt-stat} is based on a series of lemmas. The first lemma considers the detector before the time of change and it will be used in the proofs of Theorems \ref{th-alt-2}--\ref{expo} as well.

\begin{lemma}\label{cn-00} If Assumptions \ref{bern}--\ref{a-adef} hold and
\beq\label{s*m}
s^*/M\to 0,
\eeq

then we have that
\beq\label{s*m-1}
\max_{1\leq s \leq s^*}\frac{\displaystyle\left|\sum_{u=M+1}^{M+s}\hat{\eps}_u\right|}{\sigma M^{1/2}(1+s/M)(s/(M+s))^\gamma}
=O_P\left(\left(\frac{s^*}{M}\right)^{\gamma-1/2}\right).
\eeq
\end{lemma}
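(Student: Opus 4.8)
The plan is to reduce, on the range $1\le s\le s^*$, the partial sums of residuals to the ``oracle'' quantity $\sum_{u=M+1}^{M+s}\eps_u-(s/M)\sum_{u=1}^M\eps_u$, and then to control the latter by a weighted maximal inequality together with Brownian scaling. Note first that since $s^*/M\to0$ and $\gamma<1/2$, the target rate $(s^*/M)^{\gamma-1/2}$ actually diverges; it will therefore be enough to establish the sharper bound $O_P\big((s^*/M)^{1/2-\gamma}\big)$, because for $s^*/M<1$ the map $a\mapsto(s^*/M)^a$ is decreasing and $1/2-\gamma>\gamma-1/2$.

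First I would observe that, although we work under $H_A$, for every $1\le s\le s^*$ the observations $y_1,\dots,y_{M+s}$ and the regressors $\bx_1,\dots,\bx_{M+s}$ still obey the null model, since the break affects only indices $>M+s^*$; moreover $\hat\bbe_M$ depends solely on the training sample. Hence $\hat\eps_u=\eps_u-\bx_u^\T(\hat\bbe_M-\bbe_0)$ for $M+1\le u\le M+s^*$, and the argument in the proof of Lemma \ref{empeps} — using Lemma \ref{xlaw}, \eqref{be-1}, \eqref{be-2}, \eqref{xenor} and the identity $\ba^\T\bA^{-1}=(1,0,\dots,0)$ — goes through verbatim with every supremum restricted to $1\le s\le s^*$, yielding
$$
\max_{1\le s\le s^*}\left|\sum_{u=M+1}^{M+s}\hat\eps_u-\left(\sum_{u=M+1}^{M+s}\eps_u-\frac{s}{M}\sum_{u=1}^M\eps_u\right)\right|\bigg/g(M,s)=o_P(1),
$$
which is trivially $O_P\big((s^*/M)^{\gamma-1/2}\big)$ since that quantity diverges. (The normalization in the statement equals $(\sigma/c)g(M,s)$, so this only changes constants.)

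It remains to bound $\max_{1\le s\le s^*}\big|\sum_{u=M+1}^{M+s}\eps_u-(s/M)\sum_{u=1}^M\eps_u\big|$ divided by $g(M,s)$. Because $s^*/M\to0$, for large $M$ we have $g(M,s)\ge c_0M^{1/2-\gamma}s^\gamma$ for all $1\le s\le s^*$ and some $c_0>0$. The centering term contributes $(s/M)\big|\sum_{u=1}^M\eps_u\big|=O_P(sM^{-1/2})$ by \eqref{oneapp} (or the central limit theorem of Lemma \ref{hatga}), so after division and maximization it is $O_P\big((s^*/M)^{1-\gamma}\big)$, which is $O_P\big((s^*/M)^{\gamma-1/2}\big)$ since $1-\gamma\ge\gamma-1/2$. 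For the remaining term I would invoke the strong approximation \eqref{twoapp}: fixing $0<\delta<1/2-\gamma$,
$$
\max_{1\le s\le s^*}\frac{\big|\sum_{u=M+1}^{M+s}\eps_u\big|}{M^{1/2-\gamma}s^\gamma}\le\frac{\sigma}{M^{1/2-\gamma}}\sup_{0<s\le s^*}\frac{|W_{M,2}(s)|}{s^\gamma}+\frac{O_P(1)}{M^{1/2-\gamma}}(s^*)^{1/2-\delta-\gamma}.
$$
By self-similarity of Brownian motion, $\sup_{0<s\le s^*}|W_{M,2}(s)|/s^\gamma\stackrel{\mathcal D}{=}(s^*)^{1/2-\gamma}\sup_{0<t\le1}|W(t)|/t^\gamma$, and the right-hand supremum is a.s.\ finite because $\gamma<1/2$ — it is precisely the limiting random variable of Theorem \ref{th-null}. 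Hence the first term above is $O_P\big((s^*/M)^{1/2-\gamma}\big)$, and the second equals $(s^*/M)^{1/2-\gamma}(s^*)^{-\delta}\le(s^*/M)^{1/2-\gamma}$. Collecting the three estimates and using once more $(s^*/M)^{1/2-\gamma}\le(s^*/M)^{\gamma-1/2}$ completes the argument.

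The only mildly delicate point is the very first step — justifying that the $H_0$-type reduction of Lemma \ref{empeps} is legitimate under $H_A$. This is immediate once one notices that every quantity involved depends only on pre-break observations and the training-sample estimator, so no property of the post-break data is used, and the deterministic inequalities bounding the remainders (which are uniform in $M$) only shrink when the supremum range is cut to $1\le s\le s^*$. Everything after that is a routine weighted maximal inequality combined with the scaling of Brownian motion.
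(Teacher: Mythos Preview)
Your proof is correct and follows essentially the same route as the paper: reduce to the oracle quantity via the argument of Lemma \ref{empeps}, then control it using the strong approximation \eqref{twoapp} together with the scale transformation of the Wiener process. Your remark about the exponent is on point --- the paper's own proof in fact establishes the sharper rate $O_P\big((s^*/M)^{1/2-\gamma}\big)$ throughout (including for the reduction step, where restricting to $s\le s^*$ tightens the remainders of Lemma \ref{empeps}), and it is this sharper rate that is actually used downstream in Lemma \ref{cn-2}.
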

\begin{proof} It follows from the proof of Lemma \ref{empeps} that
\begin{align}\label{cn-2-1}
\max_{1\leq s\leq s^*}\frac{\displaystyle \left|\sum_{u=M+1}^{M+s}\hat{\eps}_u-\left( \sum_{u=M+1}^{M+s}\eps_u-\frac{s}{M} \sum_{u=1}^M\eps_u  \right)   \right|}{M^{1/2}(1+s/M)^{1-\gamma}(s/M)^\gamma}
=O_P\left(\left( \frac{s^*}{M} \right)^{1/2-\gamma}\right).
\end{align}
By \eqref{twoapp} we have
\begin{align*}
\max_{1\leq s\leq s^*}&\frac{\displaystyle \left| \sum_{u=M+1}^{M+s}\eps_u-\sigma W_{M,2}(s)   \right|}{M^{1/2}(1+s/M)^{1-\gamma}(s/M)^\gamma}\\
&=
O_P\left(\max_{1\leq s\leq s^*}\frac{ s^{1/2-\delta}}{M^{1/2}(1+s/M)(s/(M+s))^\gamma}\right)\\
&=O_P\left(\left( \frac{s^*}{M} \right)^{1/2-\gamma}\right).
\end{align*}
Using the scale transformation of the Wiener process we obtain
$$
\max_{1\leq s\leq s^*}\frac{\displaystyle \left|  W_{M,2}(s)   \right|}{ M^{1/2}(1+s/M)^{1-\gamma}(s/M)^\gamma}\stackrel{{\mathcal D}}{=}
\max_{1/M\leq u\leq s^*/M}\frac{\displaystyle \left|  W(u)   \right|}{ (1+u)^{1-\gamma}u^\gamma},
$$
where $W$ is a Wiener process and
$$
\left(\frac{s^*}{M}\right)^{\gamma-1/2}\max_{1/M\leq u\leq s^*/M}\frac{\displaystyle \left|  W(u)   \right|}{ (1+u)^{1-\gamma}u^\gamma}
\stackrel{{\mathcal D}}{\to}\sup_{0<u\leq 1}\frac{|W(u)|}{u^\gamma}.
$$
Thus we get
$$
\max_{1\leq s\leq s^*}\frac{\displaystyle \left| \sum_{u=M+1}^{M+s}\eps_u  \right|}{M^{1/2}(1+s/M)^{1-\gamma}(s/M)^\gamma}=O_P\left(\left( \frac{s^*}{M} \right)^{1/2-\gamma}\right)
$$
and similar arguments yield
$$
\max_{1\leq s\leq s^*}\frac{\displaystyle \left| \frac{s}{M} \sum_{u=1}^{M}\eps_u  \right|}{M^{1/2}(1+s/M)^{1-\gamma}(s/M)^\gamma}
=O_P\left(\left( \frac{s^*}{M} \right)^{1/2-\gamma}\right).
$$
Using \eqref{cn-2-1} we obtain immediately \eqref{s*m-1}.
\end{proof}

 We can assume without loss of generality the $\Delta=\Delta_M>0$. Let
$$
\cN=\cN(M,x)=\left[ \frac{c\sigma M^{1/2-\gamma}} {\Delta_M}-
x\left( \frac{c^{1/2-\gamma}M^{(1/2-\gamma)^2}}{(\Delta_M/\sigma )^{3/2-2\gamma}}  \right)^{1/(1-\gamma)}  \right]^{1/(1-\gamma)}.
$$

\begin{lemma}\label{cn-1} If Assumptions \ref{destat}, \ref{ta-2} and \eqref{ta-3} hold, then we have for all $x$ that
\beq\label{cn-1-1}
\Delta\cN^{1/2}\to \infty,
\eeq
\beq\label{cn-1-1-1}
\frac{s^*}{\cN}\to 0,
\eeq
\beq\label{cn-1-2}
\left(\frac{\cN}{M}\right)^{1/2-\gamma}\left( c- \frac{\Delta_M\cN}{\sigma M^{1/2}(\cN/M)^\gamma} \right)\to x.
\eeq
\end{lemma}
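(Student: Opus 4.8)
The three assertions are statements about the \emph{deterministic} sequence $\cN(M,x)$ alone, so no probabilistic input is required; the plan is pure asymptotic bookkeeping in $M$ and $\Delta_M$, driven by Assumptions~\ref{destat}, \ref{ta-2} and by \eqref{ta-3}. Working under the normalisation $\Delta_M>0$ fixed just before the statement, I would first name the two pieces of $\cN$: set $A_M=c\sigma M^{1/2-\gamma}/\Delta_M$ and $B_M=\bigl(c^{1/2-\gamma}M^{(1/2-\gamma)^2}/(\Delta_M/\sigma)^{3/2-2\gamma}\bigr)^{1/(1-\gamma)}$, so that $\cN=(A_M-xB_M)^{1/(1-\gamma)}$ and, crucially, $\cN^{1-\gamma}=A_M-xB_M$. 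Matching exponents then gives the single identity on which everything rests,
\[
B_M=A_M^{(1/2-\gamma)/(1-\gamma)}\,\frac{\sigma}{\Delta_M},\qquad\text{equivalently}\qquad \frac{B_M}{A_M}=\frac{\sigma}{\Delta_M a_M^{1/2}},
\]
the last equality because $a_M=A_M^{1/(1-\gamma)}$.

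The only substantive step is to show $B_M/A_M\to0$, i.e.\ $\Delta_M a_M^{1/2}\to\infty$. Substituting $a_M=(c\sigma)^{1/(1-\gamma)}M^{(1/2-\gamma)/(1-\gamma)}\Delta_M^{-1/(1-\gamma)}$ and simplifying, this quantity collapses to
\[
\Delta_M a_M^{1/2}=(c\sigma)^{1/(2(1-\gamma))}\bigl(M^{1/2}\Delta_M\bigr)^{(1-2\gamma)/(2(1-\gamma))},
\]
which tends to $\infty$ by Assumption~\ref{ta-2}, since $0\le\gamma<1/2$ makes the exponent $(1-2\gamma)/(2(1-\gamma))$ strictly positive. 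Once $B_M/A_M\to0$ is in hand, $\cN=a_M(1-xB_M/A_M)^{1/(1-\gamma)}=a_M(1+o(1))$, and likewise $\cN^{1-\gamma}=A_M(1+o(1))$ and $\cN^{1/2-\gamma}=A_M^{(1/2-\gamma)/(1-\gamma)}(1+o(1))$; these are all the asymptotics needed below.

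With these, \eqref{cn-1-1} is immediate: $\Delta_M\cN^{1/2}\sim\Delta_M a_M^{1/2}\to\infty$. For \eqref{cn-1-1-1} I would note that Assumption~\ref{destat} forces $\Delta_M$ to converge to a finite limit, hence $\Delta_M$ is bounded, so $a_M\ge c' M^{(1/2-\gamma)/(1-\gamma)}$ for some $c'>0$; since $(1/2-\gamma)/(1-\gamma)\le1$ we get $\theta<\bigl((1/2-\gamma)/(1-\gamma)\bigr)^2\le(1/2-\gamma)/(1-\gamma)$, and with \eqref{ta-3} this yields $s^*/\cN=O\bigl(M^{\theta-(1/2-\gamma)/(1-\gamma)}\bigr)\to0$. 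For \eqref{cn-1-2} the identity $\cN^{1-\gamma}=A_M-xB_M$ together with $(\cN/M)^\gamma=\cN^\gamma M^{-\gamma}$ gives
\[
\frac{\Delta_M\cN}{\sigma M^{1/2}(\cN/M)^\gamma}=\frac{\Delta_M\cN^{1-\gamma}}{\sigma M^{1/2-\gamma}}=\frac{\Delta_M A_M}{\sigma M^{1/2-\gamma}}-\frac{x\Delta_M B_M}{\sigma M^{1/2-\gamma}}=c-\frac{x\Delta_M B_M}{\sigma M^{1/2-\gamma}},
\]
so the leading $c$ cancels and the bracket in \eqref{cn-1-2} equals $x\Delta_M B_M/(\sigma M^{1/2-\gamma})$. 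Substituting $\Delta_M B_M=\sigma A_M^{(1/2-\gamma)/(1-\gamma)}$, multiplying by the accompanying power of $\cN/M$, and using $\cN^{1/2-\gamma}=A_M^{(1/2-\gamma)/(1-\gamma)}(1+o(1))$, the powers of $A_M$ and $M$ cancel and the expression reduces to $x(1+o(1))$, which is the claim.

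I expect the only real obstacle to be the exponent arithmetic in the two displayed collapses above — verifying that $\Delta_M a_M^{1/2}$ is exactly a power of the quantity $M^{1/2}\Delta_M$ supplied by Assumption~\ref{ta-2}, and that the residual powers of $A_M$ and $M$ in \eqref{cn-1-2} match up so that only $x$ survives. The remaining ingredients are routine: the binomial expansions $(1-xB_M/A_M)^{p}=1+O(B_M/A_M)$ for the finitely many exponents $p$ that occur, the boundedness of $\Delta_M$ coming from Assumption~\ref{destat}, and the elementary inequality $(1/2-\gamma)/(1-\gamma)\le1$.
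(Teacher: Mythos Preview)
Your approach is essentially the paper's own: both arguments are pure exponent bookkeeping, and your introduction of $A_M,B_M$ together with the clean identity $B_M/A_M=\sigma/(\Delta_M a_M^{1/2})$ is in fact a tidier packaging of exactly the computation the paper carries out. For \eqref{cn-1-1-1} you are more self-contained than the paper, which simply cites Lemma~3.1 of Aue and Horv\'ath (2004); your use of the boundedness of $\Delta_M$ (from Assumption~\ref{destat}) to get $a_M\ge c'M^{(1/2-\gamma)/(1-\gamma)}$, combined with $\theta<\bigl((1/2-\gamma)/(1-\gamma)\bigr)^2\le(1/2-\gamma)/(1-\gamma)$, is a correct and transparent replacement.

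There is, however, one place where your ``the powers of $A_M$ and $M$ cancel'' will \emph{not} come out as you expect when you actually carry it through. With the exponent $1/2-\gamma$ exactly as written in \eqref{cn-1-2}, your own computation gives the bracket equal to $x(a_M/M)^{1/2-\gamma}$ and the prefactor $(\cN/M)^{1/2-\gamma}\sim(a_M/M)^{1/2-\gamma}$, so the product is $x(a_M/M)^{1-2\gamma}\to0$, not $x$. The statement of the lemma carries a sign typo: the prefactor should be $(\cN/M)^{\gamma-1/2}$, as is confirmed by how the result is invoked in Lemmas~\ref{cn-2} and \ref{cn-4}. With that exponent your cancellation is exact, $(a_M/M)^{-(1/2-\gamma)}\cdot x(a_M/M)^{1/2-\gamma}=x$, and the argument is complete. (The paper's own proof silently computes with the corrected exponent while displaying the uncorrected one, which is why its final ``$=1$'' looks puzzling if read literally.) Flag the typo and your write-up will be both correct and cleaner than the original.
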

\begin{proof} We note that
$$
\frac{\cN^{1-\gamma}\Delta}{M^{1/2-\gamma}}\to{c\sigma}
$$
and therefore
\begin{align*}
(\Delta\cN^{1/2})^{-(1-\gamma)}=O\left((\Delta M^{1/2} )^{-(1/2-\gamma)}  \right)=o(1)
\end{align*}
on account of Assumption \ref{ta-2}. \\
The result in   \eqref{cn-1-1-1} is proven in Lemma 3.1 of Aue and Horv\'ath (2004).\\
We claim that
\beq\label{cn-1-1/3}
\left(\frac{M^{(1/2-\gamma)^2}}{\Delta^{3/2-2\gamma}}\right)^{1/(1-\gamma)}\Biggr/\left(\frac{M^{1/2-\gamma}}{\Delta}\right)\to 0.
\eeq
Indeed,
\begin{align}\label{cn-1-1/4}
\frac{M^{(1/2-\gamma)^2}}{\Delta^{3/2-2\gamma}}\frac{\Delta^{1-\gamma}}{M^{(1/2-\gamma)(1-\gamma)}}=\left( \frac{1}{M^{1/2}\Delta } \right)^{1/2-\gamma}\to 0
\end{align}
by Assumption \ref{ta-2}. Since
\begin{align*}
\frac{\Delta_M\cN}{\sigma M^{1/2}(\cN/M)^\gamma}&=
\frac{\Delta}{\sigma}M^{\gamma-1/2}\left(\frac{c\sigma M^{1/2-\gamma}}{\Delta}
-x\left( \frac{c^{1/2-\gamma}M^{(1/2-\gamma)^2}}{(\Delta_M/\sigma )^{3/2-2\gamma}}  \right)^{1/(1-\gamma)}\right)\\
&=c-x\left(\frac{c \sigma }{M^{1/2}\Delta}\right)^{(1/2-\gamma)/(1-\gamma)}
\end{align*}
and by \eqref{cn-1-1/4} we have
\begin{align*}
\lim_{M\to\infty}&\left(\frac{\cN}{M}\right)^{1/2-\gamma}\left(\frac{c \sigma }{M^{1/2}\Delta}\right)^{(1/2-\gamma)/(1-\gamma)}\\
&=\lim_{M\to\infty}\left[\left(\frac{c\sigma M^{1/2-\gamma}} {\Delta_M}\right)^{1/(1-\gamma)}\frac{1}{M}\right]^{1/2-\gamma}\left(\frac{c \sigma }{M^{1/2}\Delta}\right)^{(1/2-\gamma)/(1-\gamma)}\\
&=1,
\end{align*}
so  the proof of \eqref{cn-1-2} is complete.
\end{proof}

\begin{lemma}\label{cn-2} If Assumptions \ref{bern}--\ref{a-adef}, \ref{destat}, \ref{ta-2} and \eqref{ta-3} hold, the we have that
$$
\left(\frac{\cN}{M}\right)^{\gamma-1/2}\left(\max_{1\leq s \leq s^*}\frac{\displaystyle\left|\sum_{u=M+1}^{M+s}\hat{\eps}_u\right|}{\sigma M^{1/2}(1+s/M)(s/(M+s))^\gamma}-
\frac{\Delta_M\cN}{\sigma M^{1/2}(\cN/M)^\gamma}\right)\stackrel{P}{\to}-\infty .
$$
\end{lemma}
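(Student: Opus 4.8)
The plan is to write the bracketed quantity as a difference $A_M-B_M$ of
\[
A_M=\max_{1\leq s\leq s^*}\frac{\bigl|\sum_{u=M+1}^{M+s}\hat{\eps}_u\bigr|}{\sigma M^{1/2}(1+s/M)(s/(M+s))^\gamma},\qquad B_M=\frac{\Delta_M\cN}{\sigma M^{1/2}(\cN/M)^\gamma},
\]
and to show that after multiplication by the deterministic factor $(\cN/M)^{\gamma-1/2}$ the first piece is $o_P(1)$ while the second tends to $+\infty$; the assertion $(\cN/M)^{\gamma-1/2}(A_M-B_M)\stackrel{P}{\to}-\infty$ is then immediate.

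First I would record the deterministic behaviour of $\cN$. From the definition of $\cN$ together with Assumption \ref{ta-2} (which makes the second summand inside the bracket defining $\cN$ of smaller order than the first), $\cN^{1-\gamma}=(c\sigma+o(1))M^{1/2-\gamma}/\Delta_M$; this is exactly the relation noted at the start of the proof of Lemma \ref{cn-1}. Rewriting, $B_M=\sigma^{-1}\cN^{1-\gamma}\Delta_M M^{\gamma-1/2}\to c>0$, while $\cN/M=(c\sigma+o(1))^{1/(1-\gamma)}(M^{1/2}\Delta_M)^{-1/(1-\gamma)}\to 0$, again by Assumption \ref{ta-2}. Hence $(\cN/M)^{\gamma-1/2}\to+\infty$, so that $(\cN/M)^{\gamma-1/2}B_M\to+\infty$. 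Next I would bound $A_M$: since \eqref{ta-3} forces $s^*/M\to 0$, Lemma \ref{cn-00} applies and gives $A_M=O_P\bigl((s^*/M)^{1/2-\gamma}\bigr)$, whence
\[
(\cN/M)^{\gamma-1/2}A_M=O_P\Bigl((\cN/M)^{\gamma-1/2}(s^*/M)^{1/2-\gamma}\Bigr)=O_P\Bigl((s^*/\cN)^{1/2-\gamma}\Bigr)=o_P(1),
\]
the last step using $s^*/\cN\to 0$ from \eqref{cn-1-1-1} (that is, Lemma~3.1 of Aue and Horv\'ath, 2004) and $1/2-\gamma>0$. Combining, $(\cN/M)^{\gamma-1/2}(A_M-B_M)=o_P(1)-\bigl[(\cN/M)^{\gamma-1/2}B_M\bigr]\stackrel{P}{\to}-\infty$.

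The computations are routine; the one point needing care is the displayed estimate for $(\cN/M)^{\gamma-1/2}A_M$, where the rate $(s^*/M)^{1/2-\gamma}$ supplied by Lemma \ref{cn-00} for the detector over the pre-change block $[M+1,M+s^*]$ must be weighed against the blow-up rate $(M/\cN)^{1/2-\gamma}$ of the normalizing factor; the two collapse to $(s^*/\cN)^{1/2-\gamma}$, so it is precisely the separation $s^*/\cN\to 0$ furnished by Lemma \ref{cn-1} that kills this term. I would also keep careful track of the sign: positivity of the boundary constant $c=c(\gamma,\alpha)$ is what makes $(\cN/M)^{\gamma-1/2}B_M$ diverge to $+\infty$, so that the full expression genuinely tends to $-\infty$ rather than merely staying bounded.
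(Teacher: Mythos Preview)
Your proposal is correct and follows essentially the same route as the paper: both arguments split the expression into the pre-change detector term and the deterministic drift term, use Lemma \ref{cn-00} together with $s^*/\cN\to 0$ from Lemma \ref{cn-1} to show $(\cN/M)^{\gamma-1/2}A_M=O_P\bigl((s^*/\cN)^{1/2-\gamma}\bigr)=o_P(1)$, and then invoke $B_M\to c>0$ and $\cN/M\to 0$ to force $(\cN/M)^{\gamma-1/2}B_M\to+\infty$. Your write-up is in fact a bit more explicit than the paper's (you spell out why $s^*/M\to 0$ follows from \eqref{ta-3} and why $c>0$ matters for the sign), but the underlying argument is identical.
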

\begin{proof}
Putting together Lemma \ref{cn-00} and \eqref{cn-1-1-1}
we get that
\begin{align*}
\left(\frac{\cN}{M}\right)^{\gamma-1/2}\max_{1\leq s \leq s^*}\frac{\displaystyle\left|\sum_{u=M+1}^{M+s}\hat{\eps}_u\right|}{\sigma M^{1/2}(1+s/M)(s/(M+s))^\gamma}=
O_P\left(\left( \frac{s^*}{\cN}  \right)^{1/2-\gamma}\right)=o_P(1).
\end{align*}
Since Lemma \ref{cn-1} imply
$$
\frac{\Delta_M\cN}{\sigma M^{1/2}(\cN/M)^\gamma}\to c
$$
and $\cN/M\to 0$, the result in Lemma \ref{cn-2} is established.
\end{proof}

\begin{lemma}\label{cn-3} If Assumptions \ref{bern}--\ref{a-adef}, \ref{destat}, \ref{ta-2} and \eqref{ta-3} hold, then we have that
$$
\left(\frac{\cN}{M}\right)^{\gamma-1/2}\max_{s^*+1\leq s \leq \cN}\frac{\displaystyle\left|\sum_{u=M+1}^{M+s}\hat{\eps}_u-\left(\sigma W_{M,2}(s)+\Delta_Ms\right)\right|}{ M^{1/2}(1+s/M)(s/(M+s))^\gamma}=o_P(1)
$$
and
$$
\left(\frac{\cN}{M}\right)^{\gamma-1/2}\left|
\max_{s^*+1\leq s \leq \cN}\frac{\displaystyle \left|\sigma W_{M,2}(s)+\Delta_Ms\right|}{ M^{1/2}(1+s/M)(s/M)^\gamma}
-\max_{s^*+1\leq s \leq \cN}\frac{\displaystyle \left| \sigma W_{M,2}(s)+\Delta_Ms\right|}{ M^{1/2}(s/M)^\gamma}\right|=o_P(1).
$$
\end{lemma}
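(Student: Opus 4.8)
The plan is to expand $\sum_{u=M+1}^{M+s}\hat\eps_u$ around the break at $M+s^*$ and then to discard, one at a time, every term that is $o_P$ of $(\cN/M)^{1/2-\gamma}M^{1/2}(1+s/M)(s/(M+s))^\gamma$, uniformly over $s^*<s\le\cN$, leaving exactly $\sigma W_{M,2}(s)+\Delta_M s$. Since $\hat\eps_u=\eps_u-\bx_u^\T(\hat\bbe_M-\bbe_0)$ for $u\le M+s^*$ and $\hat\eps_u=\eps_u+\bx_u^\T(\bde_M-\bbe_0)-\bx_u^\T(\hat\bbe_M-\bbe_0)$ for $u>M+s^*$, summation gives for $s>s^*$
\[
\sum_{u=M+1}^{M+s}\hat\eps_u=\sum_{u=M+1}^{M+s}\eps_u+\Big(\sum_{u=M+s^*+1}^{M+s}\bx_u\Big)^{\!\T}(\bde_M-\bbe_0)-\Big(\sum_{u=M+1}^{M+s}\bx_u\Big)^{\!\T}(\hat\bbe_M-\bbe_0).
\]
I would prove Part 1 by treating these three pieces plus the pre-change block. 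For the error sum, Lemma \ref{epsapp} replaces $\sum_{u=M+1}^{M+s}\eps_u$ by $\sigma W_{M,2}(s)$ with error $O_P(s^{1/2-\delta})$, whose rescaled supremum over $s\le\cN$ is $O_P(\cN^{-\delta})=o_P(1)$. For the estimation-error term, the uniform bound $\|\sum_{u=M+1}^{M+s}\bx_u\|=O_P(s)$ (from \eqref{xnull}, \eqref{xalta} in the proof of Theorem \ref{th-alt-0}) together with $\|\hat\bbe_M-\bbe_0\|=O_P(M^{-1/2})$ from Lemma \ref{hatga} gives $O_P(sM^{-1/2})$, whose rescaled supremum over $s\le\cN$ is $O_P((\cN/M)^{1/2})=o_P(1)$. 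Finally, by Lemma \ref{cn-00} the pre-change block $\sum_{u=M+1}^{M+s^*}\hat\eps_u$ is $O_P((s^*/M)^{1/2-\gamma})\cdot M^{1/2}(1+s/M)(s/(M+s))^\gamma$ uniformly, so its rescaled contribution is $O_P((s^*/\cN)^{1/2-\gamma})=o_P(1)$ by \eqref{cn-1-1-1}.

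The core is the coefficient-change term. Since the processes $x_{u,2},\dots,x_{u,d-1}$ are unaffected by the break and $y_u$ relaxes geometrically ($|\bar\delta_d|<1$) to its post-break stationary law, $\sum_{u=M+s^*+1}^{M+s}E\bx_u=(s-s^*)\bc_A+O(1)$, so that, using $\bc_A^\T(\bde_M-\bbe_0)=-\Delta_M$,
\[
\Big(\sum_{u=M+s^*+1}^{M+s}\bx_u\Big)^{\!\T}(\bde_M-\bbe_0)=-\Delta_M s+\Delta_M s^*+O(1)+\sum_{u=M+s^*+1}^{M+s}\big(\bx_u^\T(\bde_M-\bbe_0)-E\bx_u^\T(\bde_M-\bbe_0)\big).
\]
Recalling $\cN^{1-\gamma}\Delta_M/M^{1/2-\gamma}\to c\sigma$ from Lemma \ref{cn-1}, the rescaled contribution of $\Delta_M s^*$ is largest near $s=s^*$ and equals, up to a constant, $(s^*)^{1-\gamma}M^{1/2-\gamma}\cN^{2\gamma-3/2}$; this tends to $0$ exactly under the exponent restriction \eqref{ta-3} (in the worst case $\Delta_M$ bounded away from $0$, it reduces to $s^*\ll M^{((1-2\gamma)/(2(1-\gamma)))^2}$). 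The $O(1)$ transient is negligible after rescaling, and the centred regressor sum, of order $(s-s^*)^{1/2}$, is controlled uniformly over $s\le\cN$ at the relevant scale by the maximal inequality for $m$-approximable Bernoulli shifts applied to the post-break process together with $s^*/\cN\to0$; what survives is $-\Delta_M s$ plus the Wiener term $\sigma W_{M,2}(s)$, i.e.\ $\sigma W_{M,2}(s)+\Delta_M s$, the sign of $\Delta_M$ being immaterial since $W_{M,2}$ is symmetric and the quantity is ultimately used under an absolute value.

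Part 2 is a routine denominator comparison. For $s\le\cN\ll M$ one has $(1+s/M)(s/(M+s))^\gamma=(1+s/M)^{1-\gamma}(s/M)^\gamma=(s/M)^\gamma(1+O(\cN/M))$, so the difference of the two maxima is bounded by $O(\cN/M)\cdot\max_{s^*+1\le s\le\cN}|\sigma W_{M,2}(s)+\Delta_M s|/(M^{1/2}(s/M)^\gamma)$. This last maximum is $O_P(1)$: its drift part equals $\Delta_M\cN^{1-\gamma}M^{\gamma-1/2}\to c\sigma$ at $s=\cN$ (and is increasing-type in $s$), while its Wiener part is $O_P((\cN/M)^{1/2-\gamma})=o_P(1)$ by the scaling of $W_{M,2}$. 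Multiplying through by $(\cN/M)^{\gamma-1/2}$ gives $O_P((\cN/M)^{1/2+\gamma})=o_P(1)$.

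The main obstacle is the coefficient-change term in Part 1: one must simultaneously extract the deterministic drift $-\Delta_M s$ with the exact remainder $\Delta_M s^*$, show that this remainder is absorbed (which is precisely where the exponent in \eqref{ta-3} is tight), and control the centred post-break regressor sum uniformly over $s\le\cN$ at the delicate rescaled level $(\cN/M)^{\gamma-1/2}$; all of this rests on the maximal inequality for $m$-approximable sequences (as in Lemma B.1 of Aue et al.\ 2014 and Billingsley 1968) applied to the post-break Bernoulli shift, combined with the rates $s^*/\cN\to0$ and $\cN/M\to0$ from Lemma \ref{cn-1}. The remaining pieces reduce to Lemmas \ref{cn-00}, \ref{hatga}, \ref{xlaw}, \ref{epsapp} and the elementary normalisation estimates above.
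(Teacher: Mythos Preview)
Your approach is essentially the paper's: the same three-term decomposition of $\sum_{u=M+1}^{M+s}\hat\eps_u$, the same use of Lemma~\ref{epsapp} for the error sum, Lemmas~\ref{hatga}/\ref{xlaw} for the estimation-error term, and the post-break analogue of Lemma~\ref{xlaw} (stated in the paper as \eqref{xbound-alt}) for the coefficient-change term, with Part~2 reduced to the elementary comparison $(1+s/M)^{1-\gamma}=1+O(\cN/M)$. You actually supply more detail than the paper in two places---your computation showing that the $\Delta_M s^*$ remainder is killed precisely by the exponent in \eqref{ta-3}, and the explicit $O_P((\cN/M)^{1/2+\gamma})$ bound for Part~2---whereas the paper handles these by a one-line appeal to ``Assumption~\ref{ta-3}'' and to Aue--Horv\'ath (2004), respectively; your invocation of Lemma~\ref{cn-00} for a separate ``pre-change block'' is superfluous (the three-term decomposition already covers all of $M+1\le u\le M+s$), but harmless.
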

\begin{proof} We note that for $s>s^*$ we have
\begin{align}\label{hepsum}
\sum_{u=M+1}^{M+s}\hat{\eps}_u=\sum_{u=M+1}^{M+s}\eps_u-\left(\sum_{u=M+1}^{M+s}\bx_u  \right)^\T(\hat{\bbe}_M-\bbe_0)
+\left(\sum_{u=M+s^*+1}^{M+s}\bx_u  \right)^\T({\bde}_M-\bbe_0).
\end{align}
It follows from Lemmas \ref{epsapp} and \ref{cn-1} that
\begin{align*}
&\hspace{-1cm}\left(\frac{\cN}{M}\right)^{\gamma-1/2}\max_{s^*+1\leq s \leq \cN}\frac{\displaystyle\left|\sum_{u=M+1}^{M+s}{\eps}_u-\sigma W_{M,2}(u)\right|}
{ M^{1/2}(1+s/M)(s/M)^\gamma}\\
&=O_P\left(\left(\frac{\cN}{M}\right)^{\gamma-1/2}\max_{s^*+1\leq s \leq \cN}\frac{s^{1/2-\delta}}{ M^{1/2}(1+s/M)(s/M)^\gamma}
\right)\\
&=O_P(\cN^{-\delta})=o_P(1).
\end{align*}
 Lemmas \ref{hatga}, \ref{xlaw} and \eqref{xalta} yield
$$
\left|\left(\sum_{u=M+1}^{M+s}\bx_u  \right)^\T(\hat{\bbe}_M-\bbe_0)\right|=O_P(s\Delta)
$$
and therefore
\begin{align*}
\left(\frac{\cN}{M}\right)^{\gamma-1/2}\max_{s^*+1\leq s \leq\cN}&\frac{\displaystyle \left|\left(\sum_{u=M+1}^{M+s}\bx_u  \right)^\T(\hat{\bbe}_M-\bbe_0)\right|}{M^{1/2}(1+s/M)(s/(s+M))^\gamma}  \\
&=O_P\left(\max_{s^*+1\leq s \leq\cN}\frac{\Delta s M^{-1/2}}{M^{1/2}(1+s/M)(s/(s+M))^\gamma}\right)\\
&=o_P(1).
\end{align*}
Similarly to Lemma \ref{xlaw} with some $\zeta>1/2$ we have
\beq\label{xbound-alt}
\sup_{s^*+1\leq s <\infty}\frac{1}{(s-s^*)^\zeta}\left\|\sum_{u=M+s^*+1}^{s+M}{\bx}_u-\bc_A (u-s^*)\right\|=O_P(1)
\eeq
and therefore by Lemma \ref{xlaw} and Assumption \ref{ta-3} we conclude
\begin{align*}
\left(\frac{\cN}{M}\right)^{\gamma-1/2}\max_{s^*+1\leq s \leq\cN}\frac{\left| \left(\displaystyle  \sum_{u=M+s^*+1}^{M+s}\bx_u  \right)^\T({\bde}_M-\bbe_0)-\Delta s\right|  }{M^{1/2}(1+s/M)(s/(s+M))^\gamma}=o_P(1).
\end{align*}
Thus the proof of the first part of lemma \ref{cn-3} is complete.\\
To prove the second part we note that
\begin{align*}
\left(\frac{\cN}{M}\right)^{\gamma-1/2}\max_{s^*+1\leq s \leq\cN}\left| \frac{\displaystyle \sigma W_{M,2}(s)+\Delta_Ms}{ M^{1/2}(s/M)^\gamma}\right|\left|\frac{(s/M)^\gamma}{(1+s/M)(s/(s+M))^\gamma}  -1 \right|=o_P(1)
\end{align*}
(cf.\ the proof of Lemma 3.3 of Aue and Horv\'ath, 2014).
\end{proof}
\begin{lemma}\label{cn-4} If Assumptions \ref{bern}--\ref{a-adef}, \ref{destat}, \ref{ta-2} and \eqref{ta-3} hold, the we have that
\begin{align*}
\lim_{M\to\infty}\Biggm\{&
\left( \frac{\cN}{M} \right)^{\gamma-1/2}
\left(\max_{s^*+1\leq s\leq \cN}  \frac{|\sigma W(s)+\Delta s|}{\sigma  M^{1/2}(s/M)^\gamma}
-\frac{\Delta \cN}{\sigma  M^{1/2}(\cN/M)^\gamma}\right)\\
&\hspace{1cm}\leq \left( \frac{\cN}{M} \right)^{\gamma-1/2}\left(c-  \frac{\Delta \cN}{\sigma  M^{1/2}(\cN/M)^\gamma}  \right)
\Biggm\}=\Phi(x),
\end{align*}
where $\Phi(x)$ denotes the standard normal distribution function.
\end{lemma}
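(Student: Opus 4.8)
The plan is to show that
$$
Q_M:=\left(\frac{\cN}{M}\right)^{\gamma-1/2}\left(\max_{s^*+1\le s\le\cN}\frac{|\sigma W(s)+\Delta s|}{\sigma M^{1/2}(s/M)^\gamma}-\frac{\Delta\cN}{\sigma M^{1/2}(\cN/M)^\gamma}\right)\ \stackrel{{\mathcal D}}{\to}\ N(0,1),
$$
after which the lemma is immediate: by \eqref{cn-1-2} the deterministic right–hand side $\left(\frac{\cN}{M}\right)^{\gamma-1/2}\bigl(c-\frac{\Delta\cN}{\sigma M^{1/2}(\cN/M)^\gamma}\bigr)$ converges to $x$, and since $P\{X_M\le t_M\}\to F(t)$ whenever $X_M\stackrel{{\mathcal D}}{\to}X$ with $X$ having a continuous distribution function $F$ and $t_M\to t$, the probability in question tends to $\Phi(x)$. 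First I would remove the absolute value. With $g_M(s)=|W(s)|/(M^{1/2}(s/M)^\gamma)$, Brownian scaling and $\gamma<1/2$ give $\sup_{s^*+1\le s\le\cN}g_M(s)=O_P\bigl((\cN/M)^{1/2-\gamma}\bigr)=o_P(1)$, since $\cN/M\to0$ by Lemma \ref{cn-1}. On the set $\{\sigma W(s)+\Delta s<0\}$ one has $|\sigma W(s)+\Delta s|/(\sigma M^{1/2}(s/M)^\gamma)\le g_M(s)$, while at $s=\cN$ the signed quantity equals $\frac{\Delta\cN}{\sigma M^{1/2}(\cN/M)^\gamma}+\frac{W(\cN)}{M^{1/2}(\cN/M)^\gamma}$, which is at least $c/2$ with probability tending to one: indeed $\frac{\Delta\cN}{\sigma M^{1/2}(\cN/M)^\gamma}\to c$ (Lemma \ref{cn-1}), $\frac{W(\cN)}{M^{1/2}(\cN/M)^\gamma}=\cN^{-1/2}W(\cN)(\cN/M)^{1/2-\gamma}=o_P(1)$, and $\Delta\cN>\sigma|W(\cN)|$ w.h.p.\ because $\Delta\cN^{1/2}\to\infty$ by \eqref{cn-1-1}. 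Hence w.h.p.\ the maximum of $|\sigma W(s)+\Delta s|/(\sigma M^{1/2}(s/M)^\gamma)$ coincides with that of $(\sigma W(s)+\Delta s)/(\sigma M^{1/2}(s/M)^\gamma)$, so it suffices to treat the signed version.

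The main step is a Brownian rescaling. Substituting $s=\cN v$ and using $\{W(\cN v)\}\stackrel{{\mathcal D}}{=}\{\cN^{1/2}\widetilde W(v)\}$ for a standard Wiener process $\widetilde W$, every power of $M$ cancels and one obtains
$$
Q_M\ \stackrel{{\mathcal D}}{=}\ \max_{(s^*+1)/\cN\le v\le 1}\left[\frac{\widetilde W(v)}{v^\gamma}+\lambda_M\bigl(v^{1-\gamma}-1\bigr)\right],\qquad \lambda_M:=\frac{\Delta\cN^{1/2}}{\sigma},
$$
where $\lambda_M\to\infty$ by \eqref{cn-1-1} and $(s^*+1)/\cN\to0$ by \eqref{cn-1-1-1}. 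Since $v^{1-\gamma}-1<0$ for $v<1$ and vanishes at $v=1$, the drift term pushes the maximum to the endpoint $v=1$, where the bracket equals $\widetilde W(1)$, a standard normal variable. To make this precise, fix a small $\delta\in(0,1)$. On $[(s^*+1)/\cN,1-\delta]$ the bracket is at most $\sup_{0<v\le1}|\widetilde W(v)|/v^\gamma-\lambda_M\bigl(1-(1-\delta)^{1-\gamma}\bigr)$, which tends to $-\infty$ because the supremum is a.s.\ finite (again since $\gamma<1/2$); so for large $M$ the maximum lies in $[1-\delta,1]$. On $[1-\delta,1]$, writing $\eta=1-v$, concavity of $t\mapsto t^{1-\gamma}$ gives $v^{1-\gamma}-1\le-(1-\gamma)\eta$, and $\bigl|\widetilde W(v)/v^\gamma-\widetilde W(1)\bigr|\le K_\delta\bigl(\omega(\eta)+\eta\bigr)$ with $\omega$ the modulus of continuity of $\widetilde W$ on $[1-\delta,1]$ and $K_\delta$ an a.s.\ finite random constant. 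Taking $\eta_M=\lambda_M^{-1/2}\to0$ (so $\lambda_M\eta_M=\lambda_M^{1/2}\to\infty$), the maximum over $[1-\delta,1-\eta_M]$ is bounded above by $\widetilde W(1)+K_\delta(\omega(\delta)+\delta)-(1-\gamma)\lambda_M^{1/2}\to-\infty$, while the maximum over $[1-\eta_M,1]$ lies between $\widetilde W(1)$ (attained at $v=1$) and $\widetilde W(1)+K_\delta(\omega(\eta_M)+\eta_M)$. Since $\omega(\eta_M)+\eta_M\to0$ a.s., the whole maximum converges a.s.\ to $\widetilde W(1)$, whence $Q_M\stackrel{{\mathcal D}}{\to}N(0,1)$ and the proof is complete.

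The main obstacle is the localization near $v=1$: one must simultaneously rule out a contribution of the maximum from $v$ bounded away from $1$ — which rests on the a.s.\ finiteness of $\sup_{0<v\le1}|\widetilde W(v)|/v^\gamma$ (valid exactly because $\gamma<1/2$, by the law of the iterated logarithm at the origin) combined with $\lambda_M\to\infty$ — and control the Brownian fluctuation $\widetilde W(v)/v^\gamma-\widetilde W(1)$ uniformly over the shrinking window $[1-\eta_M,1]$, which uses only continuity of $\widetilde W$. The delicate part is the balance of the window length $\eta_M$ against the drift strength $\lambda_M$: one needs $\eta_M\to0$ yet $\lambda_M\eta_M\to\infty$, which is possible precisely because hypothesis \eqref{cn-1-1}, $\Delta_M\cN^{1/2}\to\infty$, forces $\lambda_M\to\infty$; that same hypothesis is what rendered the set $\{\sigma W(s)+\Delta s<0\}$ harmless in the first paragraph. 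The reduction from the absolute value and the passage from the distributional limit back to the probability statement of the lemma are then routine.
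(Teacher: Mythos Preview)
Your proof is correct and follows essentially the same strategy as the paper: remove the absolute value using $\Delta\cN^{1/2}\to\infty$, rescale via Brownian scaling, and localize the maximum to the right endpoint where it collapses to a standard normal $\widetilde W(1)$. Your explicit change of variables $s=\cN v$ yielding $Q_M\stackrel{\mathcal D}{=}\max_v\bigl[\widetilde W(v)/v^\gamma+\lambda_M(v^{1-\gamma}-1)\bigr]$ and the two-stage localization with the shrinking window $[1-\eta_M,1]$, $\eta_M=\lambda_M^{-1/2}$, make rigorous what the paper handles more tersely by first fixing $\delta$ and then invoking $\max_{1-\delta\le u\le1}|W(u)-W(1)|/u^\gamma\to0$ as $\delta\to0$; the paper's version leaves the interplay of the $M\to\infty$ and $\delta\to0$ limits implicit, while your $\eta_M$ device carries out that diagonal step directly.
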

\begin{proof}
By the scale transformation of the Wiener process we have
$$
\max_{1\leq s\leq \cN}\frac{| W(s)|}{ M^{1/2}(s/M)^\gamma}=O_P\left(\left( \frac{\cN}{M} \right)^{-(\gamma-1/2)}\right)=
o_P\left( \frac{\Delta\cN}{M^{1/2}(\cN/M)^\gamma } \right)
 $$
 on account of \eqref{cn-1-1}. Hence
 $$
 P\left\{\max_{s^*+1\leq s\leq \cN}  \frac{|\sigma W(s)+\Delta s|}{\sigma  M^{1/2}(s/M)^\gamma} =
 \max_{s^*\leq s\leq \cN}  \frac{\sigma W(s)+\Delta s}{\sigma  M^{1/2}(s/M)^\gamma}\right\}\to 1.
 $$
 For every $0<\delta<1$
 $$
 \left( \frac{\cN}{M} \right)^{\gamma-1/2}\max_{0<s\leq(1-\delta) \cN}  \frac{ W(s)}{ M^{1/2}(s/M)^\gamma}\stackrel{{\mathcal D}}{=}
 \max_{0<u\leq 1-\delta}\frac{W(u)}{u^\gamma},
 $$
 and thus we have
 $$
 P\left\{\max_{s^*+1\leq s\leq \cN}  \frac{\sigma W(s)+\Delta s}{\sigma  M^{1/2}(s/M)^\gamma}=\max_{(1-\delta)\cN\leq s\leq \cN}  \frac{\sigma W(s)+\Delta s}{\sigma  M^{1/2}(s/M)^\gamma}
 \right\}\to 1.
 $$
 Similarly,
 $$
 \left( \frac{\cN}{M} \right)^{\gamma-1/2}\max_{(1-\delta)\cN\leq s\leq \cN}  \frac{|W(s)-W(\cN)|}{ M^{1/2}(s/M)^\gamma}\stackrel{{\mathcal D}}{=}
 \max_{1-\delta\leq u \leq 1}\frac{|W(u)-W(1)|}{u^\gamma}
 $$
 and by the continuity of $W$ we get
 $$
 \max_{1-\delta\leq u \leq 1}\frac{|W(u)-W(1)|}{u^\gamma}\to 0\;\;\mbox{a.s.}\;\;(\delta\to 0).
 $$
 We showed
 \begin{align*}
&\hspace{-.5cm} \lim_{M\to\infty}\Biggm\{
\left( \frac{\cN}{M} \right)^{\gamma-1/2}
\left(\max_{s^*+1\leq s\leq \cN}  \frac{|\sigma W(s)+\Delta s|}{\sigma  M^{1/2}(s/M)^\gamma}
-\frac{\Delta \cN}{\sigma  M^{1/2}(\cN/M)^\gamma}\right)\\
&\hspace{1cm}\leq \left( \frac{\cN}{M} \right)^{\gamma-1/2}\left(c-  \frac{\Delta \cN}{\sigma  M^{1/2}(\cN/M)^\gamma}  \right)
\Biggm\}\\
=\lim_{M\to\infty}\Biggm\{&
\left( \frac{\cN}{M} \right)^{\gamma-1/2}
 \frac{ W(\cN)}{  M^{1/2}(\cN/M)^\gamma}
\leq \left( \frac{\cN}{M} \right)^{\gamma-1/2}\left(c-  \frac{\Delta \cN}{\sigma  M^{1/2}(\cN/M)^\gamma}  \right)
\Biggm\}.
 \end{align*}
 Since $(\cN/{M} )^{\gamma-1/2} W(\cN)/(  M^{1/2}(\cN/M)^\gamma)$ is a standard normal random variable, the result follows from \eqref{cn-1-2}.
\end{proof}
\begin{lemma}\label{cn-5} If Assumptions \ref{bern}--\ref{a-adef}, \ref{destat}, \ref{ta-2} and \eqref{ta-3} hold, the we have for all $x$ that
$$
\lim_{M\to \infty}P\{\tau_M>\cN(M,x)  \}=\Phi(x),
$$
where $\Phi(x)$ denotes the standard normal distribution function.
\end{lemma}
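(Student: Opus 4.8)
The plan is to unwind the complement event $\{\tau_M>\cN(M,x)\}$ and feed it into the Gaussian computation of Lemma~\ref{cn-4}. By the definition of $\tau_M$,
\[
\{\tau_M>\cN(M,x)\}=\left\{\max_{1\le s\le \cN}\frac{\bigl|\sum_{u=M+1}^{M+s}\hat{\eps}_u\bigr|}{M^{1/2}(1+s/M)(s/(M+s))^\gamma}\le c\,\hat{\sigma}_M\right\}.
\]
The first step is to dispose of $\hat{\sigma}_M$: by \eqref{sicos} we have $\hat{\sigma}_M-\sigma=o_P(1)$, and in fact (from the moment assumptions, as in the proof of Lemma~\ref{empeps}) $\hat{\sigma}_M-\sigma=O_P(M^{-1/2})$. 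Since $\cN/M\asymp (M^{1/2}|\Delta_M|)^{-1/(1-\gamma)}\to0$, one checks $M^{-1/2}=o\bigl((\cN/M)^{1/2-\gamma}\bigr)$, which is the exact scale at which the maximum above fluctuates around its deterministic level; hence replacing $c\hat{\sigma}_M$ by $c\sigma$ alters the probability only by $o(1)$.

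Next I would split the maximum at the change time $s^*$. On $1\le s\le s^*$, Lemma~\ref{cn-2}, together with $\cN/M\to0$ (so $(\cN/M)^{\gamma-1/2}\to\infty$) and $\Delta_M\cN/(\sigma M^{1/2}(\cN/M)^\gamma)\to c$ from Lemma~\ref{cn-1}, shows that the pre-change part of the $\sigma$-normalized maximum is $o_P(1)$; it therefore lies strictly below $c\sigma$ with probability tending to one and cannot affect whether the overall maximum exceeds $c\sigma$ (up to $o_P$). Consequently $P\{\tau_M>\cN\}$ equals, up to $o(1)$, the probability that $\displaystyle\max_{s^*+1\le s\le \cN}\bigl|\sum_{u=M+1}^{M+s}\hat{\eps}_u\bigr|\big/\bigl(M^{1/2}(1+s/M)(s/(M+s))^\gamma\bigr)\le c\sigma$.

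On the post-change range I would apply Lemma~\ref{cn-3}. Its first assertion, combined with the elementary bound $|\max_s a_s-\max_s b_s|\le\max_s|a_s-b_s|$, lets me replace $\sum_{u=M+1}^{M+s}\hat{\eps}_u$ by $\sigma W_{M,2}(s)+\Delta_M s$, and its second assertion lets me replace the normalizer $(1+s/M)(s/(M+s))^\gamma$ by $(s/M)^\gamma$, each at a cost that is $o_P\bigl((\cN/M)^{1/2-\gamma}\bigr)$. Dividing by $\sigma$ and multiplying the inequality by $(\cN/M)^{\gamma-1/2}>0$, the event becomes
\[
\left(\frac{\cN}{M}\right)^{\gamma-1/2}\left(\max_{s^*+1\le s\le \cN}\frac{|\sigma W_{M,2}(s)+\Delta_M s|}{\sigma M^{1/2}(s/M)^\gamma}-c\right)\le o_P(1).
\]
Adding and subtracting the centering constant $\Delta_M\cN/(\sigma M^{1/2}(\cN/M)^\gamma)$ and invoking \eqref{cn-1-2}, this is precisely the event treated in Lemma~\ref{cn-4}, up to an $o_P(1)$ perturbation of the threshold, which is harmless because the limiting law there is the continuous distribution $\Phi$. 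Lemma~\ref{cn-4} then yields $P\{\tau_M>\cN(M,x)\}\to\Phi(x)$.

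I expect the main obstacle to be the rate bookkeeping: one must verify that every error term — from replacing $\hat{\sigma}_M$ by $\sigma$, from the strong Gaussian approximation of the residual partial sums, from simplifying the boundary normalizer, and from discarding the pre-change block — is $o_P$ of $(\cN/M)^{1/2-\gamma}$, the exact order at which the relevant maximum fluctuates and at which the limit is nondegenerate. This is why the sharp forms of Lemmas~\ref{cn-00}--\ref{cn-4} and the precise choice of $\cN(M,x)$ are essential; without matching these scales the argument breaks down.
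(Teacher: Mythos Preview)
Your proposal is correct and follows essentially the same route as the paper's proof: both replace $\hat{\sigma}_M$ by $\sigma$ using the $O_P(M^{-1/2})$ rate, discard the pre-change block via Lemma~\ref{cn-2}, pass to $\sigma W_{M,2}(s)+\Delta_M s$ with the simplified normalizer via Lemma~\ref{cn-3}, and then invoke Lemma~\ref{cn-4} after centering and scaling by $(\cN/M)^{\gamma-1/2}$. Your version is in fact slightly more explicit about the rate checks (e.g.\ verifying $M^{-1/2}=o((\cN/M)^{1/2-\gamma})$) than the paper, which simply appeals to Lemmas~\ref{cn-2}--\ref{cn-4} in one line.
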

\begin{proof} By Assumption \ref{bern}, Lemmas \ref{ybern} and \ref{hatga} we have
$$
\hat{\sigma}_M-\sigma=O_P(M^{-1/2})
$$
and
therefore by Lemmas \ref{cn-2}--\ref{cn-4}
$$
\left( \frac{\cN}{M} \right)^{\gamma-1/2}\max_{1\leq s \leq \cN}\frac{\displaystyle \left|\sum_{u=M+1}^{M+s}\hat{\eps}_u \right|}{M^{1/2}(1+s/M)(s/(M+s))^\gamma}
\left|\frac{1}{\hat{\sigma}_M} -\frac{1}{\sigma} \right|=o_P(1).
$$
Hence Lemmas \ref{cn-2}--\ref{cn-4} imply
\begin{align*}
\lim_{M\to \infty}&P\{\tau_M>\cN(M,x)  \}=\lim_{M\to\infty} P\left\{\max_{1\leq s\leq \cN}\frac{\Gamma(M,s)}{g(M,s)}\leq 1
\right\}\\
&=\lim_{M\to\infty}P\Biggm\{ \Biggm( \frac{\cN}{M}\Biggm)^{\gamma-1/2}\Biggm(\max_{s^*<s\leq \cN}\frac{\displaystyle \left|\sum_{u=M+1}^{M+s}\hat{\eps}_u\right|}{\sigma M^{1/2}(s/M)^\gamma}-\frac{\Delta \cN}{\sigma M^{1/2}(\cN/M)^\gamma} \Biggm)\\
&\hspace{4.5cm}\leq \left(\frac{\cN}{M} \right)^{\gamma-1/2}
\left(c-\frac{\Delta \cN}{\sigma M^{1/2}(\cN/M)^\gamma}\right)  \Biggm\}\\
&=\Phi(x).
\end{align*}
\end{proof}

\noindent
{\it Proof of Theorem \ref{th-alt-0}.} Elementary calculus gives
\begin{align*}
\cN(M,x)&=a_M\left[ 1-x\left( \frac{c^{1/2-\gamma}\sigma^{3/2-2\gamma} M^{(1/2-\gamma)^2}}{\Delta^{3/2-2\gamma}}\right)^{1/(1-\gamma)}\frac{1}{a_M^{1-\gamma}}   \right]^{1/(1-\gamma)}\\
&=a_M+a_M\left[-x\frac{1}{1-\gamma}\left( \frac{c^{1/2-\gamma}\sigma^{3/2-2\gamma} M^{(1/2-\gamma)^2}}{\Delta^{3/2-2\gamma}}\right)^{1/(1-\gamma)}\frac{1}{a_M^{1-\gamma}}(1+o(1)   \right]
\end{align*}
and
$$
b_M=\frac{a_M^\gamma}{1-\gamma}\left( \frac{c^{1/2-\gamma}\sigma^{3/2-2\gamma} M^{(1/2-\gamma)^2}}{\Delta^{3/2-2\gamma}}\right)^{1/(1-\gamma)}.
$$
So by Lemma \ref{cn-5} we have for all $x$
\begin{align*}
\lim_{M\to\infty}P\left\{(\tau_M-a_M)/b_M<x\right\}&=1-\lim_{M\to\infty}P\left\{(\tau_M-a_M)/b_M>x\right\}\\
&=1-\lim_{M\to\infty}P\left\{(\tau_M>\cN(M,-x)\right\}\\
&=1-\Phi(-x),
\end{align*}
completing the proof of Theorem \ref{alt-stat}.
\qed

\medskip
Using Assumption \ref{unir} we get that for all $u\geq 1$ that
\beq\label{yunir}
y_{M+s^*+u}=y_{M+s^*}+\sum_{z=1}^u(\bw_{M+s^*+z}^\T\bar{\bde}+\epsilon_{M+s^*+z}).
\eeq
According to \eqref{hepsum} we have for all $s\geq s^*+1$ that
\begin{align*}
\sum_{u=M+1}^{M+s}\hat{\eps}_u&=\sum_{u=M+1}^{M+s}\eps_u-\left(\sum_{u=M+1}^{M+s^*}\bx_u\right)^\T(\hat{\bbe}_M-\bbe_0)
-\left(\sum_{u=M+s^*1}^{M+s}\bx_u\right)^\T(\bde_M-\bbe_0)\\
&=\sum_{u=M+1}^{M+s}\eps_u-\left(\sum_{u=M+1}^{M+s^*}\bx_u\right)^\T(\hat{\bbe}_M-\bbe_0)
-\left(\sum_{u=M+s^*+1}^{M+s}\bw_u\right)^\T(\bar{\bde}-\bar{\bbe}_0)\\
&\hspace*{2cm}-(1-\beta_{0,d})\sum_{u=M+s^*+1}^{M+s}y_{u-1}.
\end{align*}
\begin{lemma}\label{le-alt-9} If Assumptions \ref{bern}--\ref{a-adef}, \ref{unir},  \ref{destat-1}, \eqref{th-alt-20} and  \eqref{th-alt-21} hold, then we have
\begin{align}\label{le-9-2}
\left|\left(\sum_{u=M+1}^{M+s^*}\bx_u\right)^\T(\hat{\bbe}_M-\bbe_0)\right|=O_P(s^*M^{-1/2})
\end{align}
\begin{align}\label{le-9-3}
\max_{1\leq s<\infty}\left|\frac{1}{s}\left(\sum_{u=M+s^*+1}^{M+s}\bw_u\right)^\T(\bar{\bde}-\bar{\bbe}_0)\right|=O_P(1)
\end{align}
and for each $M$ there is a Wiener process $\{W_{M,3}(u), u\geq 0\}$ such that
\begin{align}\label{le-9-4}
\max_{1\leq s <\infty}\frac{1}{s^{3/2-\delta}}\left|\sum_{u=M+s^*+1}^{M+s}(y_{u-1}-s\fa_1)-\fb_1\int_0^s W_{M,3}(u)du\right|=O_P(1)
\end{align}
with some $\delta>0$.
Furthermore, if $\cN^*=\cN^*_M\to \infty$, then we have
\begin{align}\label{le-9-1}
\left(\frac{\cN^*}{M}\right)^{\gamma-1/2}\max_{1\leq s \leq \cN^*}\frac{\displaystyle \left|\sum_{u=M+1}^{M+s}\eps_u\right|}{M^{1/2}(s/M)^\gamma}=O_P(1)
\end{align}
\end{lemma}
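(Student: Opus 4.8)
I would treat the five displays in increasing order of difficulty, recycling the fourth-moment maximal inequalities behind Lemmas \ref{ybern}--\ref{xlaw} and the strong approximations of Lemma \ref{epsapp}; the one genuinely new ingredient is the post-change integrated random walk in \eqref{le-9-4}. For \eqref{le-9-2}, Lemma \ref{xlaw} with a fixed $\zeta\in(1/2,1)$ gives $\|\sum_{u=M+1}^{M+s^*}\bx_u\|\le\|\ba\|\,s^*+(s^*)^\zeta O_P(1)=O_P(s^*)$, which multiplied by $\|\hat{\bbe}_M-\bbe_0\|=O_P(M^{-1/2})$ (Lemma \ref{hatga}(i)) yields the claim. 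For \eqref{le-9-1}, I would substitute \eqref{twoapp}: the \eqref{twoapp}-remainder contributes $O_P((\cN^*)^{-\delta})=o_P(1)$ once $\delta<1/2-\gamma$, while the leading term $(\cN^*/M)^{\gamma-1/2}\max_{1\le s\le\cN^*}|W_{M,2}(s)|/(M^{1/2}(s/M)^\gamma)$ is bounded in distribution, by the scale change $u=s/M$ and then a second Brownian scaling, by $\sup_{0<v\le1}|W(v)|/v^{\gamma}$, which is almost surely finite because $\gamma<1/2$ and hence $O_P(1)$.

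\textbf{Display \eqref{le-9-3}.} By Assumption \ref{bern} and Lemma \ref{ybern} the non-constant coordinates of $\bw_u$ are $m$-decomposable Bernoulli shifts, so the argument of Lemma \ref{xlaw} (the Billingsley maximal inequality together with Lemma B.1 of Aue et al., 2014) gives $\sup_{s>s^*}(s-s^*)^{-\zeta}\|\sum_{u=M+s^*+1}^{M+s}(\bw_u-E\bw_0)\|=O_P(1)$ for $\zeta\in(1/2,1)$ (the constant coordinate is trivial, and the inner sum is empty for $s\le s^*$). Since $(s-s^*)^\zeta\le s^\zeta$ and $(s-s^*)/s\le1$, this gives $s^{-1}\|\sum_{u=M+s^*+1}^{M+s}\bw_u\|\le\|E\bw_0\|+s^{\zeta-1}O_P(1)=O_P(1)$ uniformly in $s\ge1$; multiplying by the fixed vector $\bar{\bde}-\bar{\bbe}_0$ (finite by Assumption \ref{destat-1}) proves \eqref{le-9-3}.

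\textbf{Display \eqref{le-9-4} --- the core.} From \eqref{yunir}, for $u$ in the summation range I would write $y_{u-1}=y_{M+s^*}+\sum_{z=1}^{u-1-M-s^*}\xi_z$ with $\xi_z=\bw_{M+s^*+z}^\T\bar{\bde}+\eps_{M+s^*+z}$, and decompose $\xi_z=\mu+\xi^{\circ}_z$ into its mean $\mu=E\bw_0^\T\bar{\bde}$ and a mean-zero Bernoulli shift $\xi^{\circ}_z$. Interchanging the order of summation then writes $\sum_{u=M+s^*+1}^{M+s}y_{u-1}$ as $(s-s^*)y_{M+s^*}$, plus a purely deterministic drift, plus $\sum_j S^{\circ}_j$ with $S^{\circ}_j=\sum_{z\le j}\xi^{\circ}_z$. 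The random walk $S^{\circ}_j$ satisfies a strong approximation of the type in Lemma \ref{epsapp} (Theorem B.1 of Aue et al., 2014): for some $\delta>0$ there is a Wiener process $\{W_{M,3}(u),u\ge0\}$ with $\sup_{j\ge1}j^{-(1/2-\delta)}|S^{\circ}_j-\fb_1 W_{M,3}(j)|=O_P(1)$, the constant being $\fb_1^2=\sigma^2+\sum_s\mbox{\rm cov}(\bw_0^\T\bar{\bde},\bw_s^\T\bar{\bde})$ because Assumption \ref{bernmean} kills the $\bw$--$\eps$ cross covariances and leaves $\eps$ contributing only $\sigma^2$ to the long-run variance. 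Summing over $j$ (the accumulated error being $O_P(s^{3/2-\delta})$) and replacing the Riemann sum $\sum_jW_{M,3}(j)$ by $\int_0^s W_{M,3}(u)\,du$ (the replacement and the index shift by $s^*$ costing $o(s^{3/2-\delta})$ thanks to the Brownian modulus of continuity and \eqref{th-alt-20}) yields $\sum_j S^{\circ}_j=\fb_1\int_0^s W_{M,3}(u)\,du+O_P(s^{3/2-\delta})$. The term $(s-s^*)y_{M+s^*}=O_P(s)$ is negligible since $s/s^{3/2-\delta}=s^{\delta-1/2}\le1$ for $s\ge1$, and the recentring by $s\fa_1$ is tuned so that the remaining deterministic drift is of the admissible order $O_P(s^{3/2-\delta})$ under \eqref{th-alt-20} and \eqref{th-alt-21}; passing to the maximum over $s$ is then legitimate because every estimate above is uniform.

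\textbf{Main obstacle.} The delicate step is \eqref{le-9-4}: one must push the strong approximation for $S^{\circ}_j$ uniformly over the whole (growing) range of $s$, recover the post-summation rate $O_P(s^{3/2-\delta})$ after turning the partial-sum approximation into an integrated-Brownian-motion approximation, and --- the truly subtle bookkeeping --- verify that after subtracting $s\fa_1$ the deterministic part of $\sum_u y_{u-1}$ is indeed $O_P(s^{3/2-\delta})$, which is where Assumptions \ref{unir} and \ref{destat-1} and conditions \eqref{th-alt-20}--\eqref{th-alt-21} are all used. The displays \eqref{le-9-2}, \eqref{le-9-3} and \eqref{le-9-1} are routine consequences of the maximal inequalities and strong approximations already established.
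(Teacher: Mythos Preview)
Your proposal is correct and follows essentially the same route as the paper, only with far more detail: the paper disposes of \eqref{le-9-2} in one line via Lemmas \ref{hatga} and \ref{xlaw}, handles \eqref{le-9-1} exactly as you do (Lemma \ref{epsapp} plus Brownian scaling), does not spell out \eqref{le-9-3} at all, and for \eqref{le-9-4} simply quotes the strong approximation $\max_{u\ge1}u^{-(1/2-\delta)}|y_{M+s^*+u}-u\fa_1-\fb_1 W_{M,3}(u)|=O_P(1)$ from Aue et~al.\ (2014) and says ``which implies \eqref{le-9-4}''. Your explicit construction of $S^{\circ}_j$ from \eqref{yunir}, application of Theorem~B.1 to the innovations $\xi^{\circ}_z$, and subsequent summation to an integrated Brownian motion is precisely the content of that citation unpacked, so the two arguments coincide.
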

\begin{proof}
The upper bound in \eqref{le-9-2} is an immediate consequence of Lemmas \ref{hatga} and \ref{xlaw}.\\
It follows from Aue et al.\ (2014)
$$
\max_{1\leq u<\infty}\frac{1}{u^{1/2-\delta}}\left| (y_{M+s^*+u}-s\fa_1)-\fb_1W_{M,3}(u) \right|=O_P(1)
$$
with some Wiener processes $\{W_{M,3}(u), u\geq 0\}$
which implies  \eqref{le-9-4}. \\
Using Lemma \ref{epsapp} we get
$$
\left(\frac{\cN^*}{M}\right)^{\gamma-1/2}\max_{1\leq s\leq \cN^*}\frac{\displaystyle \left|\sum_{u=M+1}^{M+s}\eps_u -\sigma W_{M,2}(s)\right|}{M^{1/2}(s/M)^\gamma}=o_P(1)
$$
and by the scale transformation of the Wiener process we have
$$
\left(\frac{\cN^*}{M}\right)^{\gamma-1/2}\max_{1\leq s\leq \cN^*}\frac{\displaystyle \left| W_{M,2}(s)\right|}{M^{1/2}(s/M)^\gamma}\stackrel{{\mathcal D}}{\to}
\sup_{0<u\leq 1}\frac{|W(u)|}{u^\gamma},
$$
where $\{W(u), u\geq 0\}$ stands for a Wiener process. Hence \eqref{le-9-1} is proven.
\end{proof}

{\it Proof of Theorem  \ref{th-alt-2}.} Let
$$
\cM=M^{(1-2\gamma)/(3-2\gamma)}.
$$
We note that
$$
P\{ \tau_M>x\cM  \}=P\left\{
\max_{1\leq s \leq x\cM}\Gamma(M, s)/g(M,s)\leq 1  \right\}.
$$
Following  the proof of Theorem \ref{th-null} we get
$$
\max_{1\leq s \leq s^*}\left|\sum_{u=M+1}^{M+s}\hat{\eps}_u\right|=O_P((s^*)^{1/2-\gamma}),
$$
since $s^*/\cM\to 0$. Thus we conclude
\beq\label{a-21}
\max_{1\leq s \leq s^*}\Gamma(M,s)/g(M,s)=o_P(1).
\eeq
Using \eqref{le-9-1} we obtain
$$
\max_{s^*< s\leq x\cM}\frac{\displaystyle \left| \sum_{u=M+1}^{M+s}\eps_u\right|}{g(M,s)}=O_P\left( \left(\frac{\cM}{M}\right)^{1/2-\gamma}  \right)=o_P(1).
$$
Now \eqref{le-9-1} results in
$$
\max_{s^*< s\leq x\cM}\frac{\displaystyle \left|\left(\sum_{u=M+1}^{M+s^*}\bx_u\right)^\T(\hat{\bbe}_M-\bbe_0)\right|    }{g(M,s)}
=O_P\left( \frac{s^*M^{-1/2}}{M^{1/2-\gamma}(s^*)^\gamma}\right)=o_P(1).
$$
Since
$$
\sum_{u=M+s^*+1}^{M+s}y_{u-1}=\sum_{u=M+s^*+1}^{M+s}(y_{u-1}-\fa_1s)+\fa_1s(s+1)/2
$$
we get by \eqref{le-9-4}
\begin{align*}
\max_{s^*< s\leq x \cM}&\frac{\displaystyle \left|\sum_{u=M+s^*+1}^{M+s}y_{u-1}-\left( \fb_1\int_0^s W_{M,3}(u)du+\fa_1s(s+1)/2  \right)   \right| }{M^{1/2}(s/M)^\gamma}\\
&=O_P\left(\max_{s^*< s\leq x \cM}\frac{\displaystyle s^{3/2-\gamma}}{M^{1/2}(s/M)^\gamma}  \right) \\
&=O_P\left( \frac{\cM^{3/2-2\gamma}}{M^{1/2-\gamma}} \right)\\
&=O_P\left(\left( \frac{\cM}{M}\right)^{1/2-\gamma}\right)=o_P(1).
\end{align*}
Since the distribution of $W_{M,3}$ does not depend on $M$ we note that
\begin{align*}
\max_{s^*< s\leq x\cM}&\frac{\displaystyle \left|\fb_1\int_0^s W_{M,3}(u)du+\fa_1s(s+1)/2  \right|}{M^{1/2}(s/M)^\gamma}\\
&=\max_{s^*/\cM< z\leq x}\frac{\displaystyle \left|\fb_1\int_0^{z\cM} W_{M,3}(u)du+\fa_1z\cM(z\cM+1)/2  \right|}{M^{1/2-\gamma}(u\cM)^\gamma}\\
&\stackrel{{\mathcal D}}{=}\frac{\cM^{3/2-\gamma}}{M^{1/2-\gamma}}\max_{s^*/\cM< z\leq x}\frac{\displaystyle \left|\fb_1\int_0^{z} W(u)du+\fa_1z\cM^{1/2}(z+1/\cM)/2  \right|}{z^\gamma}\\
&\stackrel{{\mathcal D}}{\to}\max_{0< z\leq x}\frac{\displaystyle \left|\fb_1\int_0^{z} W(u)du+\bar{\fa}_1z^2/2  \right|}{z^\gamma},
\end{align*}
concluding the proof of Theorem \ref{th-alt-2}.
\qed

\medskip
\begin{lemma}\label{w-int} If $\{W(u), u\geq 0\}$ is  a Wiener process, then $\int_0^x W(u)du$  is normally distributed with zero mean and
$$
\mbox{{\rm var}}\left(\int_0^x W(u)du   \right)=\frac{x^3}{3}.
$$
\end{lemma}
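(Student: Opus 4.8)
The plan is to represent $\int_0^x W(u)\,du$ as an almost sure and $L^2$ limit of Gaussian random variables, and then to identify the limiting mean and variance. First I would fix a partition $0=u_0<u_1<\cdots<u_n=x$ and observe that the Riemann sum $\sum_{i=1}^n W(u_{i-1})(u_i-u_{i-1})$ is a finite linear combination of the jointly Gaussian variables $W(u_0),\ldots,W(u_{n-1})$, hence a centered Gaussian random variable. By the continuity of the paths of $W$ these sums converge almost surely to $\int_0^x W(u)\,du$ as the mesh of the partition goes to $0$; since $\sup_{0\le u\le x}|W(u)|$ has moments of every order, the convergence also holds in $L^2$. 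A sequence of centered Gaussian variables converging in $L^2$ has a centered Gaussian limit, so $\int_0^x W(u)\,du$ is normally distributed with mean $0$.

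Next I would compute the variance by Fubini's theorem, which applies because the kernel $E[W(u)W(v)]=\min(u,v)$ is bounded on $[0,x]^2$:
\[
\mbox{\rm var}\left(\int_0^x W(u)\,du\right)=\int_0^x\int_0^x E[W(u)W(v)]\,du\,dv=\int_0^x\int_0^x \min(u,v)\,du\,dv.
\]
Splitting the square along the diagonal and using symmetry gives
\[
\int_0^x\int_0^x \min(u,v)\,du\,dv=2\int_0^x\left(\int_0^v u\,du\right)dv=2\int_0^x\frac{v^2}{2}\,dv=\frac{x^3}{3},
\]
which is the asserted variance.

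As an equivalent and even shorter route I would mention integration by parts for the Wiener integral, $\int_0^x W(u)\,du=xW(x)-\int_0^x u\,dW(u)=\int_0^x (x-u)\,dW(u)$, so that the integral is the Wiener integral of the deterministic function $u\mapsto x-u\in L^2[0,x]$; such an integral is automatically centered Gaussian with variance $\int_0^x (x-u)^2\,du=x^3/3$. There is no real obstacle here: the only points requiring a word of care are the passage to the Gaussian limit (immediate from path continuity together with the moment bound on $\sup_{[0,x]}|W|$) and the applicability of Fubini (immediate from boundedness of the covariance kernel).
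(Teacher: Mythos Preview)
Your proposal is correct and is essentially a fleshed-out version of the paper's own argument, which simply states that normality is clear since $W$ is Gaussian and that ``direct calculations'' give the variance. Your Riemann-sum/Fubini computation and the alternative representation $\int_0^x W(u)\,du=\int_0^x (x-u)\,dW(u)$ are exactly the kind of direct calculations the paper is alluding to, so there is no substantive difference in approach.
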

\begin{proof} Since $W$ is Gaussian, the normality of the integral is clear. Direct calculations give the value of the variance.
\end{proof}
Let
\begin{align*}
\cP=\cP(M,x)=\left[ \frac{c\sigma}{\fc_1{\fa}_1}M^{1/2-\gamma}-x\cP_1M^{1/2-\gamma}  \right]^{1/(2-\gamma)}
\end{align*}
with
$$
\cP_1=\cP_1(M)=c^{(3-2\gamma)/(4-2\gamma)}\sigma^{(7-4\gamma)/(4-2\gamma)}\fc_1^{-(7-4\gamma)/(4-2\gamma)}\fa_1^{-(7-4\gamma)/(4-2\gamma)}
M^{-(1/2-\gamma)/(4-2\gamma)}
$$
with
$$
\fc_1=(1-\beta_{0,d})/2.
$$
We can assume without loss of generality that
$$
\fa_1>0.
$$
\begin{lemma}\label{mprpop} If Assumptions  \ref{unir}, \ref{destat-1}, \eqref{th-alt-31} and \eqref{th-alt-23} hold, the we have that
\beq\label{prop-1}
\cP \left[\frac{\fa_1}{\displaystyle M^{1/2-\gamma}}\right]^{1/(2-\gamma)}\;\;\to\;\;\left( \frac{c\sigma}{\fc_1}\right)^{1/(2-\gamma)}.
\eeq
\beq\label{prop-2}
\frac{s^*}{\cP}\to 0
\eeq
\beq\label{prop-3}
\frac{M^{1/2-\gamma}}{\cP^{3/2-\gamma}}\left(c-\frac{\fa_1\fc_1\cP^2  }{\sigma M^{1/2-\gamma}\cP^\gamma}\right)\;\;\to\;\;x
\eeq
for all $x$.
\end{lemma}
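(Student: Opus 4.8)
The statement is purely a deterministic asymptotic computation, entirely parallel to Lemma~\ref{cn-1} in the proof of Theorem~\ref{alt-stat}; once one shows that the subtracted term inside the bracket defining $\cP$ is negligible relative to the main one, the rest is bookkeeping of fractional exponents. Concretely the plan is: (a) show $\fa_1\cP_1\to 0$, so that $\cP^{2-\gamma}=\frac{c\sigma}{\fc_1\fa_1}M^{1/2-\gamma}\bigl(1+o(1)\bigr)$; (b) read off \eqref{prop-1} by multiplying through by $\fa_1/M^{1/2-\gamma}$; (c) deduce \eqref{prop-2} from the polynomial lower bound on $\cP$ coming from (a) together with the growth restriction \eqref{th-alt-31} on $s^*$; (d) prove \eqref{prop-3} by inserting the exact expression for $\cP^{2-\gamma}$ into the bracket, where the constant $c$ cancels and the remaining prefactor is shown to tend to $1$.

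For (a), put $q=(3-2\gamma)/(4-2\gamma)$. From the definitions, $\fa_1\cP_1=C_0\,\fa_1^{-q}M^{-(1/2-\gamma)/(4-2\gamma)}=C_0\bigl(\fa_1^{3-2\gamma}M^{1/2-\gamma}\bigr)^{-1/(4-2\gamma)}$ with $C_0=c^{q}\sigma^{q+1}\fc_1^{-(q+1)}$ a fixed positive constant. Since $(1-2\gamma)/(6-4\gamma)=(1/2-\gamma)/(3-2\gamma)$, condition \eqref{th-alt-23} gives $\fa_1M^{(1/2-\gamma)/(3-2\gamma)}\to\infty$; raising to the positive power $3-2\gamma$ yields $\fa_1^{3-2\gamma}M^{1/2-\gamma}\to\infty$, hence $\fa_1\cP_1\to 0$. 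For (b), factoring $M^{1/2-\gamma}$ out of the bracket,
\[
\cP\Bigl[\frac{\fa_1}{M^{1/2-\gamma}}\Bigr]^{1/(2-\gamma)}
=\Bigl(\frac{\fa_1}{M^{1/2-\gamma}}\,\cP^{2-\gamma}\Bigr)^{\!1/(2-\gamma)}
=\Bigl(\frac{c\sigma}{\fc_1}-x\,\fa_1\cP_1\Bigr)^{\!1/(2-\gamma)}\;\longrightarrow\;\Bigl(\frac{c\sigma}{\fc_1}\Bigr)^{\!1/(2-\gamma)},
\]
which is \eqref{prop-1}; here $\limsup|\fa_1|<\infty$ and $\fc_1=(1-\beta_{0,d})/2>0$ keep the base positive and bounded.

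For (c), part (a) gives $\cP^{2-\gamma}\ge\tfrac12\,\frac{c\sigma}{\fc_1}\,\fa_1^{-1}M^{1/2-\gamma}$ for all large $M$, so that, since $|\fa_1|$ is bounded above, $\cP\ge c'M^{(1/2-\gamma)/(2-\gamma)}$ for some fixed $c'>0$; together with \eqref{th-alt-31} an elementary comparison of exponents gives $s^*/\cP\to 0$. For (d), note $\dfrac{\fa_1\fc_1\cP^{2}}{\sigma M^{1/2-\gamma}\cP^{\gamma}}=\dfrac{\fa_1\fc_1}{\sigma M^{1/2-\gamma}}\,\cP^{2-\gamma}$, and substituting $\cP^{2-\gamma}=\frac{c\sigma}{\fc_1\fa_1}M^{1/2-\gamma}-x\cP_1M^{1/2-\gamma}$ makes the constant $c$ cancel:
\[
c-\frac{\fa_1\fc_1\,\cP^{2-\gamma}}{\sigma M^{1/2-\gamma}}=\frac{x\,\fa_1\fc_1\cP_1}{\sigma}.
\]
Hence the left-hand side of \eqref{prop-3} equals $x\cdot\dfrac{\fa_1\fc_1\cP_1\,M^{1/2-\gamma}}{\sigma\,\cP^{3/2-\gamma}}$, and it remains to check the prefactor tends to $1$. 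Writing $\cP^{3/2-\gamma}=(\cP^{2-\gamma})^{(3/2-\gamma)/(2-\gamma)}$, using $(3/2-\gamma)/(2-\gamma)=q$ and the reduced form $\cP^{2-\gamma}\sim\frac{c\sigma}{\fc_1\fa_1}M^{1/2-\gamma}$ from (a), one checks that both $\fa_1\fc_1\cP_1M^{1/2-\gamma}$ and $\sigma\cP^{3/2-\gamma}$ equal $c^{q}\sigma^{q+1}\fc_1^{-q}\fa_1^{-q}M^{q(1/2-\gamma)}\bigl(1+o(1)\bigr)$, so the prefactor is $1+o(1)$ and \eqref{prop-3} follows.

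The only genuine work is the exponent bookkeeping in (d): it rests on the elementary identities $(7-4\gamma)/(4-2\gamma)=q+1$, $1-1/(4-2\gamma)=q$ and $(3/2-\gamma)/(2-\gamma)=q$, which force all powers of $M,\fa_1,\fc_1,c,\sigma$ in numerator and denominator to match — the exponents in the definition of $\cP_1$ are tuned for precisely this. The argument is entirely deterministic and uses only $\limsup|\fa_1|<\infty$, $\fa_1M^{(1-2\gamma)/(6-4\gamma)}\to\infty$, and the growth bound on $s^*$.
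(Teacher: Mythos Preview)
Your argument follows the paper's proof essentially line for line, only with the exponent bookkeeping spelled out; steps (a), (b) and (d) are correct and match the paper's reasoning (the paper writes simply ``$\cP_1\fa_1\to0$ by condition \eqref{th-alt-23}'', then substitutes the definition of $\cP^{2-\gamma}$ into the bracket and invokes the definition of $\cP_1$ together with \eqref{prop-1}, exactly as you do).

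The one place to be careful is your step (c). From $\cP\ge c'M^{(1/2-\gamma)/(2-\gamma)}=c'M^{(1-2\gamma)/(4-2\gamma)}$ and \eqref{th-alt-31}, which only gives $s^*=o(M^{(3-2\gamma)/(4-2\gamma)})$, the ``elementary comparison of exponents'' does \emph{not} yield $s^*/\cP\to0$: since $(3-2\gamma)>(1-2\gamma)$, the allowed growth of $s^*$ exceeds your lower bound on $\cP$. The paper's own proof of \eqref{prop-2} is no more detailed --- it says only ``the assumption in \eqref{th-alt-31} and \eqref{prop-1} yield \eqref{prop-2}'' --- so you have not introduced a gap that the paper avoids; you have merely made explicit a step that the paper asserts without justification.
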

\begin{proof} Since by condition \eqref{th-alt-23}
$$
{\cP_1}{\fa_1}\to 0,
$$
we have immediately \eqref{prop-1}.\\
The assumption in \eqref{th-alt-31} and \eqref{prop-1} yield \eqref{prop-2}.\\
By the definition of $\cP$ we have
\begin{align*}
&\frac{M^{1/2-\gamma}}{\cP^{3/2-\gamma}}\left(c-\frac{\fa_1\fc_1\cP^2  }{\sigma M^{1/2-\gamma}\cP^\gamma}\right)\\
&=\frac{M^{1/2-\gamma}}{\cP^{3/2-\gamma}}\left(c-\frac{\fa_1\fc_1 }{\sigma M^{1/2-\gamma}}\left[ \frac{c\sigma}{\fc_1{\fa}_1}M^{1/2-\gamma}-x\cP_1M^{1/2-\gamma}  \right]\right)\\
&=x\frac{M^{1/2-\gamma}}{\cP^{3/2-\gamma}}\frac{\fa_1\fc_1 }{\sigma M^{1/2-\gamma}}\cP_1M^{1/2-\gamma},
\end{align*}
so the result in \eqref{prop-3} follows from the definition of $\cP_1$ and \eqref{prop-1}.
\end{proof}
\begin{lemma}\label{minf} If Assumptions \ref{bern}--\ref{a-adef}, \ref{unir}, \ref{destat-1}, \eqref{th-alt-31} and \eqref{th-alt-23} hold, the we have that
\begin{align*}
\frac{ M^{1/2-\gamma}}{\cP^{3/2-\gamma}}\left(\max_{1\leq s \leq s^*}
\frac{|\Gamma(M,s)|}{g(M,s)}-\frac{\fc_1\fa_1\cP^2}{\sigma M^{1/2}(\cP/M)^\gamma}
\right)\;\stackrel{P}{\to}\;-\infty.
\end{align*}
\end{lemma}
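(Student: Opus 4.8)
The plan is to decompose the expression as
$$
\frac{M^{1/2-\gamma}}{\cP^{3/2-\gamma}}\max_{1\leq s\leq s^*}\frac{|\Gamma(M,s)|}{g(M,s)}\;-\;\frac{M^{1/2-\gamma}}{\cP^{3/2-\gamma}}\cdot\frac{\fc_1\fa_1\cP^2}{\sigma M^{1/2}(\cP/M)^\gamma},
$$
and to show that the first summand is $o_P(1)$ while the second tends deterministically to $+\infty$; subtracting the two then gives the asserted $-\infty$ limit. The idea is the same one behind Lemma \ref{cn-2}: before the change the detector only accumulates the stationary errors, so on $1\le s\le s^*$ it is negligibly small against the boundary, whereas $M^{1/2-\gamma}/\cP^{3/2-\gamma}$ is precisely the normalization under which the drift term $\fc_1\fa_1\cP^2/(\sigma M^{1/2}(\cP/M)^\gamma)$ — which by \eqref{prop-1} converges to $c$ — blows up after rescaling.

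For the second summand, using $M^{1/2}(\cP/M)^\gamma=M^{1/2-\gamma}\cP^\gamma$ one gets at once
$$
\frac{M^{1/2-\gamma}}{\cP^{3/2-\gamma}}\cdot\frac{\fc_1\fa_1\cP^2}{\sigma M^{1/2}(\cP/M)^\gamma}=\frac{\fc_1}{\sigma}\,\fa_1\cP^{1/2},
$$
so it remains to see that $\fa_1\cP^{1/2}\to\infty$. By \eqref{prop-1}, $\cP\sim(c\sigma/\fc_1)^{1/(2-\gamma)}\fa_1^{-1/(2-\gamma)}M^{(1/2-\gamma)/(2-\gamma)}$ — in particular $\cP\to\infty$, since $\limsup_M|\fa_1|<\infty$ — hence
$$
\fa_1\cP^{1/2}\sim\Bigl(\tfrac{c\sigma}{\fc_1}\Bigr)^{1/(2(2-\gamma))}\fa_1^{(3-2\gamma)/(2(2-\gamma))}M^{(1/2-\gamma)/(2(2-\gamma))}=\Bigl(\tfrac{c\sigma}{\fc_1}\Bigr)^{1/(2(2-\gamma))}\bigl(\fa_1 M^{(1-2\gamma)/(6-4\gamma)}\bigr)^{(3-2\gamma)/(2(2-\gamma))},
$$
because $\tfrac{1-2\gamma}{6-4\gamma}\cdot\tfrac{3-2\gamma}{2(2-\gamma)}=\tfrac{1/2-\gamma}{2(2-\gamma)}$. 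Taking $\fa_1>0$ (as we may), \eqref{th-alt-23} says $\fa_1 M^{(1-2\gamma)/(6-4\gamma)}\to\infty$, and since the exponent $(3-2\gamma)/(2(2-\gamma))$ is positive this forces $\fa_1\cP^{1/2}\to\infty$.

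For the first summand, \eqref{th-alt-31} gives $s^*/M\to0$, so Lemma \ref{cn-00} applies and yields $\max_{1\le s\le s^*}|\sum_{u=M+1}^{M+s}\hat{\eps}_u|\big/(\sigma M^{1/2}(1+s/M)(s/(M+s))^\gamma)=O_P((s^*/M)^{1/2-\gamma})$; since $\hat{\sigma}_M\stackrel{P}{\to}\sigma$ (shown in the proof of Lemma \ref{empeps}), dividing by $c\hat{\sigma}_M$ leaves the order unchanged, so $\max_{1\le s\le s^*}|\Gamma(M,s)|/g(M,s)=O_P((s^*/M)^{1/2-\gamma})$. Consequently
$$
\frac{M^{1/2-\gamma}}{\cP^{3/2-\gamma}}\max_{1\le s\le s^*}\frac{|\Gamma(M,s)|}{g(M,s)}=O_P\!\left(\frac{(s^*)^{1/2-\gamma}}{\cP^{3/2-\gamma}}\right)=O_P\!\left(\frac1{\cP}\Bigl(\frac{s^*}{\cP}\Bigr)^{1/2-\gamma}\right)=o_P(1),
$$
using $s^*/\cP\to0$ from \eqref{prop-2} together with $\cP\to\infty$. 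Combining the two summands, the quantity in the lemma equals $o_P(1)-(\fc_1/\sigma)\fa_1\cP^{1/2}\stackrel{P}{\to}-\infty$. The only genuine work is the deterministic bookkeeping of the middle paragraph — identifying the rescaled drift term with $(\fc_1/\sigma)\fa_1\cP^{1/2}$ and recognizing that \eqref{th-alt-23} is exactly the hypothesis making it diverge; all probabilistic content is inherited from Lemmas \ref{cn-00} and \ref{empeps}.
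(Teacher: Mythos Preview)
Your proof is correct and follows essentially the same approach as the paper: both split the expression into the pre-change detector term (handled via Lemma \ref{cn-00} and $\hat\sigma_M\stackrel{P}{\to}\sigma$) and the rescaled drift term, showing the former is $o_P(1)$ and the latter diverges via \eqref{th-alt-23}. The only cosmetic difference is that the paper argues the divergence by writing the drift product as $(M^{1/2-\gamma}/\cP^{3/2-\gamma})\times(\text{term}\to c)$ and showing the first factor diverges, whereas you first simplify the product to $(\fc_1/\sigma)\fa_1\cP^{1/2}$ and then verify its divergence --- both routes reduce to the same exponent bookkeeping with \eqref{prop-1} and \eqref{th-alt-23}.
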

\begin{proof} Using  \eqref{sicos}, Lemma \ref{cn-00} and \eqref{prop-1} we conclude
\begin{align*}
\frac{ M^{1/2-\gamma}}{\cP^{3/2-\gamma}}\max_{1\leq s \leq s*}\frac{|\Gamma(M,s)|}{g(M,s)}
=O_P\left( \frac{M^{1/2-\gamma}}{\cP^{3/2-\gamma}}\frac{(s^*)^{1/2-\gamma}}{M^{1/2-\gamma}}\right)
=O_P\left(\left( \frac{s^* }{M^{(3-2\gamma)/(4-2\gamma)}}  \right)^{1/2-\gamma}\right)=o_P(1)
\end{align*}
on account of \eqref{th-alt-31}. There is  $C>0$ such that
$$
\frac{ M^{1/2-\gamma}}{\cP^{3/2-\gamma}}\geq C M^{1/2-\gamma}\fa_1^{(3-2\gamma)/(4-2\gamma)}M^{-(3-2\gamma)(1/2-\gamma)/(4-2\gamma)}
\geq C \left(\fa_1^{3-2\gamma} M^{1/2-\gamma}\right)^{1/(4-2\gamma)}\to \infty
$$
by \eqref{th-alt-23}. Also, using again \eqref{prop-1} we obtain
$$
\frac{\fc_1\fa_1\cP^2}{\sigma M^{1/2}(\cP/M)^\gamma}\to c,
$$
completing the proof of the lemma.
\end{proof}

\begin{lemma}\label{root-1} If Assumptions \ref{bern}--\ref{a-adef}, \ref{unir}, \ref{destat-1}, \eqref{th-alt-31} and \eqref{th-alt-23} hold, the we have that
$$
\frac{ M^{1/2-\gamma}}{\cP^{3/2-\gamma}}\max_{s^*< s\leq \cP}\frac{\displaystyle \left|\sum_{u=M+1}^{M+s}\eps_u\right|}{M^{1/2}(s/M)^\gamma}=o_P(1).
$$
\end{lemma}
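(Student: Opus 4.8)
The plan is to reduce the partial sum $\sum_{u=M+1}^{M+s}\eps_u$ to a Brownian term via the strong invariance principle of Lemma \ref{epsapp}, and then control the resulting Gaussian supremum by scaling. Writing the denominator as $M^{1/2}(s/M)^\gamma=M^{1/2-\gamma}s^\gamma$, the quantity to be bounded equals $\cP^{-(3/2-\gamma)}\max_{s^*<s\le\cP}\bigl|\sum_{u=M+1}^{M+s}\eps_u\bigr|/s^\gamma$; enlarging the range to $1\le s\le\cP$ only makes it larger, so it suffices to treat $\cP^{-(3/2-\gamma)}\max_{1\le s\le\cP}\bigl|\sum_{u=M+1}^{M+s}\eps_u\bigr|/s^\gamma$.

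First I would apply \eqref{twoapp}: there is a Wiener process $W_{M,2}$ with $\sup_{1\le s<\infty}s^{-(1/2-\delta)}\bigl|\sum_{u=M+1}^{M+s}\eps_u-\sigma W_{M,2}(s)\bigr|=O_P(1)$, where we take $\delta>0$ small enough that $1/2-\delta-\gamma>0$. Since $s\mapsto s^{1/2-\delta-\gamma}$ is then increasing, the approximation error contributes $O_P(\cP^{1/2-\delta-\gamma})$ to $\max_{1\le s\le\cP}\bigl|\sum_{u=M+1}^{M+s}\eps_u\bigr|/s^\gamma$, hence $O_P(\cP^{-1-\delta})$ after division by $\cP^{3/2-\gamma}$.

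Next I would handle the Brownian part. By the scaling property $W_{M,2}(s)\stackrel{{\mathcal D}}{=}M^{1/2}W(s/M)$ and the self-similarity of the Wiener process,
$$
\max_{1\le s\le\cP}\frac{|W_{M,2}(s)|}{s^\gamma}\;\stackrel{{\mathcal D}}{=}\;M^{1/2-\gamma}\max_{1/M\le u\le\cP/M}\frac{|W(u)|}{u^\gamma}\;\stackrel{{\mathcal D}}{=}\;\cP^{1/2-\gamma}\max_{1/\cP\le v\le 1}\frac{|W(v)|}{v^\gamma},
$$
and the last maximum is $O_P(1)$ since it is dominated by $\sup_{0<v\le1}|W(v)|/v^\gamma$, which is a.s.\ finite because $\gamma<1/2$ (this is exactly the limiting object of Theorem \ref{th-null}). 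Dividing by $\cP^{3/2-\gamma}$ gives $O_P(\cP^{-1})$ for this term.

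Finally, both bounds vanish because $\cP\to\infty$: from \eqref{prop-1} one has $\cP^{2-\gamma}\asymp M^{1/2-\gamma}/\fa_1$, and since $\fa_1$ is bounded above by \eqref{th-alt-23} while $\gamma<1/2$, it follows that $\cP\ge c\,M^{(1/2-\gamma)/(2-\gamma)}\to\infty$. Hence the quantity in the lemma is $O_P(\cP^{-1-\delta})+O_P(\cP^{-1})=o_P(1)$. I do not expect any genuine obstacle; the only point requiring care is the bookkeeping of exponents when the prefactor $\cP^{-(3/2-\gamma)}$ is combined with the two bounds, and this is entirely dictated by the single exponent $1/2-\gamma$ together with the divergence of $\cP$.
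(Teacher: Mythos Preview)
Your proposal is correct and follows essentially the same route as the paper: the paper invokes Lemma \ref{epsapp} to assert directly that $\max_{s^*<s\le\cP}|\sum_{u=M+1}^{M+s}\eps_u|/s^\gamma=O_P(\cP^{1/2-\gamma})$ and then divides by $\cP^{3/2-\gamma}$ to get $O_P(\cP^{-1})=o_P(1)$. You simply spell out why Lemma \ref{epsapp} yields that bound---splitting into the approximation error $O_P(\cP^{1/2-\delta-\gamma})$ and the Brownian term $O_P(\cP^{1/2-\gamma})$ via scaling---and you also supply a justification for $\cP\to\infty$, which the paper takes for granted at this point.
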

\begin{proof} It follows from Lemma \ref{epsapp}
$$
\max_{s^*< s\leq \cP}\frac{1}{s^\gamma}\left|\sum_{u=M+1}^{M+s}\eps_u\right|=O_P\left(\cP^{1/2-\gamma}\right)
$$
and therefore
\begin{align*}
\frac{ M^{1/2-\gamma}}{\cP^{3/2-\gamma}}\max_{s^*< s\leq \cP}\frac{\displaystyle \left|\sum_{u=M+1}^{M+s}\eps_u\right|}{M^{1/2}(s/M)^\gamma}
=O_P\left( \frac{1}{ \cP} \right)=o_P(1),
\end{align*}
since $\cP\to \infty$.
\end{proof}
\begin{lemma}\label{root-2} If Assumptions \ref{bern}--\ref{a-adef}, \ref{unir}, \ref{destat-1}, \eqref{th-alt-31} and \eqref{th-alt-23} hold, the we have that
\beq\label{rr-1}
\frac{ M^{1/2-\gamma}}{\cP^{3/2-\gamma}}\max_{s^*< s\leq \cP}\frac{\displaystyle \left|\left(\sum_{u=M+1}^{M+s^*}\bx_u\right)^\T(\hat{\bbe}_M-\bbe_0)\right|}
{M^{1/2}(s/M)^\gamma}=o_P(1)
\eeq
and
\beq\label{rr-2}
\frac{ M^{1/2-\gamma}}{\cP^{3/2-\gamma}}\max_{s^*< s\leq \cP}\frac{\displaystyle \left|\left(\sum_{u=M+s^*+1}^{M+s}\bw_u\right)^\T({\bde}-\bar{\bbe}_0)\right|}
{M^{1/2}(s/M)^\gamma}=o_P(1)
\eeq
\end{lemma}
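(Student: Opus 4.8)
The plan is to handle \eqref{rr-1} and \eqref{rr-2} one at a time, in each case collapsing the maximum over $s$ onto a single extreme value of $s$ by elementary monotonicity, and then inserting the size bounds \eqref{le-9-2} and \eqref{le-9-3} (whose proofs invoke only Lemmas \ref{hatga} and \ref{xlaw} and hence remain valid under the present hypotheses) together with the asymptotics of $\cP$ from Lemma \ref{mprpop}. Throughout I would use that $\cP\to\infty$: this is not an explicit hypothesis, but it is read off from \eqref{prop-1}, which gives $\cP\sim(c\sigma/\fc_1)^{1/(2-\gamma)}(M^{1/2-\gamma}/\fa_1)^{1/(2-\gamma)}$, combined with $\gamma<1/2$ (so $M^{1/2-\gamma}\to\infty$) and $\limsup_{M\to\infty}|\fa_1|<\infty$ from \eqref{th-alt-23}; I would also use $s^*/\cP\to0$, which is exactly \eqref{prop-2}.

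For \eqref{rr-1}, the first step is to note that $\left(\sum_{u=M+1}^{M+s^*}\bx_u\right)^\T(\hat{\bbe}_M-\bbe_0)$ does not depend on $s$, while $M^{1/2}(s/M)^\gamma=M^{1/2-\gamma}s^\gamma$ is non-decreasing in $s$ because $\gamma\ge0$; hence the maximum over $s^*<s\le\cP$ of the displayed ratio is attained at $s=s^*+1$. I would then insert $\left|\left(\sum_{u=M+1}^{M+s^*}\bx_u\right)^\T(\hat{\bbe}_M-\bbe_0)\right|=O_P(s^*M^{-1/2})$ from \eqref{le-9-2} and simplify the powers of $M$; using $(s^*+1)^\gamma\asymp(s^*)^\gamma$ since $s^*\ge1$, the left-hand side of \eqref{rr-1} becomes $O_P\!\big((s^*/\cP)^{1-\gamma}(M^{1/2}\cP^{1/2})^{-1}\big)$, which is $o_P(1)$ since $s^*/\cP\to0$, $M\to\infty$ and $\cP\to\infty$.

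For \eqref{rr-2}, I would bound the numerator crudely by \eqref{le-9-3}: uniformly in $s$, $\left|\left(\sum_{u=M+s^*+1}^{M+s}\bw_u\right)^\T(\bar{\bde}-\bar{\bbe}_0)\right|=O_P(s)$. Dividing by $M^{1/2}(s/M)^\gamma$, the quantity to be maximized over $s^*<s\le\cP$ is, up to an $O_P(1)$ factor, $s^{1-\gamma}M^{\gamma-1/2}$, which is increasing in $s$ because $1-\gamma>0$, so its maximum is at $s=\cP$. Multiplying by $M^{1/2-\gamma}/\cP^{3/2-\gamma}$, the powers of $M$ cancel and one is left with $O_P(\cP^{-1/2})=o_P(1)$, again because $\cP\to\infty$. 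I do not expect a genuine obstacle here: the computations are routine, and the only points needing a moment's thought are the two monotonicity reductions and the derivation of $\cP\to\infty$. Conceptually, the lemma just certifies that, after multiplication by $M^{1/2-\gamma}/\cP^{3/2-\gamma}$, both the contribution of the estimation error in $\hat{\bbe}_M$ (first bound) and the contribution of the change $\bar{\bde}-\bar{\bbe}_0$ in the non-autoregressive coefficients (second bound) are negligible, so that in the proof of Theorem \ref{th-alt-3} the limit is driven entirely by the $(1-\beta_{0,d})\sum_{u=M+s^*+1}^{M+s}y_{u-1}$ term.
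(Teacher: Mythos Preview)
Your proposal is correct and follows essentially the same route as the paper: both parts are handled by invoking the size bounds coming from Lemmas \ref{hatga} and \ref{xlaw} (equivalently \eqref{le-9-2} and \eqref{le-9-3}), evaluating the maximum over $s$ at the obvious extreme point, and then simplifying. The only cosmetic difference is in the final simplification: for \eqref{rr-1} you rewrite the bound as $(s^*/\cP)^{1-\gamma}(M^{1/2}\cP^{1/2})^{-1}$ and appeal directly to $s^*/\cP\to0$ and $\cP\to\infty$ from Lemma \ref{mprpop}, whereas the paper substitutes the explicit asymptotic form of $\cP$ to get a bound in terms of $M$ and $\fa_1$; similarly for \eqref{rr-2} you obtain $O_P(\cP^{-1/2})$ while the paper tracks the $\fa_1$ factor to get $O_P(\fa_1/\cP)$, but both are $o_P(1)$ under \eqref{th-alt-23}.
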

\begin{proof} Lemmas \ref{hatga} and \ref{xlaw} imply
\begin{align*}
\frac{ M^{1/2-\gamma}}{\cP^{3/2-\gamma}}&\max_{s^*< s\leq \cP}\frac{\displaystyle \left|\left(\sum_{u=M+1}^{M+s^*}\bx_u\right)^\T(\hat{\bbe}_M-\bbe_0)\right|}
{M^{1/2}(s/M)^\gamma}\\
&=O_P\left(\cP^{-(3/2-\gamma)}M^{-1/2}(s^*)^{1-\gamma}\right)\\
&=O_P\left( \fa_1^{(1-2\gamma)/(4-2\gamma)}M^{-(1-2\gamma)^2/(4-2\gamma)}\right)\\
&=o_P(1)
\end{align*}
on account of condition \eqref{th-alt-23}. \\
Similarly to the proof of \eqref{rr-1} we have
\begin{align*}
\frac{ M^{1/2-\gamma}}{\cP^{3/2-\gamma}}\max_{s^*< s\leq \cP}\frac{\displaystyle \left|\left(\sum_{u=M+s^*+1}^{M+s}\bw_u\right)^\T({\bde}-\bar{\bbe}_0)\right|}
{M^{1/2}(s/M)^\gamma}=O_P\left(\frac{\fa_1}{\cP}\right)=o_P(1).
\end{align*}
\end{proof}

\begin{lemma}\label{root-3} If Assumptions \ref{bern}--\ref{a-adef}, \ref{unir}, \ref{destat-1}, \eqref{th-alt-31} and \eqref{th-alt-23} hold, the we have that
\begin{align}\label{ro-3-1}
\frac{M^{1/2-\gamma}}{\cP^{3/2-\gamma}}\max_{s^*<s\leq \cP}
\frac{\displaystyle  \left|(1-\beta_{0,d})\sum_{u=M+s^*+1}^{M+s}y_{u-1}-\left(\fd_1\int_0^s W_{M,3}(u)du+\fa_1\fc_1s^2\right)\right|}
{M^{1/2}(1+s/M)(s/(M+s))^\gamma}=o_P\left(  1 \right)
\end{align}
and
\begin{align}\label{ro-3-2}
&\frac{M^{1/2-\gamma}}{\cP^{3/2-\gamma}}\max_{s^*<s\leq \cP}
{\displaystyle  \left|\fd_1\int_0^s W_{M,3}(u)du+\fa_1\fc_1s^2 \right|}\\
&\hspace{1cm}\times\left|
\frac{1}{\hat{\sigma}_MM^{1/2}(1+s/M)(s/(M+s))^\gamma}
-\frac{1}{\sigma M^{1/2}(s/M)^\gamma}\right|
=o_P\left( 1  \right),\notag
\end{align}
where $\{W_{M,3}(u), u\geq 0\}$ is defined in Lemma \ref{le-alt-9}.
\end{lemma}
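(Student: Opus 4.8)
The plan is to handle the two displays in Lemma~\ref{root-3} one at a time, using the random walk representation \eqref{yunir} for the post-change values of $y$, the strong approximation \eqref{le-9-4} from Lemma~\ref{le-alt-9}, and the scale identities \eqref{prop-1}--\eqref{prop-3} of Lemma~\ref{mprpop}. For \eqref{ro-3-1} I would first insert \eqref{yunir} into $\sum_{u=M+s^*+1}^{M+s}y_{u-1}$ and split the result into an integrated Wiener part, an accumulated drift, and a remainder, as in the decomposition preceding Lemma~\ref{le-alt-9}. Applying \eqref{le-9-4} to the de-drifted partial sums produces $\fb_1\int_0^s W_{M,3}(u)\,du$ up to an error of order $s^{3/2-\delta}$; multiplying by $1-\beta_{0,d}$ turns this into $\fd_1\int_0^s W_{M,3}(u)\,du$ with $\fd_1=(1-\beta_{0,d})\fb_1$. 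The accumulated deterministic drift, multiplied by $1-\beta_{0,d}$, has $\fa_1\fc_1 s^2$ as its leading term, with lower-order corrections of orders $s$, $ss^*$ and $(s^*)^2$ coming from replacing $(s-s^*-1)(s-s^*)/2$ by $s^2/2$. The remaining contributions to the difference in \eqref{ro-3-1} are therefore: (a) the $O_P(s^{3/2-\delta})$ error of \eqref{le-9-4}; (b) the initial-value term $(1-\beta_{0,d})(s-s^*)y_{M+s^*}=O_P(s)$; (c) the $O_P(s)+O_P(ss^*)+O_P((s^*)^2)$ drift corrections; and (d) the discrepancy $\fb_1\int_{s-s^*}^{s}W_{M,3}(u)\,du=O_P(s^{1/2}s^*)$ between $\int_0^{s-s^*}$ and $\int_0^{s}$.

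Since $\cP/M\to0$ (by \eqref{prop-1}), one has $1+s/M=1+o(1)$ and $(s/(M+s))^\gamma=(s/M)^\gamma(1+o(1))$ uniformly over $1\le s\le\cP$, so the denominator in \eqref{ro-3-1} is $M^{1/2-\gamma}s^\gamma(1+o(1))$, and it suffices to show, for each of the pieces above,
\[
\frac{M^{1/2-\gamma}}{\cP^{3/2-\gamma}}\max_{s^*<s\le\cP}\frac{(\text{piece})}{M^{1/2-\gamma}s^\gamma}=o_P(1).
\]
For (a) this is $O_P(\cP^{-\delta})$; for (b) and the $O_P(s)$ part of (c) it is $O_P(\cP^{-1/2})$ and $O_P(\fa_1\cP^{-1/2})$, both $o_P(1)$ because $\cP\to\infty$ and $\limsup|\fa_1|<\infty$ by \eqref{th-alt-23}; for the $O_P(ss^*)$ part of (c) and for (d) it reduces to $O_P(\fa_1 s^*\cP^{-1/2})$ and $O_P(s^*\cP^{-1/2})$; and for the $O_P((s^*)^2)$ part of (c) to $O_P(\fa_1(s^*)^{2-\gamma}\cP^{\gamma-3/2})$. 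These last three are the delicate estimates: they are verified by substituting $\cP\sim (c\sigma/(\fc_1\fa_1))^{1/(2-\gamma)}M^{(1/2-\gamma)/(2-\gamma)}$ from \eqref{prop-1} together with $s^*/\cP\to0$ from \eqref{prop-2}, and then invoking the growth restriction \eqref{th-alt-31} and the lower bound on $\fa_1$ in \eqref{th-alt-23} simultaneously; the verification is a purely arithmetic exponent computation.

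For \eqref{ro-3-2} I would bound the numerator by $\fd_1\bigl|\int_0^s W_{M,3}(u)\,du\bigr|+\fa_1\fc_1 s^2=O_P(s^{3/2})+O(\fa_1)\,s^2$, using Brownian scaling and Lemma~\ref{w-int} for the integral and $\limsup|\fa_1|<\infty$. For the difference of reciprocal normalizations I would write
\[
\frac{1}{\hat\sigma_M M^{1/2}(1+s/M)(s/(M+s))^\gamma}-\frac{1}{\sigma M^{1/2}(s/M)^\gamma}
=\frac{1}{\sigma M^{1/2}(s/M)^\gamma}\Bigl(\frac{\sigma}{\hat\sigma_M}\,(1+s/M)^{-1-\gamma}-1\Bigr),
\]
and note that the bracketed factor is $O_P(M^{-1/2}+s/M)$, since $\hat\sigma_M=\sigma+O_P(M^{-1/2})$ by Lemma~\ref{hatga} (cf.\ also \eqref{sicos}) and $(1+s/M)^{-1-\gamma}=1+O(s/M)$. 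Multiplying, maximizing over $s\le\cP$, and again using the asymptotics of Lemma~\ref{mprpop} gives $o_P(1)$; the most dangerous term, $O_P(\fa_1\cP^{3/2}/M)$ arising from $\fa_1 s^2\cdot(s/M)$, is controlled because by \eqref{prop-1} the $M$-exponent of $\fa_1\cP^{3/2}/M$ is $\frac{3(1/2-\gamma)}{2(2-\gamma)}-1<0$ while the $\fa_1$-exponent $\frac{1-2\gamma}{2(2-\gamma)}>0$ and $\fa_1$ is bounded.

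I expect the bookkeeping in the middle step — showing that the $s^*$-dependent remainder pieces in \eqref{ro-3-1}, namely the $O_P(ss^*)$, $O_P((s^*)^2)$ and $O_P(s^{1/2}s^*)$ terms, vanish after normalization — to be the main obstacle, since this is precisely where the size of the change $\fa_1$ (and hence the stopping scale $\cP$) and the admissible growth of $s^*$ must be balanced against each other, so that all of \eqref{th-alt-31}, \eqref{th-alt-23}, \eqref{prop-1} and \eqref{prop-2} enter at once. The rest is a routine, if lengthy, repetition of the patterns already used in Lemmas~\ref{root-1} and~\ref{root-2}.
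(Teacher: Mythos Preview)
Your proposal is correct and follows essentially the same route as the paper: for \eqref{ro-3-1} you apply the strong approximation \eqref{le-9-4} of Lemma~\ref{le-alt-9} to replace $\sum y_{u-1}$ by $\fb_1\int_0^s W_{M,3}+\fa_1 s^2/2$, and for \eqref{ro-3-2} you compare the two normalizations exactly as in the second part of Lemma~\ref{cn-3}, which is precisely what the paper does (it simply says ``following the proof of Lemma~\ref{cn-3}'').

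The one point worth flagging is that your anticipated ``main obstacle'' is not one. In the paper's formulation, \eqref{le-9-4} already relates $\sum_{u=M+s^*+1}^{M+s} y_{u-1}$ directly to $\fb_1\int_0^s W_{M,3}(u)\,du$ (not $\int_0^{s-s^*}$) with a single remainder of order $s^{3/2-\delta}$; the initial-value contribution, the integral shift, and the $s^*$-dependent drift corrections you itemize as (b), (c), (d) are therefore all absorbed into that one estimate and need no separate treatment. The paper's proof accordingly dispatches \eqref{ro-3-1} in two lines: one application of \eqref{le-9-4} for the centered part and one elementary bound $|\sum Ey_{u-1}-\fa_1 s^2/2|=O(s)$ for the drift, both of which give $o_P(\cP^{3/2-\gamma}/M^{1/2-\gamma})$ after division by $M^{1/2-\gamma}s^\gamma$. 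Your more granular decomposition is not wrong, just redundant; once you invoke \eqref{le-9-4} as stated, the delicate balancing of $\fa_1$, $s^*$ and $\cP$ that you expect never arises.
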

\begin{proof} According to Lemmas \ref{minf}--\ref{root-2} we need to show only
\begin{align*}
\max_{s^*<s\leq \cP}
\frac{\displaystyle (1-\beta_{0,d}) \left[\sum_{u=M+s^*+1}^{M+s}y_{u-1}-\left(\fb_1\int_0^s W_{M,3}(u)du+\fa_1s^2/2 \right)\right]}
{M^{1/2}(1+s/M)(s/(M+s))^\gamma}=o_P\left( \frac{\cP^{3/2-\gamma}}{M^{1/2-\gamma}}  \right).
\end{align*}
It follows from Lemma \ref{le-alt-9} that
\begin{align*}
\max_{s^*<s\leq \cP}
\frac{\displaystyle  \left|\sum_{u=M+s^*+1}^{M+s}(y_{u-1}-Ey_{u-1})-\fb_1\int_0^s W_{M,3}\right|}
{M^{1/2}(1+s/M)(s/(M+s))^\gamma}=O_P\left( \frac{\cP}{M^{1/2-\gamma}}  \right)
\end{align*}
and
$$
\max_{s^*<s\leq \cP}\frac{\displaystyle\left|\sum_{u=M+s^*+1}^{M+s}Ey_{u-1}-\fa_1 s^2/2\right|}
{M^{1/2}(1+s/M)(s/(M+s))^\gamma}O_P\left( \frac{\cP}{M^{1/2-\gamma}}  \right).
$$
Thus \eqref{ro-3-1} is proven.  Following  the proof of Lemma \ref{cn-3}, one can verify \eqref{ro-3-2}.

\end{proof}

\begin{lemma}\label{root-5} If Assumptions \ref{bern}--\ref{a-adef}, \ref{unir}, \ref{destat-1}, \eqref{th-alt-31} and \eqref{th-alt-23} hold, the we have that
$$
\lim_{M\to\infty}P\{\tau_M>\cP(M,x)\}=\Phi\left(\frac{x\sigma\sqrt{3}}{\fb_1(1-\beta_{0,d})}\right)
$$
for all $x$.
\end{lemma}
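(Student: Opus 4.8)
The plan is to follow the blueprint of Lemma~\ref{cn-5}, with the linear drift $\Delta_M s$ of the stationary alternative replaced by the quadratic drift $\fa_1\fc_1 s^2$ together with the integrated--Wiener term $\fd_1\int_0^s W_{M,3}(u)\,du$, where $\fd_1=(1-\beta_{0,d})\fb_1$ and $W_{M,3}$ is the Wiener process supplied by Lemma~\ref{le-alt-9}. Writing $\{\tau_M>\cP(M,x)\}=\{\max_{1\le s\le\cP}\Gamma(M,s)/g(M,s)\le1\}$, I would first split the maximum at $s^*$: by Lemma~\ref{minf}, after multiplication by $M^{1/2-\gamma}/\cP^{3/2-\gamma}$ the maximum over $1\le s\le s^*$ is driven to $-\infty$ relative to $\fc_1\fa_1\cP^2/(\sigma M^{1/2}(\cP/M)^\gamma)$, which tends to $c$ by \eqref{prop-1}, so with probability tending to one the overall maximiser lies in $(s^*,\cP]$.

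For $s^*<s\le\cP$ I would substitute the decomposition of $\sum_{u=M+1}^{M+s}\hat\eps_u$ recorded just before Lemma~\ref{le-alt-9}. Lemma~\ref{root-1} discards the partial sum $\sum_u\eps_u$, Lemma~\ref{root-2} discards the two terms built from $\hat\bbe_M-\bbe_0$ and $\bar{\bde}-\bar{\bbe}_0$, and Lemma~\ref{root-3} (together with \eqref{ro-3-2}, which simultaneously replaces $\hat\sigma_M$ by $\sigma$ and the boundary factor $(1+s/M)(s/(s+M))^\gamma$ by $(s/M)^\gamma$) reduces the detector to the partial-sum process of the $y_{u-1}$ and gives
\begin{align*}
\frac{M^{1/2-\gamma}}{\cP^{3/2-\gamma}}\left(\max_{s^*<s\le\cP}\frac{\bigl|\sum_{u=M+1}^{M+s}\hat\eps_u\bigr|}{\sigma M^{1/2}(s/M)^\gamma}-\max_{s^*<s\le\cP}\frac{\bigl|\fd_1\int_0^sW_{M,3}(u)\,du+\fa_1\fc_1s^2\bigr|}{\sigma M^{1/2}(s/M)^\gamma}\right)=o_P(1).
\end{align*}
Since $\Gamma(M,s)/g(M,s)\le1$ is equivalent to $|\sum_{u=M+1}^{M+s}\hat\eps_u|\le c\,\hat\sigma_M M^{1/2}(1+s/M)(s/(s+M))^\gamma$, this makes $P\{\tau_M>\cP(M,x)\}$ equal, up to an $o(1)$ error, to $P\{\max_{s^*<s\le\cP}|\fd_1\int_0^sW_{M,3}(u)\,du+\fa_1\fc_1s^2|/(\sigma M^{1/2}(s/M)^\gamma)\le c\}$.

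The next step is the analogue of Lemma~\ref{cn-4}: localising the maximiser at $s=\cP$. The deterministic part $\fa_1\fc_1s^2/(\sigma M^{1/2}(s/M)^\gamma)=(\fa_1\fc_1/\sigma)M^{\gamma-1/2}s^{2-\gamma}$ is strictly increasing, equals $c+o(1)$ at $s=\cP$ by \eqref{prop-1}, and at $s=(1-\delta)\cP$ is smaller by a fixed positive multiple of $\delta$, which after the scaling $M^{1/2-\gamma}/\cP^{3/2-\gamma}\to\infty$ (cf.\ the lower bound obtained in the proof of Lemma~\ref{minf}) tends to $-\infty$; the integrated--Wiener term has order $\cP^{3/2}/(M^{1/2}(\cP/M)^\gamma)=\cP^{3/2-\gamma}M^{\gamma-1/2}$, hence is $O_P(1)$ after the same scaling, and by the a.s.\ continuity of $W_{M,3}$ and Brownian self-similarity its oscillation over $((1-\delta)\cP,\cP]$ vanishes as $\delta\to0$. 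Hence the maximiser converges to $\cP$, so
\begin{align*}
\frac{M^{1/2-\gamma}}{\cP^{3/2-\gamma}}\left(\max_{s^*<s\le\cP}\frac{\bigl|\fd_1\int_0^sW_{M,3}(u)\,du+\fa_1\fc_1s^2\bigr|}{\sigma M^{1/2}(s/M)^\gamma}-\frac{\fa_1\fc_1\cP^2}{\sigma M^{1/2}(\cP/M)^\gamma}\right)=\frac{\fd_1}{\sigma\cP^{3/2}}\int_0^\cP W_{M,3}(u)\,du+o_P(1),
\end{align*}
and since $\cP^{-3/2}\int_0^\cP W_{M,3}(u)\,du$ is $N(0,1/3)$ for every $M$ by Lemma~\ref{w-int} and Brownian scaling, the right-hand side converges in distribution to a centered normal with variance $\fd_1^2/(3\sigma^2)$.

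To finish, I would rewrite $P\{\tau_M>\cP(M,x)\}$ as the probability that the left-hand side of the last display is at most $\tfrac{M^{1/2-\gamma}}{\cP^{3/2-\gamma}}\bigl(c-\tfrac{\fa_1\fc_1\cP^2}{\sigma M^{1/2}(\cP/M)^\gamma}\bigr)$, whose right member tends to $x$ by \eqref{prop-3}; using $1-\beta_{0,d}>0$ and $\fb_1>0$ this yields $\lim_{M\to\infty}P\{\tau_M>\cP(M,x)\}=\Phi\bigl(x\sigma\sqrt3/((1-\beta_{0,d})\fb_1)\bigr)$. I expect the main obstacle to be precisely this localisation of the maximiser at $s=\cP$ — establishing that the quadratic drift dominates the integrated--Wiener fluctuations uniformly on $(s^*,\cP]$ and that the residual oscillation near the endpoint is negligible under the delicate $M^{1/2-\gamma}/\cP^{3/2-\gamma}$ normalisation — which forces one to match the exact orders of magnitude from Lemma~\ref{mprpop} against those coming from Brownian scaling; this step parallels, but is more delicate than, Lemma~\ref{cn-4}.
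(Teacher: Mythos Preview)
Your proposal is correct and follows essentially the same route as the paper's own proof: reduce $P\{\tau_M>\cP(M,x)\}$ via Lemmas~\ref{minf}--\ref{root-3} to the probability that $\max_{s^*<s\le\cP}\bigl|\fd_1\int_0^s W_{M,3}(u)\,du+\fa_1\fc_1 s^2\bigr|/(\sigma M^{1/2}(s/M)^\gamma)\le c$, localise the maximiser at $s=\cP$, invoke Lemma~\ref{w-int} to identify $\cP^{-3/2}\int_0^\cP W_{M,3}(u)\,du$ as $N(0,1/3)$, and finish with \eqref{prop-3}. Your treatment of the localisation step is actually more detailed than the paper's (which handles it in one line via Brownian scaling), but the argument is the same in substance.
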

\begin{proof} By definition,
\begin{align}\label{r-5-1}
\lim_{M\to\infty}P\left\{\tau_M>\cP(M,x)\right\}=\lim_{M\to\infty}P\left\{\max_{1\leq s\leq \cP}\Gamma(M,s)/g(M,s)\leq 1\right\}.
\end{align}
Let $\{W(u), u\geq 0\}$ be a Wiener process. Let $\fd_1=(1-\beta_{0,d})\fb_1$. Putting together Lemmas \ref{mprpop}--\ref{root-3}, and \eqref{r-5-1} we need to show only that
\begin{align}\label{r-5-2}
P\left\{ \max_{s^*+1\leq s\leq \cP}\frac{\displaystyle \left| \fd_1\int_0^s W_{M,3}(u)du+\fa_1\fc_1s^2  \right|}{\sigma M^{1/2}(s/M)^\gamma}\leq c  \right\}
=\Phi\left( \frac{x\sigma\sqrt{3}}{\fb_1(1-\beta_{0,d})}\right).
\end{align}
 Let $\{W(u), u\geq 0\}$ be a Wiener process. We note that $W$ and $W_{M,3}$ are equal in distribution. By the scale transformation of the Wiener process we have
$$
{\cP^{-(3/2-\gamma)}}\max_{s^*+1\leq s\leq \cP}\frac{\displaystyle \left| \fd_1\int_0^s W_{}(u)du\right|}{s^\gamma}\stackrel{{\mathcal D}}{=}
\max_{(s^*+1)/\cP\leq s\leq 1}\frac{\displaystyle \left| \fd_1\int_0^s W_{}(u)du\right|}{s^\gamma}.
$$
Hence
\begin{align*}
\lim_{M\to \infty}P\left\{\max_{s^*+1\leq s\leq \cP}\frac{\displaystyle \left| \fd_1\int_0^s W_{}(u)du+\fa_1\fc_1s^2  \right|}{s^\gamma}
=\frac{\displaystyle  \fd_1\int_0^\cP W_{}(u)du+\fa_1\fc_1\cP^2  }{\cP^\gamma}
\right\}=1.
\end{align*}
We note that
\begin{align*}
P&\left\{\frac{\displaystyle  \fd_1\int_0^\cP W_{}(u)du+\fa_1\fc_1\cP^2  }{\sigma M^{1/2-\gamma}\cP^\gamma}\leq c
\right\}\\
&=
P\left\{\frac{M^{1/2-\gamma}}{\cP^{3/2-\gamma}}\frac{\displaystyle  \fd_1\int_0^\cP W_{}(u)du }{\sigma M^{1/2-\gamma}\cP^\gamma}\leq
\frac{M^{1/2-\gamma}}{\cP^{3/2-\gamma}}\left(c-\frac{\fa_1\fc_1\cP^2  }{\sigma M^{1/2-\gamma}\cP^\gamma}\right)
\right\}.
\end{align*}
It follows from Lemma \ref{w-int} that
$$
\frac{M^{1/2-\gamma}}{\cP^{3/2-\gamma}}\frac{\displaystyle  \fd_1\int_0^\cP W_{}(u)du }{\sigma M^{1/2-\gamma}\cP^\gamma}\;\;\stackrel{{\mathcal D}}{=}\;\;
\frac{\fd_1}{\sigma \sqrt{3}}N,
$$
where $N$ stands for a standard normal random variable. Using now \eqref{prop-3}, the lemma is proven.
\end{proof}

\medskip
\noindent
{\it Proof of Theorem \ref{th-alt-3}}. Observing that
\begin{align*}
\cP(M, x)=\left(\frac{c\sigma}{\fc_a\fa_1}M^{1/2-\gamma}\right)^{1/(2-\gamma)}-
\frac{x}{2-\gamma}\left(\frac{c\sigma}{\fc_a\fa_1}M^{1/2-\gamma}\right)^{-(1-\gamma)/(2-\gamma)}\cP_1M^{1/2-\gamma}(1+o(1)),
\end{align*}
the result follows from Lemma \ref{root-5}.
\qed

\begin{lemma}\label{exle-1} If Assumptions \ref{bern}--\ref{a-adef}, \ref{destat} and \ref{expo-1} hold, then
for every $M$ and $s^*$
$$
\bar{\delta}_d^{-u}y_{M+s^*+u}\to Z_{M+s^*}= y_{M+s^*}+\sum_{z=1}^\infty \bar{\delta}_d^{-z } (\bw_{M+s^*+z}^\T(\bar{\bde}-\bar{\bbe}_0)+\eps_{M+s^*+z})\;\;\mbox{a.s.}\;\;
$$
as $u\to \infty$.
\end{lemma}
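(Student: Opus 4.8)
The plan is to convert the post--break difference equation for $\{y_t\}$ into an explicit forward (geometric) representation and then prove almost sure convergence of the resulting series using nothing more than the moment bound of Assumption \ref{bern}. Under Assumption \ref{expo-1} the autoregressive parameter after the change is exactly $\bar\delta_d$ with $|\bar\delta_d|>1$, so iterating the last--coordinate equation of $H_A$ on $[M+s^*+1,\infty)$ gives, for every $u\ge1$,
\[
\bar\delta_d^{-u}y_{M+s^*+u}=y_{M+s^*}+\sum_{z=1}^{u}\bar\delta_d^{-z}\xi_{M+s^*+z},\qquad \xi_{M+s^*+z}:=\bw_{M+s^*+z}^\T(\bar{\bde}-\bar{\bbe}_0)+\eps_{M+s^*+z}.
\]
Hence the lemma reduces to showing that the partial sums $\sum_{z=1}^{u}\bar\delta_d^{-z}\xi_{M+s^*+z}$ converge almost surely to $Z_{M+s^*}-y_{M+s^*}$ as $u\to\infty$.

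I would prove the stronger statement $\sum_{z=1}^{\infty}|\bar\delta_d|^{-z}|\xi_{M+s^*+z}|<\infty$ a.s. Because $\bw_t$ consists of coordinates of $\bz_t$ and $\eps_t$ is the last coordinate of $\bz_t$, Assumption \ref{bern} gives $E\|\bw_t\|^{\kappa_1}<\infty$ and $E|\eps_t|^{\kappa_1}<\infty$, uniformly in $t$ (under $H_A$ only the regression coefficients change, not the regressor or error processes), so by H\"older $\sup_{z\ge1}E|\xi_{M+s^*+z}|\le B$ for a finite constant $B=B(\bar{\bde},\bar{\bbe}_0)$. By Tonelli, $E\bigl[\sum_{z\ge1}|\bar\delta_d|^{-z}|\xi_{M+s^*+z}|\bigr]=\sum_{z\ge1}|\bar\delta_d|^{-z}E|\xi_{M+s^*+z}|\le B\sum_{z\ge1}|\bar\delta_d|^{-z}<\infty$ since $|\bar\delta_d|>1$, so the nonnegative series is finite with probability one. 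Absolute convergence then makes $\{\sum_{z=1}^{u}\bar\delta_d^{-z}\xi_{M+s^*+z}\}_u$ Cauchy a.s., which yields $\bar\delta_d^{-u}y_{M+s^*+u}\to Z_{M+s^*}$ a.s.\ and, incidentally, confirms that $Z_{M+s^*}$ is finite with probability one (the remark preceding Theorem \ref{expo}).

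There is no substantial obstacle; the two places that need a word of care are (i) passing the $\kappa_1$--th moment bound from $\bz_t$ to $\bw_t$ and $\eps_t$, which is immediate since these are coordinates of $\bz_t$, and (ii) observing that only summability of the expectations $E|\xi_{M+s^*+z}|$ is used — no independence, mixing, or $m$--decomposability of the $\xi$'s is required — so stationarity together with a first moment bound suffices. I would also record the geometric rate: the remainder $Z_{M+s^*}-\bar\delta_d^{-u}y_{M+s^*+u}=\sum_{z>u}\bar\delta_d^{-z}\xi_{M+s^*+z}$ is $O_P(|\bar\delta_d|^{-u})$ uniformly in $M$, since $E\sum_{z>u}|\bar\delta_d|^{-z}|\xi_{M+s^*+z}|\le B|\bar\delta_d|^{-u}\sum_{j\ge1}|\bar\delta_d|^{-j}$; this uniform rate is what the proof of Theorem \ref{expo} will need when reducing $\sum_{u=M+s^*+1}^{M+s}y_{u-1}$ (which the explosive term makes dominate the detector) to the variable $Z_{M+s^*}$.
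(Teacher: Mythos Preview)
Your proof is correct and follows the same approach as the paper: iterate the post--break recursion to obtain $\bar\delta_d^{-u}y_{M+s^*+u}=y_{M+s^*}+\sum_{z=1}^{u}\bar\delta_d^{-z}\xi_{M+s^*+z}$, then use $|\bar\delta_d|>1$ together with the bounded first moments (what the paper calls ``mean stationarity of $\bw_z$ and $\eps_z$'') to conclude that the series converges almost surely. Your Tonelli argument and the observation that only a first--moment bound is needed simply make explicit what the paper compresses into a single sentence.
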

\begin{proof} Since
$$
y_{M+s^*+u}=\bar{\delta}_d^{u}y_{M+s^*}+ +\sum_{z=1}^{u} \bar{\delta}_d^{z-u } (\bw_{M+s^*+z}^\T(\bar{\bde}-\bar{\bbe}_0)+\eps_{M+s^*+z})
$$
the result follows immediately from Assumption \ref{expo-1} and the mean stationarity of $\bw_z$ and $\eps_z$.
\end{proof}

\medskip
\noindent
{\it Proof of Theorem \ref{expo}.} It follows from Lemma \ref{cn-00} that
\begin{equation*}
\max_{1\leq s \leq s^*}\frac{\displaystyle \left|\sum_{u=1}^s\hat{\eps}_u\right|}{\hat{\sigma}_M(1+s/M)M^{1/2}(s/(M+s))^\gamma}=o_P(1).
\end{equation*}
Using \eqref{hepsum} we get for all $s>s^*$ that
\begin{align*}
\sum_{u=M+1}^{M+s}\hat{\eps}_u&=\sum_{u=M+1}^{M+s}\eps_u-\left( \sum_{u=M+1}^{M+s^*}\bx_u  \right)^\T\left(\hat{\bbe}_M-\bbe_0\right)-\left(\sum_{u=M+s^*+1}^{M+s}\bw_u\right)^\T(\bar{\bde}-\bar{\bbe}_0)
\\
&\hspace{2cm}-(\bar{\delta}_d-\beta_{0,d})\sum_{u=M+s^*+1}^{M+s}y_{u-1}.
\end{align*}
Let
$$
\cQ=\cQ(M)=s^*+x+[(1/2-\gamma)\log\bar{\delta}_d]\log M+[\gamma\log\bar{\delta}_d]\log \log M.
$$
Next we note
\begin{align*}
\max_{s^*+1\leq s \leq \cQ}\frac{\displaystyle \left|\sum_{u=M+1}^{M+s}\eps_u\right|}{(1+s/M)M^{1/2}(s/(M+s))^\gamma}=O_P\left(\left( \frac{\cQ}{M}\right)^{1/2-\gamma}\right)=o_P(1).
\end{align*}
Using Lemmas  \ref{hatga}  and \ref{xlaw}    we conclude
\begin{align*}
\max_{s^*+1\leq s \leq \cQ}\frac{\displaystyle\left|\left( \sum_{u=M+1}^{M+s^*}\bx_u  \right)^\T\left(\hat{\bbe}_M-\bbe_0\right)\right|}
{(1+s/M)M^{1/2}(s/(M+s))^\gamma}&=O_P\left( \max_{s^*+1\leq s \leq \cQ}\frac{sM^{-1/2}}{(1+s/M)M^{1/2}(s/(M+s))^\gamma}   \right)\\
&=O_P\left(\left(\frac{\cQ}{M}\right)^{1-\gamma}\right)=o_P(1)
\end{align*}
and
\begin{align*}
\max_{s^*+1\leq s \leq \cQ}\frac{\displaystyle\left|\left( \sum_{u=M+s^*+1}^{M+s^*}\bw_u  \right)^\T\left(\bar{\bde}_M-\bar{\bbe}_0\right)\right|}
{(1+s/M)M^{1/2}(s/(M+s))^\gamma}&=O_P\left( \max_{s^*+1\leq s \leq x\cQ}\frac{s}{(1+s/M)M^{1/2}(s/(M+s))^\gamma}   \right)\\
&=O_P\left(\frac{\cQ^{1-\gamma}}{M^{1/2-\gamma}}\right)=o_P(1).
\end{align*}
Using Lemma \ref{exle-1} we get
$$
\bar{\delta}_d^{-(s-s^*)}\sum_{u=M+s^*+1}^{M+s}y_{u-1}\to \frac{1}{\bar{\delta}_d-1}Z_{M+s^*}
$$
as $s-s^*\to \infty$. Hence
\begin{align*}
\max_{s^*+1\leq s \leq \cQ}\frac{\displaystyle \left|\sum_{u=M+s^*+1}^{M+s}y_{u-1}-\bar{\delta}_d^{s-s^*}\frac{1}{\bar{\delta}_d-1}Z_{M+s^*}\right|}{(1+s/M)M^{1/2}(s/(M+s))^\gamma}=o_P(1)
\end{align*}
and
\begin{align*}
P\left\{\max_{s^*+1\leq s \leq \cQ}\frac{\displaystyle \left|\bar{\delta}_d^{s-s^*}\frac{1}{\bar{\delta}_d-1}Z_{M+s^*}\right|}{(1+s/M)M^{1/2}(s/(M+s))^\gamma}
=\frac{\displaystyle \left|\bar{\delta}_d^{\cQ-s^*}\frac{1}{\bar{\delta}_d-1}Z_{M+s^*}\right|}{(1+\cQ/M)M^{1/2}(\cQ/(M+\cQ))^\gamma}\right\}\to 1,
\end{align*}
as $M\to\infty$. We observe that by Assumption \ref{expo-1}, \eqref{expos} and \eqref{sicos}
$$
\frac{\displaystyle \left|\bar{\delta}_d^{\cQ-s^*}\right|}{\hat{\sigma}_M(1+\cQ/M)M^{1/2}(\cQ/(M+\cQ))^\gamma}\stackrel{P}{\to}\frac{|\bar{\delta}_d|^x}{
\sigma((1/2-\gamma)\log \bar{\delta}_d)^\gamma}.
$$
Since
$$
P\{\tau_M>\cQ\}=P\{\max_{1\leq s\leq \cQ}\Gamma(M,s)/g(M,s)<1\},
$$
the proof of Theorem \ref{expo} is complete.
\qed
\medskip

 \end{document}